\newcommand{\NPC}{\textsf{NP}-complete}
\DeclareMathOperator{\operatorClassNP}{{\sf NP}}
\newcommand{\classNP}{\ensuremath{\operatorClassNP}}
\DeclareMathOperator{\operatorClassFPT}{{\sf FPT}\xspace}
\newcommand{\classFPT}{\ensuremath{\operatorClassFPT}\xspace}
\DeclareMathOperator{\operatorClassW}{{\sf W}}
\newcommand{\classW}[1]{\ensuremath{\operatorClassW[#1]}}
\newcommand{\Oh}{\mathcal{O}}
\newcommand{\bran}[1]{branchable\xspace}
\newtheorem{theorem}{Theorem}
\newtheorem{lemma}{Lemma}
\newtheorem{corollary}{Corollary}
\newtheorem{observation}{Observation}
\newtheorem{proposition}{Proposition}
\theoremstyle{definition}
\newcommand{\cost}{\omega}
\newcommand{\weight}{W}
\newcommand{\pname}{\textsc}
\newcommand{\ProblemFormat}[1]{\pname{#1}}
\newcommand{\ProblemIndex}[1]{\index{problem!\ProblemFormat{#1}}}
\newcommand{\ProblemName}[1]{\ProblemFormat{#1}\ProblemIndex{#1}{}\xspace}
\newcommand{\probSCAtwo}{\ProblemName{Structured $2$-Connectivity Augmentation}}
\newcommand{\probWSCA}{\ProblemName{Structured  $k$-Connectivity Augmentation}}
\newcommand{\probStrucAugm}{\ProblemName{Structured  Connectivity Augmentation}}
\newcommand{\probStrucAugmtwo}{\ProblemName{Structured  $2$-Connectivity Augmentation}}
\newlength{\RoundedBoxWidth}
\newsavebox{\GrayRoundedBox}
\newenvironment{GrayBox}[1]%
   {\setlength{\RoundedBoxWidth}{.93\textwidth}
    \def\boxheading{#1}
    \begin{lrbox}{\GrayRoundedBox}
       \begin{minipage}{\RoundedBoxWidth}}%
   {   \end{minipage}
    \end{lrbox}
    \begin{center}
    \begin{tikzpicture}%
       \node(Text)[draw=black!20,fill=white,rounded corners,%
             inner sep=2ex,text width=\RoundedBoxWidth]%
             {\usebox{\GrayRoundedBox}};
        \coordinate(x) at (current bounding box.north west);
        \node [draw=white,rectangle,inner sep=3pt,anchor=north west,fill=white] 
        at ($(x)+(6pt,.75em)$) {\boxheading};
    \end{tikzpicture}
    \end{center}}     
\newenvironment{defproblemx}[2][]{\noindent\ignorespaces%
                                \FrameSep=6pt%
                                \parindent=0pt%
                \vspace*{-1.5em}
                \ifthenelse{\isempty{#1}}{%
                  \begin{GrayBox}{\textsc{#2}}%
                }{%
                  \begin{GrayBox}{\textsc{#2} parameterized by~{#1}}%
                }
                \begin{tabular*}{\textwidth}{@{\hspace{.1em}} >{\itshape} p{1.8cm} p{0.8\textwidth} @{}}%
            }{
                \end{tabular*}%
                \end{GrayBox}%
                \ignorespacesafterend
            }
\newcommand{\defproblema}[3]{
  \begin{defproblemx}{#1}
    Input:  & #2 \\
    Task: & #3
  \end{defproblemx}
}%
\begin{document}

\title{Structured Connectivity Augmentation\thanks{The two first authors have been supported by the Research Council of Norway via the projects ``CLASSIS'' and ``MULTIVAL". The third author has been supported by project ``DEMOGRAPH" (ANR-16-CE40-0028). \newline Emails of authors: \texttt{\{fedor.fomin, petr.golovach\}@ii.uib.no}, \texttt{sedthilk@thilikos.info} .}
}

\author{
Fedor V. Fomin\thanks{
Department of Informatics, University of Bergen, Norway.} \addtocounter{footnote}{-1}
\and
Petr A. Golovach\footnotemark{}
\and 
Dimitrios M. Thilikos\thanks{AlGCo project, CNRS, LIRMM, France.}~\thanks{Department of Mathematics National and Kapodistrian University of Athens, Greece.}}

\date{}

\maketitle

\begin{abstract}
\noindent We initiate the algorithmic study of the following ``structured augmentation'' question: is it possible to increase the connectivity of a given graph $G$ by superposing it with another given graph $H$? More precisely, 
graph $F$ is the  superposition of $G$ and $H$ with respect to  injective mapping 
$\varphi\colon V(H)\rightarrow V(G)$  
if every edge $uv$ of $F$ is either an edge of $G$, or    $\varphi^{-1}(u)\varphi^{-1}(v)$ is an edge of $H$.
Thus $F$ contains both $G$   and $H$ as subgraphs, and the edge set of $F$ is the union of the  edge sets of $G$ and $\varphi(H)$. 
We consider the following optimization problem. Given graphs $G$, $H$,  and  a weight function $\omega$ assigning   non-negative weights to   pairs of vertices of $V(G)$, the task is  to find
$\varphi$  of minimum weight  $ \omega(\varphi)=\sum_{xy\in E(H)}\cost(\varphi(x)\varphi(y))$ such that the edge connectivity of the
superposition $F$ of $G$ and $H$ with respect to  
$\varphi$ is higher than the edge connectivity of $G$. 
 Our main result is the following ``dichotomy'' complexity classification.  
  We say that a  class of graphs $\mathcal{C}$ has \emph{bounded vertex-cover number}, if there is a constant $t$ depending on  $\mathcal{C}$ only such that the vertex-cover number 
    of every graph from  $\mathcal{C}$ does not exceed $t$.  We show that for every class of graphs  $\mathcal{C}$  with bounded vertex-cover number, the problems of superposing into a connected graph $F$ and to 2-edge connected graph $F$,  are solvable in polynomial time    when  $H\in\mathcal{C}$. On the other hand, for \emph{any} hereditary class $\mathcal{C}$
  with unbounded vertex-cover number, both problems are  \classNP-hard when  $H\in\mathcal{C}$. 
For  the unweighted variants of    structured augmentation problems, i.e. the problems where the task is to identify whether there is a superposition of graphs of required connectivity, we provide 
necessary and sufficient combinatorial conditions on the existence of such superpositions. 
 These conditions imply  polynomial time algorithms solving the unweighted variants of the problems.

\end{abstract}

\noindent{\bf Keywords}: connectivity augmentation, graph superposition, complexity.


\section{Introduction}\label{sec:intro}

In connectivity augmentation problems, the input is a (multi) graph and the objective is to increase edge or vertex connectivity by adding the minimum number (weight) of additional edges, called links. 
This is a fundamental combinatorial problem  with a number of important applications, we refer to the books of Nagamochi  and Ibaraki 
\cite{Nagamochi08}  and Frank \cite{Frank11} for a detailed introduction to the topic. In this paper we initiate the study of a ``structural'' connectivity augmentation problem, where the set of additional edges should satisfy some additional constrains. For example, such constrains can be that all new edges should be visible from one vertex, i.e. the new set of edges forms a star, forms a cycle, or can be controlled from a small set of vertices, i.e. the graph formed by the additional edges has a small vertex cover.

 It is convenient to model such an augmentation problem as a \emph{graph superposition} problem. 
Let $G$ and $H$ be simple graphs (i.e. graphs  without loops and multiple edges), $|V(G)|\geq |V(H)|$,  and let $\varphi\colon V(H)\rightarrow V(G)$ be an injective mapping of the vertices of $H$ to the set of vertices of $V(G)$. We say that a simple  graph $F$ is the \emph{superposition of $G$ and $H$ with respect to $\varphi$} and write $F=G\oplus_\varphi H$ if $V(F)=V(G)$ and two distinct vertices $u,v\in V(F)$ are adjacent in $F$ if and only if $uv\in E(G)$ or $u,v\in \varphi(V(H))$ and $\varphi^{-1}(u)\varphi^{-1}(v)\in E(H)$. See Fig.~\ref{fig:superposition} for an example. Thus graph $F$ contains $G$ and $H$ as subgraphs, and the edge set of $F$ is the union of the  edge sets of $G$ and $\varphi(H)$. 

\begin{figure}[t]
\begin{center}
\includegraphics[scale=.35]{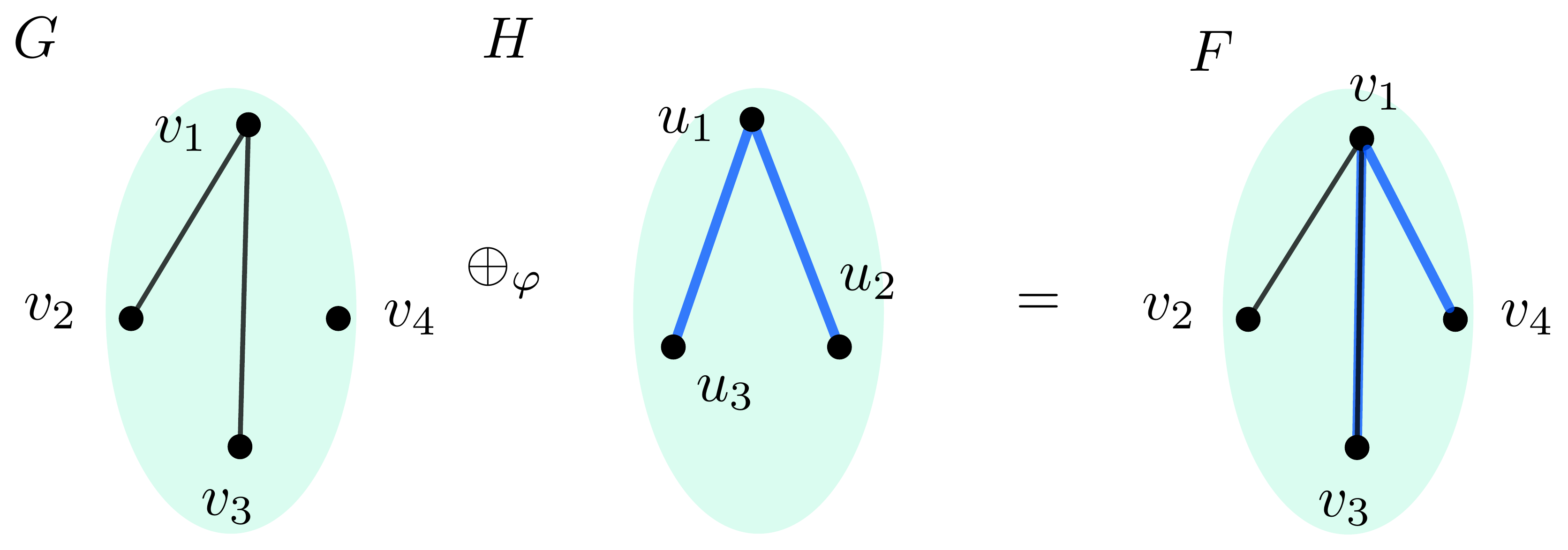}

\end{center}
\caption{For injective mapping  $\varphi\colon V(H)\rightarrow V(G)$ such that $\varphi (u_1)=v_1$,  $\varphi (u_2)=v_4$, and $\varphi (u_3)=v_3$, we have  $F=G\oplus_\varphi H$.}
\label{fig:superposition}
\end{figure} 

We study the algorithmic problem of increasing the edge-connectivity of graph $G$ by superposing it with a graph $H$. We are interested in the weighted variant of the problem, where for every pair of   vertices $v$ and $u$ of $G$, mapping the endpoints of an edge of $H$ to  $u $ and $v$ has a specified weight $\cost(uv)$. We consider the following problem.

\defproblema{\probStrucAugm}%
{Graphs $G$ and $H$, a weight function $\cost\colon \binom{V(G)}{2}\rightarrow \mathbb{N}_0 $,   and a nonnegative integer $\weight$.}%
{Decide whether there is an injective  mapping $\varphi\colon V(H)\rightarrow V(G)$ such that graph $F=G\oplus_{\varphi}H$ is  connected and the \emph{weight} of the mapping $\cost(\varphi)=\sum_{xy\in E(H)}\cost(\varphi(x)\varphi(y))\leq \weight$.}

We also study the problem of obtaining a $2$-edge connected graph $F$ by superposing graphs $G$ and $H$. More precisely, we consider the following problem.

\defproblema{\probStrucAugmtwo}%
{Connected graph $G$ and a graph $H$, a weight function $\cost\colon \binom{V(G)}{2}\rightarrow \mathbb{N}_0 $  and a nonnegative integer $\weight$.}%
{Decide whether there is an injective mapping  $\varphi\colon V(H)\rightarrow V(G)$ of weight at most $W$ such that $F=G\oplus_{\varphi}H$ is  $2$-edge connected.}

 \medskip\noindent\textbf{Our results.} Our main result is the following ``dichotomy'' complexity classification of structured augmentation problems. 
  We say that a  class of graphs $\mathcal{C}$ has \emph{bounded vertex-cover number}, if there is a constant $t$ depending on  $\mathcal{C}$ only such that the vertex-cover number 
    of every graph from  $\mathcal{C}$ does not exceed $t$.  We show that for every class of graphs  $\mathcal{C}$  with bounded vertex-cover number, 
  \probStrucAugm and \probStrucAugmtwo are solvable in polynomial time    when  $H\in\mathcal{C}$. We complement this result by showing that for \emph{any} hereditary class $\mathcal{C}$
  with unbounded vertex-cover number, both problems are  \classNP-complete when  $H\in\mathcal{C}$. 
  Thus for any hereditary class $\mathcal{C}$ both problems with  $H\in\mathcal{C}$ are  \classNP-complete if and only if  $\mathcal{C}$ has  {unbounded vertex-cover number}.

  The running times of our  algorithms solving  \probStrucAugm and \probStrucAugmtwo are of the form   $|V(G)|^{\Oh(f(t))}\cdot \log W$, where $f$ is some function and $t$ is the vertex cover of $H$. Thus our algorithms are not fixed-parameter tractable when $t$ is the parameter. We show that from the perspective of parameterized complexity,  this situation is unavoidable. More precisely, we 
  show that both problems are \classW{1}-hard when parameterized by  $t$.  We refer to the book of Downey and Fellows~\cite{DowneyF13}
for an introduction to parameterized complexity. 

 We also consider the unweighted variants of   \probStrucAugm and \probStrucAugmtwo.  In these cases, the weight  is $\cost(uv)=0$ for every pair of vertices of $G$ and $W=0$. The task is to identify whether there is a superposition of graphs $G$ and $H$ of edge connectivity $1$ or $2$, correspondingly.   
Here we obtain   necessary and sufficient combinatorial conditions of the existence of an injective function $\varphi$ such that $F=G\oplus_\varphi H$ is edge $k$-connected provided that $G$ is edge $(k-1)$-connected, $k=1, 2$. These conditions imply  polynomial time algorithms solving the unweighted variants of the problems. 
    
\medskip
\noindent\textbf{Related work.} The problem of increasing graph connectivity by adding additional edges is the classic and well-studied problem. 
It was first studied by Eswaran and Tarjan~\cite{EswaranT76} and Plesnik \cite{Plesnik76} who showed that increasing the edge connectivity of a given graph to 2 by adding minimum number of additional augmenting edges is polynomial time solvable. Subsequent work in \cite{Watanabe198796,Frank92} showed that this problem is also polynomial time solvable for any given target value of edge connectivity to be achieved. However, if the set of augmenting edges is restricted, that is, there are pairs of vertices in the graph which do not constitute a new edge, or if the augmenting edges have (non-identical) weights on them, then the problem of computing the minimum size (or weight) augmenting set is {\NPC}~\cite{EswaranT76}. 
 Augmentation problems with constraints like simplicity-preserving augmentations, augmentations with partition constraints, or planarity requirements can be found in the literature, see the book of 
 Nagamochi  and Ibaraki 
\cite{Nagamochi08} for further references. 

Strongly relevant to structural augmentation is the \textsc{Minimum Star Augmentation} problem, see e.g. 
\cite[Section~3.3.3]{Nagamochi08} and  \cite{TiborZ03}. Here one wants to increase the edge-connectivity of a given graph by adding a new vertices and connecting it with a small number of edges to the remaining vertices of the graph. In our setting this corresponds to the case of graph $G$ having an isolate vertex, and graph $H$ being a star (a tree with vertex-cover number $1$).  Tibor and Szigeti  \cite{TiborZ03} studied a generalization of this problem where one wants to make a graph edge $r$-connected by attaching $p$ stars of specified degrees. In particular, they provided  combinatorial conditions which are necessary and sufficient for such an augmentation. Again, this problem can be seen as a special case of structural augmentation,  where graph $G$ has $p$ isolated vertices and graph $H$ is the union of stars of specified degrees.

\section{Preliminaries}\label{sec:defs}
 We consider only finite undirected graphs. For a graph $G$, $\binom{V(G)}{2}$ denotes the set of unordered pairs of distinct vertices of $G$. For uniformity, we denote the elements of  $\binom{V(G)}{2}$ in the same way as edges, i.e., we write $uv\in \binom{V(G)}{2}$.
A subgraph $H$ of $G$ is \emph{spanning} if $V(H)=V(G)$.
For a graph $G$ and a subset $U\subseteq V(G)$ of vertices, we write $G[U]$ to denote the subgraph of $G$ induced by $U$. 
We write $G-U$ to denote the graph $G[V(G)\setminus U]$.
Let $S\subseteq E(G)$ for a graph $G$. By $G-S$ we denote by $G-S$ the graph obtained by the deletion of the edges of $S$.
We write $G-e$ instead of $G-\{e\}$ for an edge $e$.
For a vertex $v$, we denote by $N_G(v)$ the \emph{(open) neighborhood} of $v$, i.e., the set of vertices that are adjacent to $v$ in $G$.
Two nonadjacent vertices $u$ and $v$ are \emph{(false) twins} if $N_G(u)=N_G(v)$.
A set of edges with pairwise distinct end-vertices is called a \emph{matching}. A matching $M$ is \emph{induced} if the end-vertices of $M$ are pairwise nonadjacent. A vertex $v$ is \emph{saturated} in a matching $M$ if $v$ is incident to an edge of $M$.
We say that the disjoint union of copies of $K_2$ is a \emph{matching graph}. 
A graph class $\mathcal{C}$ is said to be \emph{hereditary} if for every $G\in\mathcal{C}$ and every induced subgraph $H$ of $G$, $H\in \mathcal{C}$. 
A set of vertices $X\subseteq V(G)$ is a \emph{vertex cover} of a graph $G$ if every edge of $G$ has at least one of its end-vertices in $X$. The minimum size of a vertex cover is called the \emph{vertex-cover number}  of $G$ and is denoted by  $\beta(G)$. 

Let $k$ be a positive integer. A graph $G$ is \emph{(edge) $k$-connected} if for every $S\subseteq E(G)$ with $|S|\leq k-1$, $G-S$ is connected. Since we consider only edge connectivity, whenever we say that a graph $G$ is $k$-connected, we mean that $G$ is edge $k$-connected.
We assume that every graph is $0$-connected. A set of edges $S\subseteq E(G)$ of a connected graph $G$ is an \emph{edge separator} if $G-S$ is disconnected.
An edge $e$ of a connected graph $G$ is a \emph{bridge} if $\{e\}$ is a separator. Clearly, a connected graph is 2-connected if and only if it has no bridge. Let $B$ be the set of bridges of a connected graph $G$. We call a component of $G-B$ a \emph{biconnected component} of $G$. In other words, a biconnected component is an inclusion-wise maximal induced 2-connected subgraph of $G$. 
We say that a biconnected component $L$ of a graph $G$ is a \emph{pendant biconnected component} (or simply a \emph{pendant}) if a unique bridge of $G$ is incident to $V(L)$. A biconnected component is \emph{trivial} if it has a single vertex.
For a graph $G$, we denote by $c(G)$ the number a components of $G$, and for a connected graph $G$, $p(G)$ is the number of pendants. We also denote by $i(G)$ the number of isolated vertices of $G$.

Let $S$ be an inclusion-wise minimal edge separator of a connected graph $G$. Then $G-S$ has exactly two components $C_1$ and $C_2$.
Let $G$ be a spanning subgraph of $F$. We say that an edge $e\in E(F)\setminus E(G)$ \emph{covers} a minimal separator  $S$ of $G$ if $e$ has its end-vertices  in $C_1$ and $C_2$.
The following observation about separators is useful. 

\begin{observation}\label{obs:cover-bridge}
Let $k\geq 2$ be an integer and let a $(k-1)$-connected  graph $G$ be a spanning subgraph of $F$. Then $F$ is $k$-connected if and only if for each edge separator $S$ of $G$ with  $|S|=k-1$, $F$ has an edge that covers it. 
\end{observation}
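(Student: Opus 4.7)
The plan is to prove both directions of the equivalence, using the fact that a minimum edge separator is always an inclusion-wise minimal one (so the notion of ``covers'' applies).

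For the forward direction, I would assume $F$ is $k$-connected and let $S$ be any edge separator of $G$ with $|S| = k - 1$. Since $G$ is $(k-1)$-connected, $S$ is a minimum, hence inclusion-wise minimal, separator of $G$, and so $G - S$ consists of exactly two components $C_1, C_2$. If no edge of $F$ covered $S$, then every edge of $F$ joining $C_1$ and $C_2$ would be an edge of $G$, and therefore lie in $S$; but then $S$ would be an edge separator of $F$ of size $k-1$, contradicting the $k$-connectivity of $F$.

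For the reverse direction I argue by contrapositive: assume $F$ is not $k$-connected and produce a $(k-1)$-edge separator of $G$ that no edge of $F$ covers. Pick an inclusion-wise minimal edge separator $S'$ of $F$ with $|S'| \leq k-1$, and let $D_1, D_2$ be the two components of $F - S'$. Set $S'' = S' \cap E(G)$. Every edge of $G$ with one endpoint in $D_1$ and the other in $D_2$ lies in $S'$ and hence in $S''$, so $S''$ separates $D_1$ from $D_2$ in $G$. Therefore $S''$ is an edge separator of $G$, and by the $(k-1)$-connectivity of $G$ we get $|S''| \geq k-1$, forcing $|S''| = |S'| = k-1$ and $S' = S'' \subseteq E(G)$. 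Thus $S'$ is a minimum (hence minimal) edge separator of $G$, and the two components of $G - S'$ must coincide with $D_1$ and $D_2$ (any refinement of a two-part partition into two parts is the original partition). An edge $e \in E(F) \setminus E(G)$ covering $S'$ would then have its endpoints in different components of $F - S'$, yet lie in $F - S'$ since $S' \subseteq E(G)$, a contradiction. Hence no edge of $F$ covers $S'$, establishing the contrapositive.

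The main piece of care, and the only mildly subtle step, is identifying the components of $G - S'$ with the components $D_1, D_2$ of $F - S'$ in the reverse direction; this is where the refinement argument and the fact $|S'| = k-1$ matching the minimum-separator size of $G$ are used. The rest is straightforward bookkeeping about which edges of $F$ lie in $E(G)$ versus $E(F) \setminus E(G)$.
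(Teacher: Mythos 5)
Your proof is correct and complete. The paper states this as a folklore observation without any proof, so there is no argument to compare against; your write-up supplies exactly the reasoning the authors leave implicit. The two points you flag as requiring care are indeed the right ones: that a $(k-1)$-edge separator of a $(k-1)$-connected graph is automatically a minimum, hence inclusion-wise minimal, separator (so the paper's notion of ``covers,'' defined only for minimal separators and only for edges of $E(F)\setminus E(G)$, applies), and that in the contrapositive direction the forced equality $S'=S''\subseteq E(G)$ lets you identify the two components of $G-S'$ with those of $F-S'$.
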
 

We also need some additional terminology and folklore observations for the augmentation of a connected graph to a 2-connected graph.
Let $G$ be a connected graph and let $x$ and $y$ be distinct vertices of $G$. We say that a bridge $uv$ of $G$ \emph{belongs} to an $(x,y)$-path $P$ if $uv\in E(P)$. Similarly, a biconnected component $Q$ is \emph{crossed} by $P$ if $V(Q)\cap V(P)\neq \emptyset$. The following observation show that the choice of an $(x,y)$-path is irrelevant if the biconnected components containing the  end-vertices are given.

\begin{observation}\label{obs:lie}
Let distinct $\{x_1,y_1\}$ and $\{x_1,y_2\}$ be pairs of distinct vertices of a connected graph $G$ such that $x_1,x_2$ are in the same biconnected component of $G$ and, similarly, $y_1,y_2$ are in the same biconnected component of $G$. Let also $P_1$ and $P_2$ be $(x_1,y_1)$ and $(x_2,y_2)$-paths respectively. Then the following holds:
\begin{itemize} 
\item a bridge $uv$ of $G$ belongs to $P_1$ if and only if $uv$ belongs to $P_2$,
\item a biconnected component $Q$  is crossed by $P_1$ if and only if $Q$ is crossed by $P_2$.
\end{itemize}
\end{observation}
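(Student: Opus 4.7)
The plan is to reduce both statements to the uniqueness of paths in the block-bridge tree of $G$. Let $T$ be the auxiliary graph whose nodes are the biconnected components of $G$ and whose edges correspond to the bridges of $G$: a bridge $uv$ of $G$ yields an edge of $T$ joining the biconnected component containing $u$ to the one containing $v$ (these are distinct since $uv$ is a bridge, so both endpoints lie in trivial or larger components on each side).

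First I would check that $T$ is actually a tree. Connectedness of $T$ follows from connectedness of $G$, since any two vertices are joined by a walk in $G$ that alternates between biconnected components and bridges. Acyclicity is the key property: a cycle in $T$ would produce, by expanding each biconnected component into a path inside it, a cycle in $G$ containing at least one bridge, contradicting the definition of a bridge.

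The core step is the following correspondence: for any path $P$ in $G$ from a vertex $x$ to a vertex $y$, the set of bridges appearing on $P$ is exactly the edge set of the unique $T$-path from the biconnected component of $x$ to that of $y$, and the set of biconnected components crossed by $P$ is exactly the vertex set of this $T$-path. Traversing $P$ induces a walk $W$ in $T$ by recording the sequence of biconnected components visited (each bridge-step in $G$ corresponds to an edge-step in $T$, each step inside a biconnected component stays at the same node of $T$). Because removing any bridge disconnects $G$, a simple path $P$ uses each bridge at most once, so $W$ traverses each edge of $T$ at most once; hence $W$ is a simple path in $T$, and by the tree property it is the \emph{unique} simple $T$-path between its endpoints.

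The observation then follows at once: by hypothesis the biconnected component containing $x_1$ equals the one containing $x_2$, and likewise for $y_1,y_2$, so the $T$-paths associated to $P_1$ and $P_2$ share the same endpoints and therefore coincide. The bridges on $P_1$ and $P_2$ are both equal to the edge set of this common $T$-path, and the biconnected components crossed by $P_1$ and $P_2$ are both equal to its vertex set, yielding both bulleted statements. The only technical point worth being careful about is the walk-to-path reduction, i.e.\ justifying that a bridge cannot be traversed twice by a simple path in $G$; this is immediate from the definition of a bridge since the two sides of a removed bridge lie in different components of $G-\{uv\}$.
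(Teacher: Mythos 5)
Your proof is correct. The paper gives no proof of this observation (it is stated as folklore), but your block--bridge tree argument is precisely the standard justification and matches the paper's implicit viewpoint --- indeed the same tree $T$ obtained by contracting each biconnected component, with edges the bridges, is constructed explicitly later in the proof of Lemma~\ref{lem:components}; the one point needing care, which you address, is that a simple path uses each bridge at most once, so the induced walk in $T$ is the unique tree path between the two blocks.
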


\begin{observation}\label{obs:edge-add}
Let $u$ and $v$ be distinct nonadjacent vertices of a connected graph $G$ and let $F$ be a graph obtained from $G$ by the addition of the edge $uv$.  Then $uv$ covers all bridges that belongs to a $(u,v)$-path $P$ in $G$, and for the biconnected components $Q_1,\ldots,Q_s$ that are crossed by $P$, $F[V(Q_1)\cup\ldots\cup V(Q_s)]$ is a biconnected component of $F$.
\end{observation}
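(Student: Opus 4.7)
The plan is to handle the two claims in turn. For the bridge-coverage claim, let $e=xy$ be a bridge of $G$ belonging to $P$. Then $G-e$ has exactly two components $C_1\ni x$ and $C_2\ni y$. Since $P$ traverses $e$, the sub-path of $P$ from $u$ to $x$ lies in $C_1$ and the sub-path from $y$ to $v$ lies in $C_2$; thus $u\in C_1$ and $v\in C_2$, and the edge $uv$ covers the separator $\{e\}$.

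For the second claim, set $U=V(Q_1)\cup\ldots\cup V(Q_s)$ and let $B,B'$ denote the bridge sets of $G$ and $F$. The key step is to verify that $B'=B\setminus\{e\in B:e\text{ belongs to }P\}$. The inclusion $\subseteq$ is immediate from the first claim together with the fact that $uv$ is not a bridge of $F$, since $F-uv=G$ is connected. For the reverse inclusion, take a bridge $e\in B$ not on $P$: if $u$ and $v$ lay in different components of $G-e$, every $(u,v)$-path in $G$, including $P$, would have to use $e$, contradicting the choice of $e$; hence $u,v$ are in the same component of $G-e$, and $F-e=(G-e)\cup\{uv\}$ is still disconnected, so $e\in B'$.

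By definition, the biconnected component of $F$ containing $u$ is the component of $F-B'$ containing $u$, and the surviving edges of $F-B'$ are precisely the non-bridge edges of $G$, the bridges belonging to $P$, and the new edge $uv$. Starting from $u\in Q_1$ and using these edges I can traverse each $Q_i$ internally and hop along the bridges of $P$, so all of $U$ is reachable. Conversely, any edge of $G$ with exactly one endpoint in $U$ must be a bridge of $G$---non-bridge edges stay inside one biconnected component, and biconnected components are vertex-disjoint---and cannot belong to $P$, because the bridges belonging to $P$ have both endpoints in $U$ and the acyclicity of the block-bridge tree of $G$ forbids any additional bridge between two of the $Q_i$'s. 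Such edges are therefore deleted in $F-B'$, and since $uv$ has both endpoints in $U$ it does not extend the component. Hence the component equals $F[V(Q_1)\cup\ldots\cup V(Q_s)]$, proving the observation.

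The main subtlety I anticipate is the bridge characterization of $F$ above; Observation~\ref{obs:lie} is invoked implicitly so that ``bridges belonging to $P$'' and ``biconnected components crossed by $P$'' are well-defined independently of $P$, but beyond that step the argument is routine bookkeeping on the block-bridge decomposition of $G$.
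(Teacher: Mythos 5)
Your proof is correct. Note that the paper states Observation~\ref{obs:edge-add} as a folklore fact and gives no proof at all, so there is nothing to compare against; your write-up is a complete and sound verification along the standard lines. The first claim is a clean application of the definition of covering (up to the harmless assumption that $x$ precedes $y$ on $P$). The heart of your argument, the identity $B'=B\setminus\{e\in B: e\in E(P)\}$ for the bridge sets of $F$ and $G$, is established correctly in both directions, and the reachability/closure analysis of the component of $F-B'$ containing $u$ is right: every surviving edge with an endpoint in $U$ has both endpoints in $U$, and the $Q_i$'s are linked through the bridges of $P$. One small remark: the appeal to acyclicity of the block--bridge tree is not needed to show that the component does not extend beyond $U$ (for that, it suffices that bridges of $P$ and non-bridges never leave $U$); where it \emph{is} needed is to conclude that the component of $F-B'$ coincides with the \emph{induced} subgraph $F[U]$, i.e., that no bridge of $F$ has both endpoints inside $U$. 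You invoke it in the former place but it serves the latter purpose; the logic still goes through, but relocating that sentence would make the argument cleaner.
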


In the remaining part of the paper, we will be always assuming that in the instance of the structured augmentation problem, we  have 
\begin{itemize}
\item[($i$)]   $|V(H)|\leq |V(G)|$;
\item[($ii$)]   Graph $H$ has no isolated vertices. 
\end{itemize}
Indeed,   if $|V(H)|>|V(G)|$, then there is no superposition of $G$ and $H$, and thus such an instance is a no-instance. 
For ($ii$),   it is sufficient to observe that mapping of isolated vertices of $H$ to vertices of $G$ does not influence the connectivity of the superposition. 
Another technical detail  should be mentioned here. In  Theorems~\ref{thm:wsca-one} and ~\ref{thm:wsca-two}, we evaluate the running times of algorithms as a function of 
$|V(G)|$ and the vertex cover number of $H$. In order to do this, we should be able to recognize within this time the (trivial) no-instances, where   
 $|V(H)|>|V(G)|$.
 We can verify this condition in time $|V(G)|^{\Oh(1)}$ just by refuting the instances of size more than $|V(G)|^{\Oh(1)}$ after reading the first  $|V(G)|^{\Oh(1)}$ bits.

\section{Augmenting by graphs with  small vertex cover.}\label{sec:weighted} 
In this section we consider the situation when graph $H$ is from a graph class $\mathcal{C}$ with bounded vertex-cover  number. 
 In Subsection~\ref{sec:vc} we show that  in this case 
 \probStrucAugm
and
\probStrucAugmtwo
 are solvable in polynomial time. 
  In Subsection~\ref{sec:hard} we show that this condition is tight by proving that 
  for any hereditary graph class  $\mathcal{C}$ with  unbounded vertex-cover  number, both problems are NP-hard.

\subsection{Algorithms}\label{sec:vc}
We start with a solution  for  \probStrucAugm, which is simpler than the
solution for
\probStrucAugmtwo.

\medskip\noindent\textbf{\probStrucAugm.}
 We need the following lemma.

\begin{lemma}\label{lem:conn-vc}
Let $G$ and $H$ be graphs and let $\varphi\colon V(H)\rightarrow V(G)$ be an injection such that  $F=G\oplus_\varphi H$ is connected. Let also  $X$ be a vertex cover of $H$ of size  $t$.
Then there is a set $Y\subseteq V(H)\setminus X$ of size at most $2(t-1)$ such that for graph $H'=H[X\cup Y]$ and mapping $\psi=\varphi|_{X\cup Y}$, the vertices of $\psi(X\cup Y)$ are in the same connected component of $F'=G\oplus_{\psi}H'$.
\end{lemma}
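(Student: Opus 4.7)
The plan is to construct a small connected subgraph of $F$ spanning $\varphi(X)$ and to read $Y$ off from it. First I would form an auxiliary graph $H^*$ on vertex set $\varphi(X)$, where $\varphi(u)\varphi(v)$ is an edge whenever $F$ contains a $\varphi(u)$-$\varphi(v)$ path whose internal vertices avoid $\varphi(X)$. Since $F$ is connected, $H^*$ is connected, so I fix a spanning tree $T^*$ of $H^*$ with $t-1$ edges, and for every edge $e=\varphi(u_e)\varphi(v_e)$ of $T^*$ I fix a witnessing path $P_e=\varphi(u_e), w_1^e, \ldots, w_{k_e}^e, \varphi(v_e)$ in $F$.

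The key structural observation, which is where the vertex-cover hypothesis kicks in, is the following: for each $P_e$, an edge $w_i^e w_{i+1}^e$ joining two strictly internal vertices must be an edge of $G$ and cannot come from $\varphi(H)$. If such an edge were in $\varphi(H)$, both its endpoints would lie in $\varphi(V(H))$, and since $X$ is a vertex cover of $H$, at least one of the two preimages would have to belong to $X$ -- forcing the corresponding internal vertex into $\varphi(X)$, contradicting the choice of $P_e$. Hence on each $P_e$ the only edges possibly coming from $\varphi(H)$ are the (at most two) edges incident to the endpoints $\varphi(u_e)$ and $\varphi(v_e)$.

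Guided by this, I define $Y$ to contain, for each $P_e$, the vertex $\varphi^{-1}(w_1^e)$ whenever $w_1^e\in\varphi(V(H)\setminus X)$ and is incident on $P_e$ to an edge of $\varphi(H)$, and similarly $\varphi^{-1}(w_{k_e}^e)$; this contributes at most two vertices per path, so $|Y|\leq 2(t-1)$. For any other internal vertex of $P_e$, both of its path-edges lie in $G$ and are therefore in $F'$ automatically, so it participates in the path without requiring membership in $Y$.

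What remains---and what I expect to be the most delicate bookkeeping step---is to verify that, with this choice of $Y$, the union $\bigcup_e P_e$ lies inside $F'=G\oplus_\psi H'$ and thus witnesses the required connectivity. The $G$-edges of each $P_e$ are always present in $F'$. Every edge of $P_e$ coming from $\varphi(H)$ has one endpoint in $\varphi(X)$ and, by construction of $Y$, the other in $\varphi(Y)$, so the corresponding edge of $H$ has both endpoints in $X\cup Y$ and therefore survives in $H'=H[X\cup Y]$ and in $F'$. Consequently $\bigcup_e P_e$ is a connected subgraph of $F'$ containing $\psi(X\cup Y)=\varphi(X)\cup\varphi(Y)$, which is exactly what the lemma requires.
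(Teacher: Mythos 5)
Your proof is correct. The key observation is the same one the paper relies on: a path in $F$ between two vertices of $\varphi(X)$ whose internal vertices avoid $\varphi(X)$ can contain at most two edges not coming from $G$, and these must be the edges incident to its endpoints, precisely because $X$ is a vertex cover of $H$. Where you differ is in how this observation is deployed. The paper runs an extremal argument: it takes an inclusion-wise maximal $X'\subseteq X$ admitting a suitable $Y'$ of size at most $2(|X'|-1)$, and derives a contradiction by choosing a single path from some $\varphi(x)$ with $x\in X\setminus X'$ to $\varphi(X')$ that minimizes the number of $\varphi(X)$-vertices it meets, showing that this path contributes at most two new vertices to $Y'$. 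You instead perform the whole construction in one shot: you contract the path structure into an auxiliary connected graph $H^*$ on $\varphi(X)$, take a spanning tree with $t-1$ edges, and harvest at most two $Y$-vertices per witnessing path. The two arguments are essentially isomorphic in content --- both charge two vertices of $Y$ to each of $t-1$ ``connections'' between vertices of $\varphi(X)$, yielding the same bound $2(t-1)$ --- but your version is explicitly constructive and avoids the maximality-and-contradiction scaffolding, which makes the final verification (that $\bigcup_e P_e$ survives in $F'$ and contains all of $\psi(X\cup Y)$) somewhat more transparent. The degenerate cases (a witnessing path that is a single edge with both endpoints in $\varphi(X)$, or with exactly one internal vertex) are handled correctly by your bookkeeping.
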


\begin{proof}
If $|X|=1$, then the claim of the lemma is trivial. Assume that $|X|\geq 2$.

Let $X'\subseteq X$ be an inclusion-wise maximal set such that there  is a set $Y'\subseteq V(H)\setminus X'$ of size at most $2(|X'|-1)$ such that for $H'=H[X'\cup Y']$ and $\psi'=\varphi|_{X'\cup Y'}$, the vertices of $\psi'(X'\cup Y')$ are in the same component of $F'=G\oplus_{\psi'}H'$. Notice that every one-element subset of $X$ satisfies this property and  therefore such a set $X'$ exists. If $X'=X$, then the claim of the lemma holds. Suppose that $X'\subset X$. Let  $s=|X'|<t$.
We show that in this case we can extend $X'$ which will contradict its maximality. 

More precisely, 
we claim that there is $x\in X\setminus X'$ such that for $X''=X'\cup\{x\}$, there  is a set $Y'\subseteq Y''\subseteq V(H)\setminus X''$ of size at most $2s$ such that for $H''=H[X''\cup Y'']$ and $\psi''=\varphi|_{X''\cup Y''}$, the vertices of $\psi''(X''\cup Y'')$ are in the same component of $F''=G\oplus_{\psi''}H''$.

If there is $x\in X\setminus X'$ such that $x$ is in the same component of $F'$ with the vertices of $\psi'(X'\cup Y')$, then the claim holds for $X''=X'\cup\{x\}$ and $Y''=Y'$. Suppose that it is not so, that is, for every  $x\in X\setminus X'$,  $x$ does not belong to  the component of $F'$ with the vertices of $\psi'(X'\cup Y')$.
Recall that $F$ is connected. We select $x\in X\setminus X'$ and a path $P$ joining $\varphi(x)$ and a vertex of $\varphi(X')$ in $F$ in such a way that $P$ contains the minimum number of vertices of $\varphi(X)$. 

Let $P$ be a $(\varphi(x),\varphi(x'))$-path for $x'\in X'$. Notice that $P$ has no internal vertex in $\varphi(X)$. Otherwise (if there is such a vertex $v$) then either $\varphi^{-1}(v)\in X'$,  or $\varphi^{-1}(v)\in X\setminus X'$. In the first case the $(\varphi(x),v)$-subpath of $P$ connects $x$ with a vertex of $\varphi(X')$, and in the second case, 
the $(v,\varphi(x'))$-subpath of $P$ connects a vertex of $\varphi(X\setminus X')$ with $x'\in \varphi(X')$. In both cases this  contradicts the choice of $P$.
We obtain that $V(P)\cap \varphi(X)=\{x,x'\}$. 

This implies that $P$ contains at most 2 edges that are not edges of $G$. Moreover, because $X$ is a vertex cover of $H$, 
 every such edge is incident either with $\varphi(x)$, or with $\varphi(x')$. 
Denote by $S$ the set of endpoints of these edges distinct from $\varphi(x)$ and $\varphi(x')$. 
We have that $S\subseteq\varphi(V(H)\setminus X)$ and $|S|\leq 2$. Let $X''=X\cup\{x\}$ and $Y''=Y'\cup S$. We obtain that $Y'\subseteq Y''\subseteq V(H)\setminus X''$ and $|Y''|\leq|Y'|+2\leq 2s$. Let $\psi''=\varphi|_{X\cup Y''}$. Observe that $P$ is a path in $F''=G\oplus_{\psi''}H''$. It implies that the vertices of $\psi''(X''\cup Y'')$ are in the same component of $F''$.

We obtain a contradiction that proves that $X'=X$ and the lemma holds.
\end{proof}

 Let us remind, that, given a positive integer $t$, a graph class
$\mathcal{C}$  has    vertex-cover  number at most $t$ if every graph $H\in\mathcal{C}$ has a vertex cover of size at most $t$. 
We are ready to prove the main theorem about \probStrucAugm.

\begin{theorem}\label{thm:wsca-one}
Let $t$ be a positive integer and $\mathcal{C}$ be a graph class of vertex-cover  number at most $t$. Then for any $H\in \mathcal{C}$, 
\probStrucAugm is  solvable in time $|V(G)|^{\Oh(t)}\cdot \log W$.
\end{theorem}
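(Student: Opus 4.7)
The plan is to combine Lemma~\ref{lem:conn-vc} with a minimum-cost flow computation: first guess the placement of a bounded-size ``core'' of $H$, then extend it optimally by a bipartite assignment with coverage constraints.

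First I would compute a vertex cover $X\subseteq V(H)$ with $|X|\le t$; this is possible, for instance by brute-force enumeration of all subsets, in time $|V(H)|^{\Oh(t)}$, which fits the budget. Note that $V(H)\setminus X$ is independent in $H$, and since $H$ has no isolated vertex (assumption $(ii)$), every vertex of $V(H)\setminus X$ has all its neighbours in $X$. By Lemma~\ref{lem:conn-vc}, whenever a valid injection $\varphi$ exists there is a set $Y\subseteq V(H)\setminus X$ with $|Y|\le 2(t-1)$ such that $\psi(X\cup Y)$ lies in a single component of $G\oplus_\psi H'$, where $\psi=\varphi|_{X\cup Y}$ and $H'=H[X\cup Y]$. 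The algorithm enumerates all pairs $(Y,\psi)$ with $Y$ of size at most $2(t-1)$ and $\psi$ an injection $X\cup Y\to V(G)$; since $|V(H)|\le |V(G)|$, there are at most $|V(G)|^{\Oh(t)}$ such pairs. For each, I would verify the single-component condition in $G\oplus_\psi H'$ and record $\cost(\psi)=\sum_{uv\in E(H')}\cost(\psi(u)\psi(v))$.

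The second step is to extend $\psi$ optimally. Let $R=V(H)\setminus (X\cup Y)$; the crucial structural fact is that $R$ is independent in $H$ and every $r\in R$ satisfies $N_H(r)\subseteq X$. Hence placing $r$ at $u\in V(G)\setminus\psi(X\cup Y)$ contributes the precomputable quantity
\[
c(r,u)\;=\;\sum_{x\in N_H(r)}\cost(u\,\psi(x)),
\]
and the total extension cost decouples into $\sum_{r\in R}c(r,\varphi(r))$. Moreover, every $\varphi(r)$ is adjacent in $F=G\oplus_\varphi H$ to $\psi(X)$, so the whole image $\varphi(V(H))$ lies in one component of $F$; therefore $F$ is connected iff $\varphi(V(H))$ intersects every component of $G$. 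Letting $\mathcal{U}$ be the set of components of $G$ disjoint from $\psi(X\cup Y)$, the constraint reduces to requiring $\varphi(R)$ to hit every $C\in\mathcal{U}$.

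I would solve the extension as a minimum-cost integral $s$--$t$ flow of value $|R|$: the source sends one unit to each $r\in R$ (capacity $1$, cost $0$); each $r$ has a cost-$c(r,u)$ arc to every $u\in V(G)\setminus\psi(X\cup Y)$; each such $u$ forwards with unit capacity and zero cost to a node $v_C$ representing its component $C$; and $v_C$ is connected to the sink with capacity $|C\setminus\psi(X\cup Y)|$ and lower bound $1$ when $C\in\mathcal{U}$, and lower bound $0$ otherwise. Minimum-cost flow with lower bounds is computable in time polynomial in $|V(G)|$ and $\log W$. The answer is \yes~iff the minimum, taken over all guesses $(Y,\psi)$, of $\cost(\psi)$ plus the resulting flow value is at most $W$. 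Correctness follows from Lemma~\ref{lem:conn-vc} (a good guess exists for any feasible $\varphi$) and from the decoupling above (the flow model is exact); the main subtlety I would verify is that lower-bound feasibility of the flow corresponds precisely to every uncovered component receiving an image, and that integrality of min-cost flow turns the optimum into a bona fide injection $\varphi$. The total running time is $|V(G)|^{\Oh(t)}\cdot \log W$, as claimed.
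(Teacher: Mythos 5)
Your proposal is correct and follows essentially the same route as the paper: compute a vertex cover $X$, invoke Lemma~\ref{lem:conn-vc} to enumerate the bounded-size core $(Y,\psi)$, observe that the remaining independent vertices have all neighbours in $X$ so their placement costs decouple and connectivity reduces to hitting every component of $G$ missed by $\psi(X\cup Y)$, and then solve the resulting assignment problem optimally. The only difference is the realization of that last step --- you use a minimum-cost flow with lower bounds on the component arcs, whereas the paper encodes the coverage constraints via dummy vertices and a minimum-weight matching saturating $V(G)\setminus\psi(X\cup Y)$ found by the Hungarian algorithm; both are polynomial and yield the claimed $|V(G)|^{\Oh(t)}\cdot\log W$ bound.
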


\begin{proof} 
Let $G$ and $H\in \mathcal{C}$ be graphs and let $\cost\colon \binom{V(G)}{2}\rightarrow \mathbb{N}_0 $ be a weight function. We show that we can find in time $|V(G)|^{\Oh(t)}\cdot \log W$ an injective  mapping $\varphi\colon V(H)\rightarrow V(G)$ such that $F=G\oplus_{\varphi}H$ is connected and $\cost(\varphi)=\sum_{xy\in E(H)}\cost(\varphi(x)\varphi(y))$ is minimum if $\varphi$ exists.

  Let us remind that without loss of generality,  
 we can assume that $|V(H)|\leq |V(G)|$ and $H$ has no isolated vertices. 

We start from finding a vertex cover $X$ of  size at most $t$ in $H$. Since we aim for an algorithm with running  time $|V(G)|^{\Oh(t)}\cdot \log W$,  vertex cover $X$ can be found by brute-force checking of all subsets of $V(H)$ of size at most $t$. If we fail to find $X$ of size at most $t$, it means that $H\not\in \mathcal{C}$, in this case 
we return the answer NO and stop. Assume that $X$ exists. 

Suppose that there is an injective  mapping $\varphi\colon V(H)\rightarrow V(G)$ such that $F=G\oplus_{\varphi}H$ is connected and assume that for $\varphi$,  $\cost(\varphi)$ is minimum. By Lemma~\ref{lem:conn-vc}, there is a set $Y\subseteq V(H)\setminus X$ of size at most $2(t-1)$ such that for $H'=H[X\cup Y]$ and $\psi=\varphi|_{X\cup Y}$, the vertices of $\psi(X\cup Y)$ are in the same component of $F'=G\oplus_{\psi}H'$.
Considering all possibilities, we guess $Y$ in time $|V(H)|^{\Oh(t)}$. 

Now we consider all possible injective mapping $\psi\colon X\cup Y\rightarrow V(G)$ such that the vertices of $\psi(X\cup Y)$ are in the same connected component of 
$F'=G\oplus_\psi H'$,  where $H'=H[X\cup Y]$. Notice that there are at most $|V(G)|^{3t-2}$ such mappings that can be generated in time $|V(G)|^{\Oh(t)}$. If we fail to find $\psi$, we reject the current choice of $Y$. 
Otherwise, for every $\psi$, we try to extend it to an injection $\varphi\colon V(H)\rightarrow V(G)$ such that $F=G\oplus_{\varphi}H$ is connected, and among all extensions we choose one that provides the minimum weight $\cost(\varphi)$.

Let $Z=V(H)\setminus(X\cup Y)$. 
The vertices of $\psi(X\cup Y)$ are in the same component of $F'$. Denote this component by $F_0$ and denote by $F_1,\ldots,F_r$ the other components of this graphs. Recall that $Z$ is an independent set of $H$ and each vertex of $Z$ has an incident edge with one endpoint in $X$. It follows that for an injection $\varphi\colon V(H)\rightarrow V(G)$ such that $\psi=\varphi|_{X\cup Y}$, $F=G\oplus_\varphi H$ is connected if and only if for every $i\in\{1,\ldots,r\}$, there is $v\in V(F_i)$ such that $v\in \varphi(Z)$. Hence, if $r>|Z|$, we cannot extend $\psi$. In this case we discard the current choice of $\psi$. 

Assume from now  that $Y$ and $\psi$ are fixed, $F'=G\oplus_\psi H'$ is connected and $r\leq|Z|$. For $z\in Z$ and $v\in V(G)\setminus \psi(X\cup Y)$, we define the weight of mapping $z$ to $v$ as 
\[w(z,v)=\sum_{u\in N_G(v)\cap \psi(N_H(z))}\cost(uv),\] 
that is, $w(z,x)$ is the weight of edges that is added to the weight of mapping if we decide to extend $\psi$ by mapping $z$ to $v$.  Let $W=\max\{w(z,v)\mid z\in Z,v\in V(G)\setminus \psi(X\cup Y)\}+1$. 
We construct the weighted auxiliary bipartite graph $\mathcal{G}$ with the bipartition $(A,B)$ of its vertex set and the weight function $f\colon E(\mathcal{G})\rightarrow \mathbb{N}_0$ as follows.
\begin{itemize}
\item Set $A=(V(F_0)\setminus \psi(X\cup Y))\cup V(F_1)\cup\ldots\cup V(F_r)=V(G)\setminus \psi(X\cup Y)$.
\item Construct a set of vertices $S_0$ of size $|V(F_0)|-|X\cup Y|$ and sets $S_i$ of size $|V(F_i)|-1$ for $i\in\{1,\ldots,r\}$.
\item Set $B=Z\cup S_0\cup\ldots\cup S_r$.
\item For each $z\in Z$ and $v\in A$, construct an edge $zv$ and set $f(zv)=w(z,v)$.
\item For each $u\in S_0$ and $v\in V(F_0)\setminus \psi(X\cup Y)$, construct an edge $uv$ and set $f(uv)=W$.
\item For each $\in\{1,\ldots,r\}$, do the following: for each $u\in S_i$ and $v\in V(F_i)$,  construct an edge $uv$ and set $f(uv)=W$. 
\end{itemize} 
We find a matching $M$ in $\mathcal{G}$ that saturates every vertex of $A$ and has the minimum weight using the Hungarian algorithm~\cite{FredmanT87,Kuhn55} in time $\Oh(|V(G)|^3\cdot\log W)$.
  
Observe that a matching that saturates every vertex of $A$ exists, because $r\leq Z$. We can construct such a matching  by selecting one vertex in $V(F_i)$ for each $i\in\{1,\ldots,r\}$ and matching it with a vertex of $Z$. Then we complement this set of edges to a matching saturating $A$ by adding edges incident to $S_0\cup\ldots\cup S_r$. For the matching $M$ that has minimum weight, we can also observe the following.

First, note that 
\begin{eqnarray} \label{eqn_i}
\mbox{every vertex of } Z \mbox{ is saturated by } M.
\end{eqnarray}
Indeed, targeting towards  a contradiction, assume that $z\in Z$ is not saturated. Since $|V(H)|\leq|V(G)|$, there is $uv\in M$ such that $u\in S_0\cup\ldots\cup S_r$ and $v\in A$. We replace $uv$ by $zv$ in $M$. Because $f(uv)=W>w(zv)$, we obtain a matching with a smaller weight. This  contradicts the choice of $M$.

Next, we claim that 
\begin{eqnarray} \label{eqn_ii}
\mbox{there is } zv\in M \mbox{ such that } z\in Z \mbox{ and }v\in V(F_i).
\end{eqnarray}
 Indeed, this is because   the vertices of $V(F_i)$ are adjacent to $|V(F_i)|-1$ vertices of $S_i$ and all other their neighbors are in $Z$. 

Finally, we have that among all matching saturating $A$, $M$ is a matching satisfying~\eqref{eqn_i} and \eqref{eqn_ii} such that for $M'=\{zv\in M\mid z\in Z\}$, $f(M')$ is minimum. To see it, observe that  $f(uv)=W$ for $uv\in M\setminus M'$. Hence, $f(M\setminus M')=(|A|-|Z|)W$, because $|M\setminus M'|=|A|-|Z|$ by (\ref{eqn_i}).
Therefore, 
$f(M')=f(M)-f(M\setminus M')=f(M)-(|A|-|Z|)W$.

For every $z\in Z$, we define $\varphi(z)=v$, where $zv\in M'$ and $\varphi(x)=\psi(x)$ for $x\in X\cup Y$. Clearly, $\varphi$ is an extension of $\psi$.
 By (\ref{eqn_i}), $\varphi$ is an injective mapping of $V(H)$ to $V(G)$. By (\ref{eqn_ii}) and the choice of $X$ and $Y$, we obtain that $G\oplus_\varphi H$ is connected.  
We claim that $\varphi$ is an extension of $\psi$ such that $F=G\oplus_{\varphi}H$ is connected that has the minimum total weight $\cost(\varphi)=\sum_{xy\in E(H)}\cost(\varphi(x)\varphi(y))$.

Recall that by the definition of the weight function $f$, $f(zv)=w(z,v)$ for $z\in Z$ and $v\in A$, and 
$w(z,v)=\sum_{u\in N_G(v)\cap \psi(N_H(z))}\cost(uv)$
in this case. Let $R= \sum_{xy\in E(H),~x,y\in X\cup Y}\cost(\psi(x)\psi(y))$.
It follows that 
\begin{align}\label{eq:w-phi}
\cost(\varphi)=&\sum_{xy\in E(H)}\cost(\varphi(x)\varphi(y))=\sum_{xy\in E(H),~x,y\in X\cup Y}\cost(\psi(x)\psi(y))+\sum_{xz\in E(H),x\in X,z\in Z}\cost(\psi(x)\varphi(z))\nonumber\\
=&R+\sum_{zv\in M'}w(x,z)=R+f(M').
\end{align}
Suppose that
$\varphi'\colon V(H)\rightarrow V(G)$ is an injection that extends $\psi$ such that  $F'=G\oplus_{\varphi'}H$ is connected.
We construct the matching $\tilde{M}$ in $\mathcal{G}$ as follows. For every $z\in Z$, we include $z\varphi'(z)$ in $\tilde{M}$. Denote by $\tilde{M}'$ the obtained matching.
Notice that every vertex of $Z$ is saturated in $\tilde{M}'$ and, therefore, $A$ has $|Z|$ saturated in $\tilde{M}'$ vertices. Hence, $\tilde{M}'$ satisfies (\ref{eqn_i}).
Since  $F'$ is connected, at least one vertex of $V(F_i)$ is saturated for $i\in\{1,\ldots,r\}$ and, therefore, $\tilde{M}'$ satisfies (\ref{eqn_ii}). Then we complement $\tilde{M}'$ to $\tilde{M}$: 
for every nonsaturated vertex $v\in A$, we arbitrarily pick a nonsaturated neighbor $u\in B\setminus Z$ and include $vu$ in $\tilde{M}$.  
This choice is possible, because  $|S_0|=|V(F_0)|-|X\cup Y|$ and $|S_i|=|V(F_i)|-1$ for $i\in\{1,\ldots,r\}$. Since $\tilde{M}'$ and, therefore, $\tilde{M}$ satisfies (\ref{eqn_i}) and (\ref{eqn_ii}), we obtain that $f(M)\leq f(\tilde{M})$ and $f(M')\leq f(\tilde{M}')$. In the same way as in (\ref{eq:w-phi}), we 
 have that $\cost(\varphi')=R+ f(\tilde{M}')$. 
Then $\cost(\varphi') \geq \cost(\varphi)$ by (\ref{eq:w-phi})
and this proves the claim.

Recall that we try all possible choices of $Y$ and for every choice of $Y$, we consider all possible choices of $\psi$. If we fail to find an injection $\varphi\colon V(H)\rightarrow V(G)$ such that $\varphi$ is an extension of $\psi$ and $F=G\oplus_{\varphi}H$ is connected we return the answer NO. Otherwise, we return  $\varphi$ that provides the minimum weight.

To complete the proof, observe that the total running time of the algorithm is $|V(G)|^{\Oh(t)}\cdot \log W$.
\end{proof}

\medskip\noindent\textbf{\probStrucAugmtwo.} The algorithm for \probStrucAugmtwo is more technical. We start with a lemma, which is similar to 
 Lemma~\ref{lem:conn-vc}.

\begin{lemma}\label{lem:two-conn-vc}
Let $G$ and $H$ be graphs such that $G$ is connected, and let $\varphi\colon V(H)\rightarrow V(G)$ be an injection such that  $F=G\oplus_\varphi H$ is connected. Suppose that $X$ is a vertex cover of $H$ and $t=|X|$.
Then there is a set $Y\subseteq V(H)\setminus X$ of size at most $2(t-1)$ such that for $H'=H[X\cup Y]$ and $\psi=\varphi|_{X\cup Y}$, the vertices of $\psi(X\cup Y)$ are in the same biconnected component of $F'=G\oplus_{\psi}H'$.
\end{lemma}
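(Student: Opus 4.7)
The plan is to proceed by the same inclusion-wise maximality argument as in the proof of Lemma~\ref{lem:conn-vc}, with ``connected component'' replaced by ``biconnected component'' throughout and with Observation~\ref{obs:edge-add} playing the role of the trivial fact that a path in $F'$ witnesses connectedness. If $|X|=1$ the statement is immediate with $Y=\emptyset$, so assume $|X|\geq 2$. Let $X'\subseteq X$ be an inclusion-wise maximal subset for which there exists $Y'\subseteq V(H)\setminus X'$ with $|Y'|\leq 2(|X'|-1)$ such that, writing $\psi'=\varphi|_{X'\cup Y'}$ and $F'=G\oplus_{\psi'}H[X'\cup Y']$, the vertices of $\psi'(X'\cup Y')$ all lie in a single biconnected component of $F'$. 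Every singleton $\{x\}\subseteq X$ has this property with $Y'=\emptyset$, so $X'$ is well-defined. I would show $X'=X$; otherwise, I would derive a contradiction by extension.

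Assume $X'\subsetneq X$ and set $s=|X'|$. If some $x\in X\setminus X'$ has $\varphi(x)$ already in the target biconnected component of $F'$, the pair $(X'\cup\{x\},Y')$ already extends $X'$. Otherwise, mirroring Lemma~\ref{lem:conn-vc} and using that $F$ is connected, I would pick $x\in X\setminus X'$ and a path $P$ in $F$ from $\varphi(x)$ to some $\varphi(x')$ with $x'\in X'$, minimizing $|V(P)\cap\varphi(X)|$. The same internal-vertex argument as in Lemma~\ref{lem:conn-vc} forces $V(P)\cap\varphi(X)=\{\varphi(x),\varphi(x')\}$; since $X$ is a vertex cover of $H$, each of the at most two non-$G$ edges on $P$ is incident to $\varphi(x)$ or $\varphi(x')$, and letting $S$ denote their opposite endpoints, $S\subseteq\varphi(V(H)\setminus X)$ with $|S|\leq 2$. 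Setting $X''=X'\cup\{x\}$ and $Y''=Y'\cup\varphi^{-1}(S)$, we obtain $|Y''|\leq 2(s-1)+2=2(|X''|-1)$.

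It remains to verify that $\psi''(X''\cup Y'')$ lies in a single biconnected component of $F''=G\oplus_{\psi''}H[X''\cup Y'']$, where $\psi''=\varphi|_{X''\cup Y''}$. The $G$-edge subpaths of $P$ are already present in $F'$, and the new non-$G$ edges of $P$ appear in $F''$ precisely because their $H$-preimages now have both endpoints in $X''\cup Y''$. Applying Observation~\ref{obs:edge-add} sequentially to these at most two added edges, each added edge covers the bridges of the current graph along a path between its endpoints and merges all biconnected components crossed by such a path. Tracking this merging through, the biconnected component of $F''$ containing $\varphi(x)$ ends up also containing $\varphi(x')$, hence containing the biconnected component of $F'$ that hosts $\psi'(X'\cup Y')$; all of $\psi''(X''\cup Y'')$ therefore lie in one biconnected component of $F''$, contradicting the maximality of $X'$ and establishing $X'=X$.

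The main obstacle, and the point at which this argument differs most delicately from that of Lemma~\ref{lem:conn-vc}, is the final biconnected-component merging step. Whereas for connectedness the mere presence of $P$ inside $F''$ suffices, biconnectivity demands Observation~\ref{obs:edge-add} and a careful sequential application to the two new non-$G$ edges, ensuring that each successive addition genuinely covers the relevant bridges of the current graph and that the resulting merged biconnected component absorbs both $\varphi(x)$ and the biconnected component of $F'$ containing $\psi'(X'\cup Y')$. The minimality of $P$, giving $V(P)\cap\varphi(X)=\{\varphi(x),\varphi(x')\}$, is what keeps the $G$-subpaths of $P$ cleanly inside biconnected components that the merging can swallow, which is exactly what Observation~\ref{obs:edge-add} requires.
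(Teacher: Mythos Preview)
Your plan has a genuine gap at exactly the point you flag as the ``main obstacle''. The sequential application of Observation~\ref{obs:edge-add} to the at most two non-$G$ edges of $P$ does \emph{not} in general force $\varphi(x)$ and $\varphi(x')$ into the same biconnected component of $F''$. Suppose the bridge structure of $F'$ looks like four biconnected blocks $B_0,B_1,B_2,B_3$ chained by bridges, with $\varphi(x)\in B_0$, $\psi'(X'\cup Y')\subseteq B_3$, and the two non-$G$ edges of $P$ being $\varphi(x)a$ with $a\in B_1$ and $b\varphi(x')$ with $b\in B_2$. Adding the first edge merges $B_0\cup B_1$; adding the second merges $B_2\cup B_3$; but the bridge between $B_1$ and $B_2$ survives in $F''$, so $\varphi(x)$ and $\varphi(x')$ remain in different biconnected components. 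Your claim that ``the minimality of $P$ \ldots\ keeps the $G$-subpaths of $P$ cleanly inside biconnected components'' is simply false: the minimality is with respect to $|V(P)\cap\varphi(X)|$ only, and nothing prevents the internal $G$-subpath of $P$ from traversing bridges of $F'$.

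The deeper issue is that you only ever invoke connectedness of $F$, whereas the lemma (despite its unfortunate wording) is used in Theorem~\ref{thm:wsca-two} under the hypothesis that $F$ is $2$-connected, and its proof really needs that. Indeed, if one reads the hypothesis literally as ``$F$ connected'', the conclusion is false: take $G$ a path $v_0v_1v_2v_3$, $X=\{x_1,x_2\}$ independent in $H$, each $x_i$ adjacent in $H$ only to some $y_i$, and $\varphi$ sending $x_1y_1,x_2y_2$ onto the existing edges $v_0v_1,v_2v_3$; then $F=G$ is connected but no choice of $Y$ puts $\varphi(x_1)$ and $\varphi(x_2)$ in a common biconnected block. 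The paper's proof uses $2$-connectivity of $F$ through Observation~\ref{obs:cover-bridge}: every bridge of $F'$ must be covered by some edge of $E(F)\setminus E(F')$, and a three-case analysis (an $X'$-edge whose covering path crosses a block meeting $\varphi(X\setminus X')$; symmetrically an $(X\setminus X')$-edge whose path crosses the block of $\psi'(X'\cup Y')$; or two bridges incident to a common block covered by edges of the two types) produces one vertex $x\in X\setminus X'$ together with at most two helpers in $V(H)\setminus X$ so that the merge actually reaches the target block. Your path $P$ gives no such guarantee; you need to bring the $2$-connectivity of $F$ into the argument and do the bridge-covering case analysis.
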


\begin{proof}
For $|X|=1$   lemma is trivial, so we assume that $|X|\geq 2$.

Let $X'\subseteq X$ be an inclusion-wise maximal set among all sets with the following property:  there  is a set $Y'\subseteq V(H)\setminus X'$ of size at most $2(|X'|-1)$ such that for $H'=H[X'\cup Y']$ and $\psi'=\varphi|_{X'\cup Y'}$, the vertices of $\psi'(X'\cup Y')$ are in the same biconnected component of $F'=G\oplus_{\psi'}H'$. Since  every one-element subset of $X$ satisfies this property such a set $X'$ exists.

In order to prove the lemma, we prove that $X'=X$.

   Targeting towards a contradiction,  
suppose that $X'$ is a proper subset of $X$. Let $s<t$ be the size of $X'$.
We show that then we can extend $X'$ contradicting its maximality. 
More precisely, 
we claim that there is $x\in X\setminus X'$ such that for $X''=X'\cup\{x\}$, there  is a set $Y'\subseteq Y''\subseteq V(H)\setminus X''$ of size at most $2s$ such that for $H''=H[X''\cup Y'']$ and $\psi''=\varphi|_{X''\cup Y''}$, the vertices of $\psi''(X''\cup Y'')$ are in the same biconnected component of $F''=G\oplus_{\psi''}H''$.

If there is $x\in X\setminus X'$ such that $x$ is in the same biconnected component of $F'$ with the vertices of $\psi'(X'\cup Y')$, then the claim holds for $X''=X'\cup\{x\}$ and $Y''=Y'$. Suppose that it is not so, that is, for every  $x\in X\setminus X'$,  $x$ does not belong to  the biconnected component of $F'$ with the vertices of $\psi'(X'\cup Y')$.

Recall that $G$ is connected. Therefore, $F'$ is connected as well. Since the vertices of $\varphi(X)$ do not belong to the same biconnected component, $F'$ is not 2-connected.
Let $B$ be the set of bridges of $F'$.

Suppose that there is an edge $x'y\in E(H)$ with $x'\in X'$ such that there is a biconnected component $Q$ of $F'$ that is crossed by  a $(\varphi(x'),\varphi(y))$-path $P$ in $F'$ and $Q$ contains a vertex $v\in\varphi(X\setminus X')$. Let $x=\varphi^{-1}(x)$.
Consider $X''=X'\cup\{x\}$ and $Y''=Y'\cup\{y\}$. Clearly, $Y'\subseteq Y''\subseteq V(H)\setminus X''$ and $|Y''|\leq 2s$.
Let $H''=H[X\cup Y'']$ and $\psi''=\varphi|_{X\cup Y''}$. Then by Observation~\ref{obs:edge-add}, the vertices of $\psi''(X''\cup Y'')$ are in the same biconnected component of $F''=G\oplus_{\psi''}H''$. This contradicts the choice of $X'$. 

Suppose now that  there is an edge $xy\in E(H)$ with $x\in X\setminus X'$ such that the biconnected component $Q$ of $F'$ that contains the vertices of $\varphi(X'\cup Y')$ is crossed by a $(\varphi(x),\varphi(y))$-path $P$ in $F'$. 
Consider $X''=X'\cup\{x\}$ and $Y''=Y'\cup\{y\}$. We have that $Y'\subseteq Y''\subseteq V(H)\setminus X''$ and $|Y''|\leq 2s$.
Let $H''=H[X\cup Y'']$ and $\psi''=\varphi|_{X\cup Y''}$. Then again  by Observation~\ref{obs:edge-add}, we have that  the vertices of $\psi''(X''\cup Y'')$ are in the same biconnected component of $F''=G\oplus_{\psi''}H''$. Again, this contradicts the choice of $X'$. 

Now we assume that the two previous cases do not hold. 
In particular, in this situation,   not all bridges of $F'$ are covered by edges $pq\in E(F)\setminus E(F')$ with $\varphi^{-1}(p)\in X'$ or $\varphi^{-1}(q)\in X'$ and not all bridges of $F'$ are covered by edges $pq\in E(F)\setminus E(F')$ with $\varphi^{-1}(p)\in X\setminus X'$ or $\varphi^{-1}(q)\in X\setminus X'$. Since $F$ is 2-connected, by Observation~\ref{obs:cover-bridge} all bridges of $G$ should be covered by edges of $F$. Hence,
there are distinct $uv,u'v'\in B$ such that $u,u'\in V(Q)$ for some biconnected component $Q$ of $F'$, $uv$ is covered by 
$pq\in E(F)\setminus E(F')$ with $\varphi^{-1}(p)\in X\setminus X'$ and $u'v'$ is covered by $p'q'\in E(F)\setminus E(F')$ with $\varphi^{-1}(p')\in X'$.
Let $x=\varphi{-1}(u)$, $y=\varphi^{-1}(v)$ and $y'=\varphi^{-1}(v')$. Consider $X''=X'\cup\{x\}$ and $Y''=Y'\cup\{y,y'\}$. Clearly, $Y'\subseteq Y''\subseteq V(H)\setminus X''$ and $|Y''|\leq 2s$.
Let $H''=H[X\cup Y'']$ and $\psi''=\varphi|_{X\cup Y''}$. By Observation~\ref{obs:edge-add}, the vertices of $\psi''(X''\cup Y'')$ are in the same biconnected component of $F''=G\oplus_{\psi''}H''$, which, again, this contradicts the choice of $X'$.

Hence $X'=X$ and  the lemma holds.
\end{proof}

\begin{theorem}\label{thm:wsca-two}
Let $t$ be a positive integer and $\mathcal{C}$ be a graph class of vertex-cover  number at most $t$. Then for any $H\in \mathcal{C}$, 
 \probStrucAugmtwo is  solvable in time $|V(G)|^{\Oh(2^t)}\log W$.
\end{theorem}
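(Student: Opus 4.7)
The plan is to mirror the proof of Theorem~\ref{thm:wsca-one}, replacing Lemma~\ref{lem:conn-vc} with Lemma~\ref{lem:two-conn-vc} and substituting a more elaborate extension step that accounts for 2-edge-connectivity. I begin by brute-forcing a vertex cover $X$ of $H$ with $|X|\leq t$ in $|V(H)|^{\Oh(t)}$ time, rejecting if none exists. By Lemma~\ref{lem:two-conn-vc}, every $\varphi$ making $F = G\oplus_\varphi H$ 2-connected has a restriction $\psi = \varphi|_{X\cup Y}$, for some $Y\subseteq V(H)\setminus X$ with $|Y|\leq 2(t-1)$, such that $\psi(X\cup Y)$ lies in a single biconnected component $Q_0$ of $F' = G\oplus_\psi H[X\cup Y]$. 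I enumerate all $|V(H)|^{\Oh(t)}$ choices of $Y$ and all $|V(G)|^{\Oh(t)}$ candidate injections $\psi$, keeping only those satisfying this single-component condition.

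Fix such a $\psi$ and let $Q_1,\dots,Q_r$ be the remaining biconnected components of $F'$. Root the block-cut tree of $F'$ at $Q_0$. Since $X$ is a vertex cover of $H$ and $Z = V(H)\setminus(X\cup Y)$ is independent in $H$, every edge of $F\setminus F'$ has the form $\varphi(z)\psi(x)$ with $z\in Z$, $x\in N_H(z)\subseteq X$, and $\psi(x)\in V(Q_0)$. By Observation~\ref{obs:edge-add}, such an edge covers exactly the bridges on the block-cut-tree path from the biconnected component containing $\varphi(z)$ to $Q_0$. Combined with Observation~\ref{obs:cover-bridge}, this yields the key combinatorial criterion: $F$ is 2-connected if and only if every non-root leaf $L$ of the rooted block-cut tree satisfies $V(L)\cap\varphi(Z)\neq\emptyset$.

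The remaining task is to extend $\psi$ to an injection $\varphi\colon V(H)\to V(G)$ of minimum weight $\cost(\varphi)=\sum_{z\in Z}w(z,\varphi(z))$ subject to the leaf-covering constraint, where $w(z,v) = \sum_{x\in N_H(z)}\cost(\psi(x)v)$ depends only on the \emph{type} $T(z) = N_H(z)\subseteq X$ of $z$ and on $v$. There are at most $2^t$ distinct types, and since $H$ has no isolated vertices every type is nonempty. I would enumerate a \emph{type profile} that specifies, for each type $T\subseteq X$, the number $a_T\in\{0,\dots,|V(G)|\}$ of type-$T$ vertices placed outside $V(Q_0)$; the number of profiles is $|V(G)|^{\Oh(2^t)}$. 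For each profile I construct a weighted auxiliary bipartite graph in the style of Theorem~\ref{thm:wsca-one}, with vertices of $V(G)\setminus\psi(X\cup Y)$ on one side and vertices of $Z$ together with dummy fillers (of suitably large penalty weight) on the other, the dummies distributed per biconnected component so that a minimum-weight matching saturating every slot, computed via the Hungarian algorithm in $\Oh(|V(G)|^3\log W)$ time, both respects the type counts of the profile and forces each non-root leaf to receive at least one vertex of $\varphi(Z)$.

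The main obstacle is to prove that enumerating over type profiles and running the matching inside each profile really captures an optimal extension. I expect this to follow from an exchange argument analogous to the $M$-versus-$\tilde M$ analysis in Theorem~\ref{thm:wsca-one}: given any valid $\varphi'$, its induced type profile is enumerated, and the matching found by the algorithm for that profile has weight at most $\cost(\varphi')$. Multiplying the $|V(H)|^{\Oh(t)}$ choices of $Y$, the $|V(G)|^{\Oh(t)}$ choices of $\psi$, the $|V(G)|^{\Oh(2^t)}$ profiles, and the $\Oh(|V(G)|^3\log W)$ Hungarian cost then yields the claimed $|V(G)|^{\Oh(2^t)}\log W$ running time.
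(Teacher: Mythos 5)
Your skeleton matches the paper's up to the point where $\psi$ is fixed: brute-force the vertex cover, invoke Lemma~\ref{lem:two-conn-vc}, enumerate $Y$ and $\psi$, and reduce the problem to optimally extending $\psi$ on $Z=V(H)\setminus(X\cup Y)$. The genuine gap is in your combinatorial criterion for 2-connectivity. You claim that $F$ is 2-connected if and only if every non-root leaf block $L$ of $F'$ satisfies $V(L)\cap\varphi(Z)\neq\emptyset$; this is only necessary, not sufficient. The only new edges available to cover the bridge separating $L$ from the rest are of the form $\varphi(z)\psi(x)$ with $x\in N_H(z)\subseteq X$, and if the \emph{unique} vertex of $V(L)\cap\varphi(Z)$ is the endpoint $v$ of that bridge $vu$ while every $x\in N_H(z)$ has $\psi(x)=u$, then the ``new'' edge coincides with the bridge itself (the superposition is simple), and the bridge remains uncovered. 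Concretely: let $G$ be a triangle $upq$ with a pendant vertex $v$ attached to $u$, and $H=K_2$ with $V(H)=\{x,z\}$, $X=\{x\}$, $Y=\emptyset$, $\psi(x)=u$. Mapping $z$ to $v$ satisfies your criterion, yet $G\oplus_\varphi H=G$ is not 2-connected. The paper's characterization carries an extra clause (its condition~(ii)) for exactly this degenerate case, and that clause is threaded through the entire algorithm.

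This omission is not cosmetic, because it is precisely what forces the paper away from a single global matching. The paper runs a dynamic program over the pendant blocks $F_1,\ldots,F_r$, with states indexed by the vector of counts of the at most $2^t$ false-twin classes of $Z$ consumed so far, solving one Hungarian-algorithm subproblem per block per state (with an explicit special case when a block receives a single $Z$-vertex whose only $H$-neighbour maps to the bridge endpoint). The constraint ``if exactly one $Z$-vertex lands in $F_i$, it must either avoid the bridge endpoint or have an $H$-neighbour mapped away from $u$'' depends jointly on the multiplicity of $Z$-vertices in $F_i$, on which vertex of $F_i$ is used, and on the type of the $z$ placed there; it is not a saturation/degree condition, so it cannot be imposed on one bipartite matching, and your aggregate type profile (counts of each type placed anywhere outside $Q_0$) discards the per-block information needed to verify it. Indeed, if condition~(ii) were not needed, a single matching with per-leaf dummy sets in the style of Theorem~\ref{thm:wsca-one} would already suffice and your profile enumeration would be superfluous; with it, you need the paper's block-by-block decomposition (or an equivalent). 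Your exchange argument for optimality and the running-time accounting are fine once the criterion and the per-block handling are repaired.
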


\begin{proof}
Let $G$ and $H$ be graphs such that $G$ is connected and $H\in \mathcal{C}$. Let $\cost\colon \binom{V(G)}{2}\rightarrow \mathbb{N}_0 $ be a weight function. Similarly to the proof of Theorem~\ref{thm:wsca-one} we show that we can find in time $|V(G)|^{\Oh(2^t)}\cdot\log W$ the minimum value of $\cost(\varphi)=\sum_{xy\in E(H)}\cost(\varphi(x)\varphi(y))$ for an injective  mapping $\varphi\colon V(H)\rightarrow V(G)$ such that $F=G\oplus_{\varphi}H$ is connected if such a mapping $\varphi$ exists.  

The first steps of our algorithm are the same as in the proof of Theorem~\ref{thm:wsca-one}.
  Again, we remind that   $|V(H)|\leq |V(G)|$ and that $H$ has no isolated vertices. 

Next, we find a vertex cover $X$ of minimum size in $H$   of size at most $t$ in time $|V(G)|^{\Oh(t)}$. If we fail to find $X$ of size at most $t$, then $H\not \in \mathcal{C}$. We return  NO and stop. From now on we assume that $X$ exists. 

Suppose that there is an injective  mapping $\varphi\colon V(H)\rightarrow V(G)$ such that $F=G\oplus_{\varphi}H$ is 2-connected and assume that for $\varphi$,  $\cost(\varphi)$ is minimum. By Lemma~\ref{lem:two-conn-vc}, there is a set $Y\subseteq V(H)\setminus X$ of size at most $2(t-1)$ such that for $H'=H[X\cup Y]$ and $\psi=\varphi|_{X\cup Y}$, the vertices of $\psi(X\cup Y)$ are in the same biconnected component of $F'=G\oplus_{\psi}H'$.
Considering all possibilities, we guess $Y$ in time $|V(H)|^{\Oh(t)}$.

Now we consider all possible injective mapping $\psi\colon X\cup Y\rightarrow V(G)$ such that the vertices of $\psi(X\cup Y)$ are in the same biconnected component of 
$F'=G\oplus_\psi H'$ where $H'=H[X\cup Y]$. Notice that there at most $|V(G)|^{3t-2}$ such mappings that can be generated in time $|V(G)|^{\Oh(t)}$. If we fail to find $\psi$, we reject the current choice of $Y$. 
Otherwise, for every $\psi$, we try to extend it to an injection $\varphi\colon V(H)\rightarrow V(G)$ such that $F=G\oplus_{\varphi}H$ is 2-connected, and among all extensions we choose one that provides the minimum weight $\cost(\varphi)$.

Let $Z=V(H)\setminus(X\cup Y)$. 
The vertices of $\psi(X\cup Y)$ are in the same biconnected component of $F'$. Denote this biconnected component by $F_0$ and denote by $F_1,\ldots,F_r$ the pendant biconnected components of $F'$ that are distinct from $F_0$. Recall that $Z$ is an independent set of $H$ and each vertex of $Z$ has an incident edge with one endpoint in $X$. 
By Observation~\ref{obs:cover-bridge}, we obtain the following crucial property.

 For an injection $\varphi\colon V(H)\rightarrow V(G)$ such that $\psi=\varphi|_{X\cup Y}$, $F=G\oplus_\varphi H$ is 2-connected if and only if
\begin{itemize}
\item[(i)] for every $i\in\{1,\ldots,r\}$, there is $v\in V(F_i)$ such that $v\in \varphi(Z)$, and
\item[(ii)] if $v$ is the unique element of $V(F_i)\cap \varphi(Z)$ and $v$ is incident to a bridge $vu$ of $G$, then there is $x\in X$ such that $\varphi(x)\neq u$ and $x$ is adjacent to 
$\varphi^{-1}(v)$ in $H$.
\end{itemize}

Similarly to the proof of Theorem~\ref{thm:wsca-one}, we solve auxiliary matching problems to find the minimum weight of $\varphi$ but now, due the condition (ii), the algorithm becomes more complicated and we are using dynamic programming.

For $z\in Z$ and $v\in V(G)\setminus \psi(X\cup Y)$, we define the weight of mapping $z$ to $v$ as 
\begin{equation}\label{eq:w-two}
w(z,v)=\sum_{u\in N_G(v)\cap \psi(N_H(z))}\cost(uv),
\end{equation}
that is, $w(z,x)$ is the weight of edges that is added to the weight of mapping if we decide to extend $\psi$ by mapping $z$ to $v$.  Our aim is to find the extension $\varphi$ of $\psi$ that satisfies (i) and (ii) such that the total weight of the mapping of the vertices of $Z$ to verices of $V(G)\setminus \psi(X\cup Y)$ by $\varphi$ is minimum.

Since $X$ is a vertex cover of $H$ of size $t$, the set $Z$ can be partitioned into $s\leq 2^t$ classes of false twins $Z_1,\ldots,Z_s$. 
Let $p_i=|Z_i|$ for $i\in \{1,\ldots,s\}$. We exploit the following property of false twins in $Z$: if $x,y\in Z_i$, then $w(x,v)=w(y,v)$ for $v\in V(G)\setminus \psi(X\cup Y)$.

For each $s$-tuple of integers $(q_1,\ldots,q_s)$ such that $0\leq q_i\leq p_i$,  for $i\in \{1,\ldots,s\}$ and each $h\in \{0,\ldots,r\}$, we define 
\begin{equation}\label{eq:dp-funct}
\alpha_h(q_1,\ldots,q_s)=\min_{\xi}\sum_{z\in Z'} w(z,\xi(z)),
\end{equation}
where $Z'\subseteq Z$ such that $|Z'\cap Z_i|=q_i$ for $i\in\{1,\ldots,s\}$ and the minimum is taken over all injective mappings $\xi\colon Z'\rightarrow (V(F_0)\setminus \psi(X\cup Y))\cup V(F_1)\cup\ldots\cup V(F_h)$ such that the following conditions are satisfied:
\begin{itemize}
\item[(a)]  for every $i\in\{1,\ldots,h\}$, there is $v\in V(F_i)$ such that $v\in \xi(Z')$, and
\item[(b)] if $v$ is a unique element of $V(F_i)\cap \xi(Z')$ for some $i\in\{1,\ldots,h\}$
and $v$ is incident to a bridge $vu$ of $G$, then there is $x\in X$ such that $\psi(x)\neq u$ and $x$ is adjacent to 
$\xi^{-1}(v)$ in $H$.
\end{itemize}
If such a mapping $\xi$ does not exist, then we assume that $\alpha_h(q_1,\ldots,q_s)=+\infty$.
Recall that if $x,y\in Z_i$, then $w(x,v)=w(y,v)$ for $v\in V(G)\setminus \psi(X\cup Y)$. It implies that the function $\alpha_h(q_1,\ldots,q_s)$ depends only on the values of $q_1,\ldots,q_s$.

We claim that computing $\alpha_r(p_1,\ldots,p_s)$ is equivalent to finding an extension $\varphi$ of $\psi$ of minimum weight  such that $F=G\oplus_\varphi H$ is 2-connected. 

Assume that  $\alpha_r(p_1,\ldots,p_s)<+\infty$.
Notice that $Z'=Z$ if $q_i=p_i$ for $i\in \{1,\ldots,s\}$.
Let  $\xi\colon Z\rightarrow (V(F_0)\setminus \psi(X\cup Y))\cup V(F_1)\cup\ldots\cup V(F_h)$ be an injection that  provides the minimum in (\ref{eq:dp-funct}), that is,
$\alpha_r(p_1,\ldots,p_s)=\sum_{z\in Z} w(z\xi(z))$. 
We define $\varphi(z)=\xi(z)$ for $z\in Z$ and $\varphi(x)=\psi(x)$ for $x\in X\cup Y$. Clearly, $\varphi$ is an extension of $\psi$. Because $\xi$ is an injection, we have that $\varphi$ is an injective mapping. Since $\xi$ satisfies (a) and (b), we obtain that $\varphi$ satisfies (i) and (ii) and, therefore,
 $F=G\oplus_{\varphi}H$ is 2-connected.
Let $R= \sum_{xy\in E(H),~x,y\in X\cup Y}\cost(\psi(x)\psi(y))$.
Then using (\ref{eq:w-two}), we have that 
\begin{align}\label{eq:w-phi-two}
\cost(\varphi)=&\sum_{xy\in E(H)}\cost(\varphi(x)\varphi(y))=\sum_{xy\in E(H),~x,y\in X\cup Y}\cost(\varphi(x)\varphi(y))+\sum_{xz\in E(H),~x\in X,z\in Z}\cost(\varphi(x)\varphi(y))=\nonumber\\
=&R+\sum_{z\in Z}w(z,\varphi(z))=R+\sum_{z\in Z}w(z,\xi(z))=R+\alpha_r(p_1,\ldots,p_s).
\end{align}
Let 
$\varphi'\colon V(H)\rightarrow V(G)$ be an injection that extends $\psi$ such that  $F'=G\oplus_{\varphi'}H$ is 2-connected. 
We define  $\xi'\colon Z\rightarrow (V(F_0)\setminus \psi(X\cup Y))\cup V(F_1)\cup\ldots\cup V(F_h)$ by setting $\xi'(z)=\varphi'(z)$ for $z\in Z$.
Since $\varphi'$ is an injection, $\xi'$ is also an injection. Because $F'$ is 2-connected, 
$\varphi$ satisfies (i) and (ii). This implies that $\xi'$ satisfies (a) and (b).
Therefore, $\sum_{z\in Z} w(z,\xi'(z))\geq  \alpha_r(p_1,\ldots,p_s)$. 
Similarly to (\ref{eq:w-phi-two}), we have that $\cost(\varphi')=R+\sum_{z\in Z}w(z,\xi'(z))\geq R+\alpha_r(p_1,\ldots,p_s)$. 
We conclude that $\varphi$ is an extension $\varphi$ of $\psi$ of minimum weight  such that $F=G\oplus_\varphi H$ is 2-connected. 

Suppose that $\alpha_r(p_1,\ldots,p_s)=+\infty$. It implies that there is no injection $\xi\colon Z\rightarrow (V(F_0)\setminus \psi(X\cup Y))\cup V(F_1)\cup\ldots\cup V(F_h)$ satisfying (a) and (b).
But this immediately implies that there is no injective extension $\varphi$ of $\psi$ satisfying (i) and (ii). This completes the proof of the claim.

We use dynamic programming to compute $\alpha_h$ consequently for $h=0,1,\ldots,r$.

We start with computing $\alpha_0(q_1,\ldots,q_s)$ for each $s$-tuple $(q_1,\ldots,q_s)$. Notice that the conditions (a) and (b) are irrelevant in this case, because they concern  only $h\geq 1$. 
We construct the auxiliary complete bipartite graph $\mathcal{G}_0$ with the bipartition $(V(F_0)\setminus \psi(X\cup Y),Z')$ of its vertex set  and define the weight of each edge $zv$ for $z\in Z'$ and $v\in V(F_0)\setminus \psi(X\cup Y)$ as $w(z,v)$.
We find a matching $M$ in $\mathcal{G}_0$ that saturates every vertex of $Z'$ and has the minimum weight using the Hungarian algorithm~\cite{FredmanT87,Kuhn55} in time $\Oh(|V(G)|^3\cdot \log W)$.
If there is no matching saturating $Z'$, we set $\alpha_0(q_1,\ldots,q_s)=+\infty$. Otherwise, $\alpha_0(q_1,\ldots,q_s)=w(M)$. It is straightforward to verify the correctness of computing 
$\alpha_0(q_1,\ldots,q_s)$ by the definition of this function.

Assume that $h\geq 1$ and we already computed the table of values of $\alpha_{h-1}(q_1,\ldots,q_s)$. We explain how to construct the table of values of $\alpha_{h-1}(q_1,\ldots,q_s)$.
The the computation is based on the observation that an injective mapping $\xi\colon Z'\rightarrow (V(F_0)\setminus \psi(X\cup Y))\cup V(F_1)\cup\ldots\cup V(F_h)$ can be seen as the union of two injections $\xi'\colon Z''\rightarrow (V(F_0)\setminus \psi(X\cup Y))\cup V(F_1)\cup\ldots\cup V(F_{h-1})$ and $\lambda\colon Z'''\rightarrow V(F_h)$ for the appropriate partition $(Z'',Z''')$ of $Z'$. 

For each $s$-tuple of integers $(q_1,\ldots,q_s)$ such that $0\leq q_i\leq p_i$ for $i\in \{1,\ldots,s\}$, we define 
\begin{equation}\label{eq:dp-funct-prime}
\alpha_h'(q_1,\ldots,q_s)=\min_{\lambda}\sum_{z\in Z'} w(z,\xi(z)),
\end{equation}
where $Z'\subseteq Z$ such that $|Z'\cap Z_i|=q_i$ for $i\in\{1,\ldots,s\}$ and the minimum is taken over all injective mappings $\lambda\colon Z'\rightarrow V(F_h)$ such that the following conditions are fulfilled:
\begin{itemize}
\item[(a$^*$)]  there is $v\in V(F_h)$ such that $v\in \lambda(Z')$, and
\item[(b$^*$)] if $v$ is the unique element of $V(F_h)\cap \lambda(Z')$ and $v$ is incident to a bridge $vu$ of $G$, then there is $x\in X$ such that $\psi(x)\neq u$ and $x$ is adjacent to 
$\lambda^{-1}(v)$ in $H$.
\end{itemize}
If such a mapping $\lambda$ does not exist, then we assume that $\alpha_h'(q_1,\ldots,q_s)=+\infty$.
As for $\alpha_h(q_1,\ldots,q_s)$, $\alpha_h'(q_1,\ldots,q_s)$  depends only on the values of $q_1,\ldots,q_s$, because if $x,y\in Z_i$, then $w(x,v)=w(y,v)$ for $v\in V(G)\setminus \psi(X\cup Y)$. 

Let $uv$ be the unique bridge of $G$ with $v\in V(F_h)$. Suppose that for an $s$-tuple $(q_1,\ldots,q_s)$, we obtain that  
$|Z'|=1$ and for the unique vertex $z\in Z'$, $z$ has a unique neighbor $x\in X$ in $H$ and $\psi(x)=u$. Then we set $\alpha_h'(q_1,\ldots,q_s)=+\infty$ if $|V(F_h)|=1$ and $\alpha_h'(q_1,\ldots,q_s)=\min\{w(zv')\mid v'\in V(F_h)\setminus\{v\}\}$ otherwise. For other $s$-tuples $(q_1,\ldots,q_s)$, we compute $\alpha_h'(q_1,\ldots,q_s)$ as follows.
We construct the auxiliary complete bipartite graph $\mathcal{G}_h$ with the bipartition $(V(F_h),Z')$ of its vertex set  and define the weigh of each edge $zv$ for $z\in Z'$ and $v\in V(F_0)\setminus \psi(X\cup Y)$ as $w(zv)$.
We find a matching $M$ in $\mathcal{G}_h$ that saturates every vertex of $Z'$ and has the minimum weight using the Hungarian algorithm~\cite{FredmanT87,Kuhn55} in time $\Oh(|V(G)|^3\cdot \log W)$.
If there is no matching saturating $Z'$, we set $\alpha_h'(q_1,\ldots,q_s)=+\infty$. Otherwise, $\alpha_h'(q_1,\ldots,q_s)=w(M)$. It is again straightforward to verify the correctness of computing 
$\alpha_h'(q_1,\ldots,q_s)$ using the definition of this function.

Now, to compute $\alpha_{h}(q_1,\ldots,q_s)$, we use the equation:
\begin{equation}\label{eq:final-dp}
\alpha_{h}(q_1,\ldots,q_s)=\min \{\alpha_{h-1}(q_1',\ldots,q_s')+\alpha_{h}'(q_1'',\ldots,q_s'')\},
\end{equation}
where the minimum is taken over all $s$-tuples $(q_1',\ldots,q_s')$ and $(q_1'',\ldots,q_s'')$ such that $q_i=q_i'+q_i''$ for $i\in\{1,\ldots,s\}$.

To show correctness, we prove first that 
\begin{equation}\label{eq:lower}
\alpha_{h}(q_1,\ldots,q_s)\geq \min \{\alpha_{h-1}(q_1',\ldots,q_s')+\alpha_{h}'(q_1'',\ldots,q_s'')\}.
\end{equation}
The inequality is trivial if $\alpha_{h}(q_1,\ldots,q_s)=+\infty$. Assume that $\alpha_{h}(q_1,\ldots,q_s)<+\infty$. Then there is an injective mappings $\xi\colon Z'\rightarrow (V(F_0)\setminus \psi(X\cup Y))\cup V(F_1)\cup\ldots\cup V(F_h)$ satisfying (a) and (b) such that $\alpha_h(q_1,\ldots,q_s)=\sum_{z\in Z'} w(z,\xi(z))$.
Let $Z''=\{z\in Z\mid \xi(z)\in  (V(F_0)\setminus \psi(X\cup Y))\cup V(F_1)\cup\ldots\cup V(F_{h-1})\}$ and  
$Z'''=\{z\in Z\mid \xi(z)\in   V(F_{h})\}$. Denote by $\xi'$ the restriction of $\xi$ on $(V(F_0)\setminus \psi(X\cup Y))\cup V(F_1)\cup\ldots\cup V(F_{h-1})$ and let 
$\lambda=\xi|_{Z'''}$. 

We have that
$\xi'$ is an injective mapping of $Z''$ to $(V(F_0)\setminus \psi(X\cup Y))\cup V(F_1)\cup\ldots\cup V(F_{h-1})$ such that the following holds:
\begin{itemize}
\item[(a$'$)]  for every $i\in\{1,\ldots,h-1\}$, there is $v\in V(F_i)$ such that $v\in \xi'(Z'')$, and
\item[(b$'$)] if $v$ is a unique element of $V(F_i)\cap \xi'(Z')$ for some $i\in\{1,\ldots,h-1\}$
and $v$ is incident to a bridge $vu$ of $G$, then there is $x\in X$ such that $\psi(x)\neq u$ and $x$ is adjacent to 
$\xi'^{-1}(v)$ in $H$.
\end{itemize}
Let $q_i'=|Z''\cap Z_i|$ for $i\in \{1,\ldots,s\}$. By the definition of $\alpha_{h-1}$, we have that 
\begin{equation}\label{eq:alphah-1}
\alpha_{h-1}(q_1',\ldots,q_s')\leq \sum_{z\in Z''} w(z,\xi'(z)).
\end{equation}

Similarly, we obtain that $\lambda$ is an injective mapping of $Z'''$ to $V(F_h)$ such that the following holds:
\begin{itemize}
\item[(a$^{**}$)]  there is $v\in V(F_h)$ such that $v\in \lambda(Z''')$, and
\item[(b$^{**}$)] if $v$ is the unique element of $V(F_h)\cap \lambda(Z''')$ and $v$ is incident to a bridge $vu$ of $G$, then there is $x\in X$ such that $\psi(x)\neq u$ and $x$ is adjacent to 
$\lambda^{-1}(v)$ in $H$.
\end{itemize}
 Let 
$q_i''=|Z'''\cap Z_i|$ for $i\in \{1,\ldots,s\}$. By the definition of $\alpha_h'$, we obtain that 
\begin{equation}\label{eq:alphah-prime}
\alpha_{h}'(q_1'',\ldots,q_s'')\leq \sum_{z\in Z'''} w(z,\lambda(z)).
\end{equation}

Using (\ref{eq:alphah-1}) and (\ref{eq:alphah-prime}), we conclude that
\begin{align*}
\alpha_h(q_1,\ldots,q_s)=&\sum_{z\in Z'} w(z,\xi(z))=\big(\sum_{z\in Z''} w(z,\xi'(z))\big)+\big(\sum_{z\in Z'''} w(z,\lambda(z))\big)\\
\geq&\alpha_{h-1}(q_1',\ldots,q_s')+\alpha_{h}'(q_1'',\ldots,q_s''),
\end{align*}
and this immediately implies (\ref{eq:lower}).

Now we prove that 
\begin{equation}\label{eq:upper}
\alpha_{h}(q_1,\ldots,q_s)\leq \min \{\alpha_{h-1}(q_1',\ldots,q_s')+\alpha_{h}'(q_1'',\ldots,q_s'')\}.
\end{equation}
Consider  $s$-tuples $(q_1',\ldots,q_s')$ and $(q_1'',\ldots,q_s'')$ such that $q_i=q_i'+q_i''$ for $i\in\{1,\ldots,s\}$ for which the minimum in the right part of (\ref{eq:upper}) is achieved.
If $\alpha_{h-1}(q_1',\ldots,q_s')=+\infty$ or $\alpha_{h}'(q_1'',\ldots,q_s'')=+\infty$, then (\ref{eq:upper}) is trivial. Assume that 
$\alpha_{h-1}(q_1',\ldots,q_s')<+\infty$ and $\alpha_{h}'(q_1'',\ldots,q_s'')<+\infty$.

Since $\alpha_{h-1}(q_1',\ldots,q_s')<+\infty$, there is an injective mappings $\xi'\colon Z''\rightarrow (V(F_0)\setminus \psi(X\cup Y))\cup V(F_1)\cup\ldots\cup V(F_{h-1})$ satisfying (a$'$) and (b$'$) such that $\alpha_{h-1}(q_1',\ldots,q_s')=\sum_{z\in Z''} w(z,\xi(z))$, where $Z''\subseteq Z$ with $|Z''\cap Z_i|=q_i'$ for $i\in\{1,\ldots,s\}$. 
Because $\alpha_{h}'(q_1'',\ldots,q_s'')<+\infty$, there is an injection $\lambda$  of $Z'''$ to $V(F_h)$ such that (a$^{**}$) and (b$^{**}$) are fulfilled and $\alpha_{h}'(q_1'',\ldots,q_s'')=\sum_{z\in Z'''} w(z,\lambda(z))$ for $Z'''\subseteq Z$ with $|Z'''\cap Z_i|=q_i'$ for $i\in\{1,\ldots,s\}$. 

Recall that the values of 
$\alpha_{h-1}(q_1',\ldots,q_s')$ and $\alpha_h'(q_1'',\ldots,q_s'')$  depend only on the values of $q_1',\ldots,q_s'$ and $q_1'',\ldots,q_s''$ respectively, because if $x,y\in Z_i$, then $w(x,v)=w(y,v)$ for $v\in V(G)\setminus \psi(X\cup Y)$. Hence, we can assume that $(Z'',Z''')$ is a partition of $Z'$.
We define $\xi\colon  Z'\rightarrow (V(F_0)\setminus \psi(X\cup Y))\cup V(F_1)\cup\ldots\cup V(F_{h})$ by setting
$$
\xi(z)=
\begin{cases}
\xi'(z), &\mbox{if~} z\in Z'',\\
\lambda(z), &\mbox{if~} z\in Z'''.
\end{cases}
$$
Because $\xi'$ and $\lambda$ are injections and $\xi'(Z'')\cap \lambda(Z''')=\emptyset$, $\xi$ is an injection. Since $\xi'$ and $\lambda$ satisfy (a$'$), (b$'$) and (a$^{**}$), (b$^{**}$) respectively, $\xi$ satisfies (a) and (b). Therefore,
 \begin{align*}
\alpha_h(q_1,\ldots,q_s)\leq&\sum_{z\in Z'} w(z,\xi(z))=\big(\sum_{z\in Z''} w(z,\xi'(z))\big)+\big(\sum_{z\in Z'''} w(z,\lambda(z)))\\
=&\alpha_{h-1}(q_1',\ldots,q_s')+\alpha_{h}'(q_1'',\ldots,q_s''),
\end{align*}
and (\ref{eq:upper}) follows.

Combining (\ref{eq:lower}) and (\ref{eq:upper}), we obtain that (\ref{eq:final-dp}) holds, and this completes the correction proof of our algorithm.

To evaluate the running time, observe that there are at most $|V(G)|^s$ $s$-tuples $(q_1,\ldots,q_s)$. Since $s\leq 2^t$, it implies that the table of values of $\alpha_0(q_1,\ldots,q_s)$ can be computed  in time $|V(G)|^{\Oh(2^t)}\cdot \log W$. Similarly, the table of values of $\alpha_h'(q_1,\ldots,q_s)$ for each $h\in\{1,\ldots,r\}$ can be computed in the same time. To compute 
$\alpha_h(q_1,\ldots,q_s)$ for a given $s$-tuple $(q_1,\ldots,q_s)$ using (\ref{eq:final-dp}), we have to consider at most $|V(G)|^s$ pairs of $s$-tuples   $(q_1',\ldots,q_s')$ and
 $(q_1'',\ldots,q_s'')$. Hence, we can compute the table of values $\alpha_{h}(q_1,\ldots,q_s)$ from the tables of values of $\alpha_{h-1}(q_1,\ldots,q_s)$ and $\alpha_{h}'(q_1,\ldots,q_s)$ in time $|V(G)|^{\Oh(2^t)}\cdot \log W$ for each $h\in\{1,\ldots,r\}$. We conclude that the total running time is $|V(G)|^{\Oh(2^t)}\cdot\log W$.
\end{proof}

\subsection{Hardness of structured augmentation}\label{sec:hard}
In this section we show that Theorems~\ref{thm:wsca-one} and \ref{thm:wsca-two} are tight in the sense that if the vertex-cover number of graphs in a hereditary graph class $\mathcal{C}$ is unbounded, then
both structured augmentation problems are \classNP-complete.  Our hardness proof actually holds for for any $k$-edge connectivity augmentation.  
For a positive integer $k$, we define the following problem:

\defproblema{\probWSCA}%
{Graphs $G$ and $H$ such that $G$ is edge $(k-1)$-connected, a weight function $\cost\colon \binom{V(G)}{2}\rightarrow \mathbb{N}_0 $  and a nonnegative integer $\weight$.}%
{Decide whether there is an injective  $\varphi\colon V(H)\rightarrow V(G)$ such that $F=G\oplus_{\varphi}H$ is edge $k$-connected and the \emph{weight} of the mapping $\cost(\varphi)=\sum_{xy\in E(H)}\cost(\varphi(x)\varphi(y))\leq \weight$.}

Let us note that for $k=1$ this is \probStrucAugm 
and for $k=2$ this is \probStrucAugmtwo.
%
  Also we observe that it is unlikely that we can avoid the dependency on $t$ in the exponents of polynomial bounding the running time when solving   \probWSCA for $H$ with $\beta(H)\leq t$.

Recall that the \textsc{Subgraph Isomorphism} problem asks, given two graphs $G$ and $H$, whether $G$ contains $H$ as a (not necessarily induced) subgraph. We can observe that \probWSCA  when $H$ restricted to be 
in a graph class $\mathcal{C}$ is at least as hard as \textsc{Subgraph Isomorphism} with the same restriction.

\begin{lemma}\label{lem:subgr-isom}
Let $\mathcal{C}$ be a graph class.
If  \textsc{Subgraph Isomorphism} is \classNP-complete for $H\in \mathcal{C}$, then for every positive integer $k$, \probWSCA is \classNP-complete for $H\in \mathcal{C}$ even if the weight of every pair of vertices of $G$ is restricted to be ether $0$ or $1$. Also if   \textsc{Subgraph Isomorphism}  is $\classW{1}$-hard for $H\in \mathcal{C}$ when parameterized by $|V(H)|$, then so is  \probWSCA.
\end{lemma}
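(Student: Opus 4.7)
The plan is to give a Karp reduction from \textsc{Subgraph Isomorphism} to \probWSCA that keeps the pattern graph $H$ untouched and runs in polynomial time; since \probWSCA is easily seen to lie in \classNP (a candidate $\varphi$ is verified by computing $F$, checking $k$-edge-connectivity via standard max-flow/min-cut routines, and summing weights), this yields \classNP-completeness. Because the same reduction preserves the parameter $|V(H)|$, it also transfers \classW{1}-hardness when the latter is the parameter.

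The key observation is that if $G$ is already $k$-edge-connected, then $F=G\oplus_\varphi H\supseteq G$ is automatically $k$-edge-connected for every injection $\varphi$, so the connectivity requirement becomes vacuous and the whole instance collapses to the weight constraint. Given a \textsc{Subgraph Isomorphism} instance $(G',H)$ with $H\in\mathcal{C}$, we therefore set $V(G)=V(G')$, take $G$ to be the complete graph on $V(G')$, and define $\omega\colon\binom{V(G)}{2}\to\{0,1\}$ by $\omega(uv)=0$ if $uv\in E(G')$ and $\omega(uv)=1$ otherwise, with $W=0$. The degenerate cases are handled separately: if $|V(H)|>|V(G')|$ we output a trivial no-instance, and if $|V(G')|\le k$ we solve \textsc{Subgraph Isomorphism} directly by brute force, since then $|V(H)|\le k$ is constant.

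For correctness, in the main case $|V(G')|\ge k+1$ the graph $G=K_{|V(G')|}$ is $k$-edge-connected and in particular $(k-1)$-edge-connected, so the constructed $(G,H,\omega,W)$ is a valid instance. It is a yes-instance iff there is an injection $\varphi\colon V(H)\to V(G)$ with $\omega(\varphi)\le 0$; by nonnegativity this is equivalent to $\omega(\varphi(x)\varphi(y))=0$ for every $xy\in E(H)$, which by the definition of $\omega$ says $\varphi(x)\varphi(y)\in E(G')$ for every edge of $H$. Together with the injectivity of $\varphi$, this is exactly the statement that $\varphi$ embeds $H$ into $G'$ as a (not necessarily induced) subgraph. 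Since $H$ and $|V(H)|$ are preserved verbatim and the reduction is polynomial, both the \classNP-completeness and the \classW{1}-hardness transfer. The main obstacle is conceptual rather than technical: recognizing that once $G$ is padded into a $k$-edge-connected host graph the connectivity side of \probWSCA becomes trivial, so that all of the hardness can be routed through the $\{0,1\}$-valued weight function.
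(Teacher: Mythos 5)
Your proposal is correct and is essentially identical to the paper's own proof: both pad the host into the complete graph on $V(G')$, assign weight $0$ to edges of $G'$ and $1$ to non-edges, set $W=0$, and observe that $k$-edge-connectivity of the superposition is automatic so the instance reduces to finding a weight-$0$ embedding of $H$. Your explicit handling of the degenerate cases ($|V(G')|\le k$ by brute force) is a minor elaboration of the paper's ``assume without loss of generality that $|V(G)|>k$''.
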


\begin{proof}
Let $k$ be a positive integer, and let $(G,H)$ be an instance of \textsc{Subgraph Isomorphism}. Assume without loss of generality that  $|V(G)|>k$. We construct the complete graph $F$ with the set of vertices $V(G)$ and define the weight function $\cost\colon \binom{V(G)}{2}\rightarrow\{0,1\}$ 
by setting 
\[\cost(uv)=
\begin{cases}
0, &\mbox{if~} uv\in E(G),\\
1, &\mbox{if~} uv\notin E(G).
\end{cases}
\]
Then we let $\weight=0$. Notice that $F$ is $k$-connected and $H$ is a subgraph of $G$ if and only if there is an injection $\varphi\colon V(H)\rightarrow V(G)$ with $\cost(\varphi)=\sum_{xy\in E(G) }\cost(\varphi(x)\varphi(y))=0$. Then $(G,H)$ is a yes-instance of \textsc{Subgraph Isomorphism} if and only if $(G,H,\cost,\weight)$ is a yes-instance of \probWSCA and the claim follows.
\end{proof}

The \textsc{Clique} problem asks, given a graph $G$ and a positive integer $k$, whether $G$ has a clique of size $k$ or, in other words, whether the complete graph $K_k$ is a subgraph of $G$. 
It is well-known that \textsc{Clique} is \classNP-complete~\cite{GareyJ79}.
Then Lemma~\ref{lem:subgr-isom} implies the following lemma.

\begin{lemma}\label{lem:clique}
Let $\mathcal{C}$ be a hereditary graph class that contains $K_n$ for arbitrary positive integer $n$.
Then for every positive integer $k$, \probWSCA is \classNP-complete for $H\in \mathcal{C}$ even if the weight of every pair of vertices of $G$ is restricted to be ether $0$ or $1$.
\end{lemma}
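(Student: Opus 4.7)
The plan is to deduce Lemma~\ref{lem:clique} directly from Lemma~\ref{lem:subgr-isom} by showing that \textsc{Subgraph Isomorphism} remains \classNP-complete when the pattern graph $H$ is restricted to lie in the class $\mathcal{C}$. The hypotheses on $\mathcal{C}$ make this essentially immediate: since $K_n \in \mathcal{C}$ for every positive integer $n$, the classical \textsc{Clique} problem is a special case of \textsc{Subgraph Isomorphism} restricted to patterns from $\mathcal{C}$.

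More concretely, first I would recall that \textsc{Clique} is \classNP-complete by the theorem of Karp~\cite{GareyJ79}. Then, given an instance $(G,k)$ of \textsc{Clique}, I would construct the equivalent instance $(G, K_k)$ of \textsc{Subgraph Isomorphism}. By the assumption on $\mathcal{C}$, we have $K_k \in \mathcal{C}$, so this is a valid instance of the restricted version. Clearly $G$ contains a clique of size $k$ if and only if $K_k$ is a subgraph of $G$, so the reduction is correct; it is obviously polynomial. This establishes \classNP-hardness of \textsc{Subgraph Isomorphism} for $H \in \mathcal{C}$, and containment in \classNP{} is standard (guess an injection $\varphi\colon V(H) \to V(G)$ and verify in polynomial time that it preserves edges).

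Having established that \textsc{Subgraph Isomorphism} is \classNP-complete for $H \in \mathcal{C}$, I would invoke Lemma~\ref{lem:subgr-isom}, which yields that, for every positive integer $k$, \probWSCA is \classNP-complete on instances where $H \in \mathcal{C}$ and the weight function takes values in $\{0,1\}$. Membership in \classNP{} for \probWSCA itself is also routine: one guesses the injection $\varphi$, computes the superposition $F = G \oplus_\varphi H$, verifies edge $k$-connectivity (e.g.\ via standard max-flow computations) and checks that $\cost(\varphi) \le \weight$, all in polynomial time.

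There is no real obstacle here: the lemma is essentially an immediate combination of the \classNP-completeness of \textsc{Clique}, the trivial observation that \textsc{Clique} is subgraph isomorphism with complete-graph patterns, and the reduction encapsulated in Lemma~\ref{lem:subgr-isom}. The only thing to be slightly careful about is that Lemma~\ref{lem:subgr-isom} is stated under the assumption that \textsc{Subgraph Isomorphism} is \classNP-complete for $H \in \mathcal{C}$, so I must make sure the reduction from \textsc{Clique} produces patterns strictly inside $\mathcal{C}$; but this is guaranteed by the hypothesis that $\mathcal{C}$ contains every $K_n$ (so hereditariness is not even needed for this particular lemma, though it is used elsewhere in the section).
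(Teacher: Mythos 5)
Your proposal is correct and follows exactly the paper's route: the authors likewise observe that \textsc{Clique} is \textsc{Subgraph Isomorphism} with complete-graph patterns, note that $K_k\in\mathcal{C}$ by hypothesis, and then invoke Lemma~\ref{lem:subgr-isom}. Your additional remarks on \classNP{} membership and on hereditariness not being needed here are accurate but not essential.
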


 Let us note that \textsc{Clique} is  \classW{1}-hard when parameterized by $k$, see the book of Downey and Fellows~\cite{DowneyF13} for an introduction to parameterized complexity.
Notice that $\beta(K_k)=k-1$. 
Then Lemma~\ref{lem:subgr-isom} implies the following proposition.

\begin{proposition}\label{prop:vc-w-hard}
For every positive integer $k$, \probWSCA is \classW{1}-hard when parameterized by $\beta(H)$ even if the weight of every pair of vertices of $G$ is restricted to be ether $0$ or $1$.
\end{proposition}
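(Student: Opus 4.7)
The plan is to derive Proposition~\ref{prop:vc-w-hard} as an immediate corollary of Lemma~\ref{lem:subgr-isom} combined with the classical parameterized hardness of \textsc{Clique}. Recall that \textsc{Clique}, i.e.\ the problem of deciding whether a given graph $G$ contains $K_k$ as a subgraph, is \classW{1}-hard when parameterized by $k$ (see Downey and Fellows~\cite{DowneyF13}). Viewing \textsc{Clique} as the restriction of \textsc{Subgraph Isomorphism} to the hereditary class $\mathcal{C}=\{K_n\mid n\geq 1\}$ of complete graphs, this says precisely that \textsc{Subgraph Isomorphism} is \classW{1}-hard for $H\in\mathcal{C}$ when parameterized by $|V(H)|$.

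Next, I would invoke the second assertion of Lemma~\ref{lem:subgr-isom}. Inspecting the reduction given in that lemma, the pattern graph $H$ is kept unchanged, the host $G$ is replaced by a complete graph on the same vertex set, and the weight function takes only values in $\{0,1\}$; moreover the reduction runs in polynomial time. Hence it is a parameterized reduction that preserves $|V(H)|$ as the parameter. Combined with the previous paragraph, this yields that \probWSCA with $H\in\mathcal{C}$ is \classW{1}-hard when parameterized by $|V(H)|$, already with $\{0,1\}$-valued weights.

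Finally, it remains to pass from the parameter $|V(H)|$ to $\beta(H)$. For $H=K_k$ we have $\beta(H)=k-1=|V(H)|-1$, so the two parameters are linearly related; in particular the identity is a trivial parameterized reduction between the two parameterizations. Consequently \probWSCA parameterized by $\beta(H)$ is also \classW{1}-hard under the same restriction on the weight function, which is exactly the statement of the proposition.

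No real obstacle is expected: the entire argument is a packaging of Lemma~\ref{lem:subgr-isom} together with the known hardness of \textsc{Clique} and the elementary identity $\beta(K_k)=k-1$. The only thing worth double-checking is that the construction in Lemma~\ref{lem:subgr-isom} does not inflate the pattern graph (so that $|V(H)|$ and hence $\beta(H)$ truly remains the parameter), which is evident from the proof of that lemma.
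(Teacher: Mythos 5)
Your proposal is correct and follows exactly the paper's own (very short) argument: the paper likewise derives the proposition by combining the \classW{1}-hardness of \textsc{Clique} parameterized by $k$ with the second assertion of Lemma~\ref{lem:subgr-isom} and the identity $\beta(K_k)=k-1$. Your additional check that the reduction in Lemma~\ref{lem:subgr-isom} leaves $H$ unchanged (so the parameter is genuinely preserved) is a sensible verification but does not change the route.
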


This proposition implies that unless \classFPT$=$\classW{1}, we cannot solve  \probWSCA for $k=1,2$ in time $f(\beta(H))\cdot |V(G)|^{\Oh(1)}$.  Hence the running time of the form 
$|V(G)|^{f(t)}$
 of algorithms solving   \probWSCA for graphs $H$ with $\beta(H)\leq t$ is probably unavoidable. 

%

The \textsc{Balanced Biclique} asks, given a graph $G$ and a positive integer $k$, whether $G$ contains $K_{k,k}$ as a subgraphs. It is known that \textsc{Balanced Biclique} is \classNP-complete~\cite{GareyJ79}. Using  Lemma~\ref{lem:subgr-isom} we obtain the next lemma.

 \begin{lemma}\label{lem:biclique}
Let $\mathcal{C}$ be a hereditary graph class that contains $K_{n,n}$ for arbitrary positive integer $n$.
Then for every positive integer $k$, \probWSCA is \classNP-complete for $H\in \mathcal{C}$ even if the weight of every pair of vertices of $G$ is restricted to be ether $0$ or $1$.
\end{lemma}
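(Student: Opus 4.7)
The plan is to mimic the strategy used for Lemma~\ref{lem:clique}, applying Lemma~\ref{lem:subgr-isom} to an appropriate hardness result for \textsc{Subgraph Isomorphism} restricted to the class $\mathcal{C}$. Specifically, I would first establish that \textsc{Subgraph Isomorphism} is \classNP-complete when the pattern graph $H$ is required to belong to $\mathcal{C}$, and then invoke Lemma~\ref{lem:subgr-isom} to lift this hardness to \probWSCA (with $0/1$ weights).

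For the first step, I reduce from \textsc{Balanced Biclique}, which is \classNP-complete by~\cite{GareyJ79}. Given an instance $(G,k)$ of \textsc{Balanced Biclique}, define the pattern graph $H = K_{k,k}$. By the assumption that $\mathcal{C}$ contains $K_{n,n}$ for every positive integer $n$, we have $H \in \mathcal{C}$. The output instance of \textsc{Subgraph Isomorphism} is simply $(G,H)$. By the very definition of the \textsc{Balanced Biclique} problem, $G$ contains $K_{k,k}$ as a subgraph if and only if $(G,H)$ is a \yesinstance of \textsc{Subgraph Isomorphism}. This reduction is clearly polynomial, which together with the trivial membership of \textsc{Subgraph Isomorphism} in \classNP\ yields \classNP-completeness for the restriction $H \in \mathcal{C}$.

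For the second step, I apply Lemma~\ref{lem:subgr-isom} directly: since \textsc{Subgraph Isomorphism} is \classNP-complete on $H \in \mathcal{C}$, the lemma gives that for every positive integer $k$, \probWSCA is \classNP-complete on $H \in \mathcal{C}$, and moreover the reduction produced by Lemma~\ref{lem:subgr-isom} uses only weights in $\{0,1\}$, which delivers the strengthening claimed in the statement. Membership of \probWSCA in \classNP\ is routine: the certificate is the injection $\varphi$ itself, and one can check in polynomial time both that $G \oplus_\varphi H$ is edge $k$-connected (via a max-flow computation) and that $\cost(\varphi) \leq \weight$.

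There is essentially no significant obstacle: the hereditary hypothesis on $\mathcal{C}$ is not even used, as the assumption that $K_{n,n} \in \mathcal{C}$ for every $n$ is by itself enough to place the pattern graph inside $\mathcal{C}$ for each instance size, and all remaining work is absorbed into Lemma~\ref{lem:subgr-isom}. The only thing to double-check when writing the proof in detail is that the weight construction in Lemma~\ref{lem:subgr-isom} indeed preserves the $H \in \mathcal{C}$ restriction, which it does because $H$ is passed unchanged through that reduction.
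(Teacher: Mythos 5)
Your proposal is correct and follows exactly the paper's route: the paper likewise derives this lemma by combining the \classNP-completeness of \textsc{Balanced Biclique} (i.e., deciding whether $G$ contains $K_{k,k}$ as a subgraph) with Lemma~\ref{lem:subgr-isom}, using the hypothesis only to place $K_{n,n}$ in $\mathcal{C}$. Your observation that the hereditary assumption is not actually needed here is accurate but does not change the argument.
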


Now we consider \probWSCA for $k\geq1$ for matching graphs. 

\begin{lemma}\label{lem:match-hard-one}
Let $\mathcal{C}$ be a hereditary graph class that contains  a matching graph of arbitrary size.
Then \probStrucAugm is \classNP-complete for $H\in \mathcal{C}$ even if the weight of every pair of vertices of $G$ is at most $2$.
\end{lemma}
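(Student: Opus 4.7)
\smallskip
The plan is to give a polynomial-time reduction from the classical \textsc{NP}-complete \textsc{Hamiltonian Path} problem. First observe that \probStrucAugm lies in \classNP, since a candidate injection $\varphi$ can be verified in polynomial time by checking connectivity of $G\oplus_\varphi H$ and summing the weights. Given an instance $G'$ on vertices $v_1,\ldots,v_n$, I would build an instance $(G,H,\omega,W)$ of \probStrucAugm as follows: take $G$ to be $n$ disjoint edges $a_ib_i$ (so $G$ is itself a matching graph serving merely as the host), take $H$ to be a matching with exactly $n-1$ edges (which belongs to $\mathcal{C}$ since $\mathcal{C}$ is hereditary and contains arbitrarily large matching graphs), set $W=n-1$, and define $\omega(xy)=1$ whenever $x$ and $y$ lie in two distinct ``blocks'' $\{a_i,b_i\}$ and $\{a_j,b_j\}$ with $v_iv_j\in E(G')$, and $\omega(xy)=2$ otherwise (in particular for pairs inside a single block and for pairs across blocks corresponding to non-adjacent vertices of $G'$). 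All weights are at most $2$ as required.

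For the forward direction, suppose $\varphi$ is an injective mapping with $F=G\oplus_\varphi H$ connected and $\omega(\varphi)\leq n-1$. Since $H$ has $n-1$ edges and every image pair has weight at least $1$, the budget forces every image to have weight exactly $1$; hence no $\varphi$-image of an edge of $H$ lies inside a single block, and each one joins two blocks whose $G'$-indices are adjacent in $G'$. Contract each block $\{a_i,b_i\}$ to a super-vertex: the added matching becomes a loop-free multigraph on $n$ super-vertices with exactly $n-1$ edges, and this contracted multigraph is connected because $F$ is. Hence it is a spanning tree. Injectivity of $\varphi$ combined with the matching structure of $H$ guarantees that each of the two vertices $a_i,b_i$ of any block is used at most once, so every super-vertex has degree at most $2$ in this tree. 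A tree on $n$ vertices of maximum degree $2$ must be a path visiting every super-vertex, and translating the weight-$1$ condition back to $G'$ yields a Hamiltonian path of $G'$.

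Conversely, given a Hamiltonian path $v_{\pi(1)},\ldots,v_{\pi(n)}$ of $G'$, I enumerate the edges of $H$ as $e_1,\ldots,e_{n-1}$ and define $\varphi$ to send the two endpoints of $e_k$ to $b_{\pi(k)}$ and $a_{\pi(k+1)}$; this map is injective because every vertex of $G$ is the image of at most one endpoint, every image pair has weight $1$ (so the total weight is exactly $n-1$), and the chain of added edges visits every block in order, so $G\oplus_\varphi H$ is connected. The main obstacle is the forward direction; it is handled by combining the ``$n-1$ edges on $n$ super-vertices form a spanning tree'' observation with the maximum-degree-two constraint forced by the matching structure of $H$, which together pin the auxiliary structure down to a Hamiltonian path.
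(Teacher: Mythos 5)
Your reduction is correct, and it follows the same overall strategy as the paper's proof: reduce from \textsc{Hamiltonian Path}, let $G$ be a disjoint union of $n$ small gadgets (one per vertex of the input graph), use weights in $\{1,2\}$ (the paper uses $\{0,1,2\}$) with a budget of $n-1$ so that every image of an $H$-edge is forced onto a ``cheap'' cross-gadget pair encoding an adjacency, and then argue that the contraction of the gadgets yields a spanning tree of maximum degree $2$, i.e., a Hamiltonian path. The difference is in the gadget. The paper reduces from \textsc{Hamiltonian Path} on \emph{cubic} graphs, uses $K_{1,3}$ components and a matching $H$ with $2n-1$ edges, and needs an extra argument: $n$ of the $H$-edges must land on weight-$0$ pairs (existing edges $uu^e$), which consumes one pendant vertex per component and only then, via injectivity, caps the number of connecting edges per component at two. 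Your version uses $K_2$ components and a matching with $n-1$ edges, so the degree-$2$ bound on each contracted super-vertex falls out immediately from the fact that each block has only two vertices and injectivity plus the matching structure of $H$ lets each vertex of $G$ absorb at most one endpoint of an added edge; you also avoid the restriction to cubic graphs entirely. All the steps check out: the weight-$1$ pairs are exactly the cross-block pairs corresponding to edges of the input graph, the budget forces all $n-1$ images to be such pairs, a connected multigraph on $n$ super-vertices with $n-1$ edges is necessarily a simple spanning tree, and the converse construction is injective with total weight exactly $n-1$. Your argument is, if anything, a cleaner instantiation of the same idea.
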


\begin{proof}
Clearly, it is sufficient to prove that \probStrucAugm is \classNP-complete if $H$ is a matching graph.
We reduce from the \textsc{Hamiltonian Path} problem. Recall that this problem asks whether a graph $G$ has a path containing all the vertices of $G$.  \textsc{Hamiltonian Path} is known to be \classNP-complete for cubic graphs~\cite{GareyJ79}.

Let $G$ be a cubic graph with $n$ vertices. We construct the graph $G'$ as follows.
\begin{itemize}
\item Construct a copy of $V(G)$.
\item For each edge $e=uv\in E(G)$, construct two vertices $u^e$ and $v^e$ and make them adjacent to $u$ and $v$ respectively.
\end{itemize}
Notice that $G'$ is the disjoint union of $n$ copies of $K_{1,3}$. 
We define $H$ to be the matching graph with $2n-1$ edges. 
Now we define $\cost\colon \binom{V(G')}{2}\rightarrow \mathbb{N}_0$. For each edge $e\in E(G)$, we set
$\cost(uu^e)=\cost(vv^e)=0$ and $\cost(u^ev^e)=1$. For all remaining pairs of distinct vertices $x$ and $y$, we set $\cost(xy)=2$. Finally, let $\weight=n-1$.

We claim that $G$ has a Hamiltonian path if and only if $(G',H,\cost,\weight)$ is a yes-instance of \probStrucAugm.

Suppose that $P=v_1\ldots v_n$ is a Hamiltonian path in $G$. Denote by $x_1y_1,\ldots,x_{2n-1}y_{2n-1}$ the edges of $H$. We consider the following injection $\varphi\colon V(H)\rightarrow V(G')$:
\begin{itemize}
\item for $i\in\{1,\ldots,n-1\}$, set $\varphi(x_i)=v_i^{v_iv_{i-1}}$ and $\varphi(y_i)=v_{i+1}^{v_iv_{i+1}}$,
\item for each $i\in\{1,\ldots,n\}$, find an edge $e$ in $G$ incident to $v_i$ such that $e\notin E(P)$ and then set $\varphi(x_{n-1+i})=v_i$ and $\varphi(y_{n-i+1})=v_{i+1}^e$.
\end{itemize}
It is straightforward to verify that $F=G'\oplus_\varphi H$ is connected and $\cost(\varphi)=\sum_{xy\in E(H)}\cost(\varphi(x)\varphi(y))=n-1\leq \weight$.
 
Assume now that there is an injection $\varphi\colon V(H)\rightarrow V(G')$ such that $F=G'\oplus_\varphi H$ is connected and $\cost(\varphi)=\sum_{xy\in E(H)}\cost(\varphi(x)\varphi(y))\leq n-1= \weight$. Let $A=E(F)\setminus E(G')$. Observe that $|A|\geq n-1$, because  $G'$ contains $n$ components. For each $a\in A$, $\cost(a)\geq 1$ and $\cost(a)=1$ if and only if $a=u^ev^e$ for some edge $e=uv\in E(G)$. Then $A$ contains exactly $n-1$ edges and for each $a\in A$, there is $e_a=uv\in E(G)$ such that $a=u^{e_a}v^{e_a}$. Notice that the edges $e_a$ for $a\in A$ are pairwise distinct, because $\varphi$ is an injection. 
Let $X=\{xy\in E(H)\mid \varphi(x)\varphi(y)\in A\}$ and $Y=E(H)\setminus X$. 
Since $H$ has $2n-1$ edges and $|A|=n-1$, $|Y|=n$. Because $\cost(A)= \weight$, we have that for each $xy\in Y$, $\cost(\varphi(x)\varphi(y))=0$, that is, $\varphi(x)\varphi(y)\in E(G')$.
Because each component of $G'$ is a copy of $K_{1,3}$, we have that for each vertex $u\in V(G)$, there is an edge $e$ incident to $u$ in $G$ such that for an edge $xy\in Y$, $\varphi(\{x,y\})=\{u,u^e\}$. Because $\varphi$ is an injective mapping, this implies that at most two edges of $A$ have their endpoints in the same component of $G'$. Therefore,
every vertex $v\in G$ is incident to at most two edges of the set  $B=\{e_a\mid a\in A\}$. Since $F$ is connected, we conclude that the edges of $B$ compose a Hamiltonian path in $G$. 
\end{proof}

\begin{lemma}\label{lem:match-hard-two}
Let $\mathcal{C}$ be a hereditary graph class that contains  a matching graph of arbitrary size.
Then for every $k\geq 2$, \probWSCA is \classNP-complete for $H\in \mathcal{C}$ in the strong sense.
\end{lemma}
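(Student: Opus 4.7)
The plan is to reduce from \textsc{Hamiltonian Path} on cubic graphs, which is strongly $\classNP$-complete, by adapting the reduction used in the proof of Lemma~\ref{lem:match-hard-one} to the setting where the base graph $G$ must already be $(k-1)$-edge connected. The key point is that matching graphs belong to $\mathcal{C}$ and that Lemma~\ref{lem:match-hard-one} already pinpoints the ``right'' gadget structure; one only needs to thicken it so that the minimum cuts have size $k-1$ instead of size $1$.

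Given a cubic graph $G$ on $n$ vertices, I would first build a $(k-1)$-edge connected graph $G'$ that preserves the pendant structure driving the $k=1$ argument. I would take the graph from the proof of Lemma~\ref{lem:match-hard-one} (the disjoint union of $n$ copies of $K_{1,3}$ indexed by $V(G)$, together with the auxiliary leaves $u^e, v^e$ for each edge $e = uv \in E(G)$) and add a small $(k-1)$-edge connected ``backbone'' $B$ on $\Oh(k)$ fresh vertices, together with $k-2$ additional edges from each leaf $u^e$ to distinct vertices of $B$, so that every leaf has degree exactly $k-1$ in $G'$. With $B$ chosen appropriately -- for instance, a clique on $k$ new hubs with each hub attached to every gadget center -- a direct cut-counting argument should show that $G'$ is $(k-1)$-edge connected and, crucially, that the only edge cuts of $G'$ of size exactly $k-1$ are the ones isolating an individual leaf $u^e$. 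By Observation~\ref{obs:cover-bridge}, making $F = G' \oplus_\varphi H$ edge $k$-connected is then equivalent to covering each such leaf-isolating cut by at least one edge of $\varphi(H)$.

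I would then reuse verbatim the weight function and matching choice from Lemma~\ref{lem:match-hard-one}: take $H$ to be the matching graph with $2n-1$ edges, set $\cost(u^e v^e) = 1$ for every $e = uv \in E(G)$, set $\cost = 0$ on the natural central pairs inside each $K_{1,3}$ gadget, set $\cost = 2$ on every remaining pair, and take $\weight = n - 1$. All weights are absolute constants and $|V(G')|$ is polynomial in $n$ and $k$, so the reduction is polynomial and uses only polynomially bounded weights, which is precisely what is needed for \emph{strong} $\classNP$-completeness. The yes/no equivalence then follows by repeating the argument of Lemma~\ref{lem:match-hard-one} almost word for word: any valid augmentation of weight at most $n-1$ must pick exactly $n-1$ unit-cost edges of the form $u^e v^e$; because $\varphi$ is injective and each gadget offers only three leaves, no vertex of $G$ is incident to more than two chosen edges, so the resulting edge subset of $G$ has maximum degree at most $2$; connectedness of $F$ then forces it to be a Hamiltonian path. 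Conversely, a Hamiltonian path in $G$ yields an injection of weight exactly $n - 1$ by the explicit construction of Lemma~\ref{lem:match-hard-one}.

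The main obstacle is the design and verification of the backbone $B$: one must simultaneously ensure that $G'$ is $(k-1)$-edge connected, that no ``spurious'' edge cut of size $k-1$ appears outside the leaf-isolating family, and that $|V(G')|$ and all weights stay bounded by a polynomial (in fact by an absolute constant) in $n$ and $k$. I expect this to be handled by a short, direct cut-counting argument once the attachment pattern between the gadgets and $B$ is fixed; after this structural step, the rest of the reduction is essentially identical to the one for $k = 1$, so strong $\classNP$-completeness follows for every $k \geq 2$.
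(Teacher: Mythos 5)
There is a genuine gap, and it is not in the ``backbone'' step you flag as the main obstacle but in the completeness direction of the equivalence. In the $k=1$ reduction of Lemma~\ref{lem:match-hard-one}, the $n$ zero-cost matching edges are mapped onto pairs $\{u,u^e\}$ that are \emph{already edges} of $G'$; they add nothing to the superposition and exist only to force, via injectivity, that at most two leaves per gadget remain available for the unit-cost edges. That is harmless for plain connectivity. But once you thicken the construction so that the only edge cuts of size $k-1$ in $G'$ are the $3n$ cuts isolating the individual leaves $u^e$ (there are $3n$ of them, two per edge of the cubic graph), Observation~\ref{obs:cover-bridge} forces \emph{every} such leaf to be an endpoint of an edge of $E(F)\setminus E(G')$. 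With your weights and budget $\weight=n-1$, at most $n-1$ matching edges can be mapped to positive-cost pairs, and every zero-cost pair coincides with an existing edge of $G'$ and hence covers nothing. So at most $2(n-1)<3n$ of the leaf-isolating cuts can be covered, and $F$ is never $k$-edge connected --- even when $G$ has a Hamiltonian path, the reduction produces a no-instance. This cannot be repaired by redesigning $B$ alone: you would have to enlarge $H$, change the zero-cost pairs to non-edges, or raise the budget, and each of these breaks the injectivity/degree-two counting argument that extracts the Hamiltonian path.

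For comparison, the paper proves this lemma by an entirely different reduction, from \textsc{Biconnectivity Augmentation} restricted to trees with weights in $\{1,2\}$ (Frederickson and J\'aJ\'a). There, $G$ is built from the tree $T$ by placing a clique $Q_e$ across each tree edge (so the size-$(k-1)$ cuts correspond exactly to tree edges, and only those need covering) and by attaching a large clique $R_u$ at each tree vertex whose internal zero-cost pairs are already edges of $G$; these cliques are joined to the rest by far more than $k-1$ edges, so they create no new small cuts and serve as harmless parking for the matching edges that are not used to cover anything. That design is precisely what resolves the tension your construction runs into: zero-cost placements must be possible without generating additional cuts that then demand coverage.
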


\begin{proof}
Let $k\geq 2$ be an integer.
Clearly, it is sufficient to prove that \probWSCA is \classNP-complete if $H$ is a matching graph.
We reduce from the \textsc{Biconnectivity Augmentation} problem that asks, given a graph $G$, a weight function $c\colon\binom{V(G)}{2}\setminus E(G)\rightarrow \mathbb{N}_0$ and a positive integer $w$, whether there is $A\subseteq \binom{V(G)}{2}\setminus E(G)$ with $c(A)\leq w$ such that the graph $G'$ obtained from $G$ by the addition of edges of $A$ is 2-connected. 
This problem was shown to be \classNP-complete by Frederickson and J{\'{a}}J{\'{a}} in~\cite{FredericksonJ81} even if  $G$ restricted to be a tree and $c(uv)\in\{1,2\}$ for $uv\in \binom{V(G)}{2}\setminus E(G)$.

Let $(T,c, \weight)$ be an instance of \textsc{Biconnectivity Augmentation} where $T$ is a tree and  $c\colon\binom{V(T)}{2}\setminus E(T)\rightarrow \{1,2\}$. 
Let $r=\max\{k, \weight\}$.
We construct the graph $G$ as follows.
\begin{itemize}
\item Construct a copy of $V(T)$.
\item For each $e\in E(T)$, construct a clique $Q_e$ of size $k$ and make the vertices of $Q$ adjacent to $u$ and $v$.
\item For each $u\in V(T)$, construct a clique $R_u$ of size $2r$, denote its vertices by $x_1^u,\ldots,x_r^u,y_1^u,\ldots,y_r^u$ and make them adjacent to $u$.
\end{itemize}
Observe that $G$ is $(k-1)$-connected and its minimum edge separators correspond to the edges of $T$. More precisely, for an edge $e=uv\in V(T)$, $G$ has two minimum separators $S_1=\{uz\in E(G)\mid z\in Q_e\}$ and $S_2=\{vz\in E(G)\mid z\in Q_e\}$.
We define $H$ to be the matching graph with $ \weight$ edges and denote its edges by $p_1q_1,\ldots,p_ \weight q_ \weight$.
Finally, we define the weight function $\binom{V(G)}{2}\rightarrow\mathbb{N}_0$ as follows.
\begin{itemize}
\item For each $uv\in \binom{V(T)}{2}\setminus E(T)$, $\cost(x_i^uy_j^v)=c(uv)$ for $i,j\in\{1,\ldots,r\}$.
\item For each $u\in V(T)$, $\cost(x_i^uy_i)=0$.
\item We set $\cost(pq)=C+1$ for the remaining pairs of distinct vertices of $G$.
\end{itemize}

We claim that $(T,c, \weight)$ is a yes-instance of \textsc{Biconnectivity Augmentation} if and only if $(G,H,\cost,\weight)$ is a yes-instance of \probWSCA.

Suppose that $(T,c, \weight)$ is a yes-instance of \textsc{Biconnectivity Augmentation}. Then there is  $A\subseteq \binom{V(G)}{2}\setminus E(G)$ with $c(A)\leq w$ such that the graph $T'$ obtained from $T$ by the addition of edges of $A$ is 2-connected.  Let $A=\{a_1b_1,\ldots a_sb_s\}$. Since $c(e)\in \{1,2\}$ for $e\in E(G)$, $s\leq  \weight$. We construct the injective mapping $\varphi\colon V(H)\rightarrow V(G)$ as follows.
\begin{itemize}
\item For $i\in \{1,\ldots,s\}$, if  $a_ib_i=uv$ for nonadjacent distinct $u,v\in V(T)$, then set $\varphi(p_i)=x_u^i$ and $\varphi(q_i)=x_v^i$.
\item For $i\in \{s+1, \weight\}$, select $u\in V(T)$ and set $\varphi(p_i)=x_i^u$ and $\varphi(q_i)=y_i^u$. 
\end{itemize}
By the definition of $\cost$, $\cost(\varphi)=\sum_{i=1}^w \cost(\varphi(p_i)\varphi(q_i))=c(A)\leq  \weight$. Consider $F=G\oplus_\varphi H$. Recall that $T'$ is 2-connected. Then for every bridge $uv$ of $T$, that is, for every edge $uv$, there is $a_ib_i$ that covers $uv$ by Observation~\ref{obs:cover-bridge}. By the definition of $\varphi$, there is an edge $e\in E(F)\setminus E(G)$ with its endpoints in $R_{a_i}$ and $R_{b_i}$. Then this edges covers the separators $S_1=\{uz\in E(G)\mid z\in Q_{uv}\}$ and $S_2=\{vz\in E(G)\mid z\in Q_{uv}\}$. It implies that all edge separators of $G$ of size $k-1$ are covered by edges of $E(F)\setminus E(G)$. By Observation~\ref{obs:cover-bridge}, we conclude that $F$ is $k$-connected. Therefore, $(G,H,\cost,\weight)$ is a yes-instance of \probWSCA.

Suppose now that $(G,H,\cost,\weight)$ is a yes-instance of \probWSCA. Then there is an injection $\varphi\colon V(H)\rightarrow V(G)$ such that $\cost(\varphi)\leq  \weight$ and $F=G\oplus_\varphi H$ is $k$-connected. By Observation~\ref{obs:cover-bridge}, we have that for each edge $uv\in E(T)$, there is $e\in E(F)\setminus E(G)$ such that $e$ covers the separator 
$\{uz\in E(G)\mid z\in Q_{uv}\}$ of $G$. Since $\cost(e)\leq  \weight$, we obtain that $e$ has its end vertices in $x_{u'}^i$ and $x_{v'}^j$ for some $i,j\in\{1,\ldots,r\}$ and two nonadjacent $u',v'\in V(T)$. Denote $a_e=u'v'$.
Notice that if we add $a_e$ to $T$, then $a_e$ covers $uv$ in the obtained graph. Observe also that $c(a_e)=\cost(e)$. We consider the set $A$ of distinct $a_e\in \binom{V(T)}{2}\setminus E(T)$ constructed in the described way  for $e\in E(F)\setminus E(G)$ covering the separators $\{uz\in E(G)\mid z\in Q_{uv}\}$ of $G$. We have that the graph $T'$ obtained from $T$ by the addition of the edges of $A$ is 2-connected by Observation~\ref{obs:cover-bridge}.
Since $c(A)\leq \cost(E(F)\setminus E(G))\leq\cost(\varphi)\leq  \weight$, we conclude that $(T,c, \weight)$ is a yes-instance of \textsc{Biconnectivity Augmentation}.
\end{proof}

To classify the computational complexity of \probWSCA for hereditary graph classes, we use  the Ramsey's theorem (see, e.g.,~\cite{Diestel12} for the introduction).
For two positive integers $p$ and $q$, we denote by $R(p,q)$ the \emph{Ramsey number}, that is, the smallest $n$ such that every graph on $n$ vertices has either a clique of
size $p$ or an independent set of size $q$. Those numbers are all finite by the Ramsey's theorem. In particular, Marx and Wollan in~\cite{MarxW15} observed the following corollary.

\begin{lemma}\label{lem:ramsey}
Let $H$ be a graph and $n\geq 1$ a positive integer. If $H$ contains a matching with $5^{10n}$ edges, then $H$ either contains the matching graph with $n$ edges as an induced subgraph or $H$ contains $K_{n,n}$ as a subgraph. 
\end{lemma}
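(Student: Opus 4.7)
The plan is to combine a Ramsey-type argument with a short case analysis on the interaction patterns between matching edges. First, I would fix the matching $M = \{e_1, \ldots, e_m\}$ of size $m = 5^{10n}$, orient each edge as $e_i = a_i b_i$, and for every ordered pair $i < j$ define the \emph{type} $\tau(i,j) \subseteq \{aa, ab, ba, bb\}$ recording which of the four potential cross-edges $a_ia_j$, $a_ib_j$, $b_ia_j$, $b_ib_j$ lie in $E(H)$. This is a $16$-coloring of the pairs of matching edges.

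Next, I would use a quantitative Ramsey-type argument to pass to a subcollection $M' \subseteq M$ of at least $2n$ matching edges on which the type is a constant value $\tau^*$. The case analysis is then short. If $\tau^* = \emptyset$, then $M'$ itself is an induced matching of size $\geq n$, so $H$ contains the matching graph with $n$ edges as an induced subgraph. If $aa \in \tau^*$ (symmetrically $bb \in \tau^*$), then the $2n$ vertices $\{a_i : e_i \in M'\}$ span a clique $K_{2n}$ in $H$, and hence contain $K_{n,n}$ as a subgraph. Otherwise $\tau^*$ contains $ab$ or $ba$ but neither $aa$ nor $bb$; in the case $ab \in \tau^*$ I would take, after relabeling $M' = \{e_1,\ldots,e_{2n}\}$, the sets $A = \{a_1,\ldots,a_n\}$ and $B = \{b_{n+1},\ldots,b_{2n}\}$. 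Every pair $(a_i, b_j) \in A \times B$ satisfies $i < j$ and hence $a_ib_j \in E(H)$, which gives a copy of $K_{n,n}$. The case $ba \in \tau^*$ is symmetric.

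The main obstacle is the quantitative Ramsey step, since a naive $16$-color Ramsey bound overshoots the hypothesis $5^{10n}$. To get by with only $5^{10n}$ matching edges, I would proceed bit-by-bit on the four bits of the type, at each stage either immediately extracting a $K_{n,n}$-forcing structure (when the current bit is ``always present'' on a large subset, since that already gives either a clique on one side of the matching or the bipartite half-split described above) or passing to a large subset on which the bit is ``always absent''. The hypothesis $m = 5^{10n}$ is calibrated to provide the necessary slack through the four successive two-color Ramsey extractions, leaving a subset of size at least $n$ on which all four bits are absent, so that $\tau^* = \emptyset$ and the induced matching case applies. Once this quantitative step is verified, the short case analysis above finishes the proof.
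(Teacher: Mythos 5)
The paper does not prove this lemma itself --- it cites it as an observation of Marx and Wollan~\cite{MarxW15} --- so your proposal has to stand on its own. Your setup (orienting each matching edge as $a_ib_i$, recording for each pair $i<j$ which of the four cross-edges $a_ia_j, a_ib_j, b_ia_j, b_ib_j$ are present) and your case analysis are both correct: a set of $2n$ indices on which $a_ia_j$ is always present gives $K_{2n}$ on the $a$-side, an always-present $a_ib_j$ (for $i<j$) gives $K_{n,n}$ via the half-split $\{a_1,\dots,a_n\}$ versus $\{b_{n+1},\dots,b_{2n}\}$, and an empty type on $n$ indices gives an induced matching. The gap is entirely in the quantitative step, and it is genuine: four \emph{successive} two-colour Ramsey extractions compute $R(2n,R(2n,R(2n,R(2n,n))))$, and the asymmetric Ramsey number $R(2n,N)$ grows like $N^{\Theta(n)}$ in its second argument (a $K_{2n}$-free graph on $N$ vertices can have independence number as small as roughly $N^{1/n}$, so nothing better is possible for an extraction of the form ``clique of size $2n$ or independent set of size $N$''). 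Nesting four such extractions therefore yields a bound of order $2^{\Theta(n^4)}$, vastly exceeding $5^{10n}\approx 2^{23.3n}$; the claim that $5^{10n}$ ``is calibrated to provide the necessary slack'' through the successive extractions is not true, and no reordering of the bits repairs this, because the composition itself is the obstruction.

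The fix is to run a \emph{single} multicolour Ramsey argument rather than a sequential one, and to merge your $16$ types into $5$ colours by priority: colour a pair by ``contains $aa$''; otherwise ``contains $bb$''; otherwise ``contains $ab$''; otherwise ``contains $ba$''; otherwise ``empty''. Each of the first four colours forces $K_{n,n}$ from a monochromatic clique of size $2n$ exactly as in your case analysis, and the fifth forces an induced matching from a monochromatic clique of size $n$. The multinomial form of the Ramsey bound, $R(t_1,\dots,t_k)\le\binom{\,\sum_{i}(t_i-1)\,}{t_1-1,\dots,t_k-1}$, then gives
$R(2n,2n,2n,2n,n)\le\binom{9n-5}{2n-1,\,2n-1,\,2n-1,\,2n-1,\,n-1}\le 5^{9n-5}<5^{10n}$,
which is precisely where the base $5$ and the exponent $10n$ in the statement come from. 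So your first instinct (one Ramsey application on the types) was the right one; it only needed the reduction from $16$ colours to $5$ and the asymmetric target $n$ in the empty colour, not the replacement by nested two-colour extractions.
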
 

Now we are ready to prove the main theorem of this section.

\begin{theorem}\label{thm:main-hard}
Let $k$ be a positive integer. Let also $\mathcal{C}$ be a hereditary graph class. Then if the vertex-cover number of   $\mathcal{C}$ is unbounded, then \probWSCA is \classNP-complete for $H\in \mathcal{C}$ in the strong sense.
\end{theorem}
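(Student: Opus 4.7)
The plan is to combine the Ramsey-type Lemma \ref{lem:ramsey} with the previously established NP-hardness results for cliques (Lemma \ref{lem:clique}), bicliques (Lemma \ref{lem:biclique}), and matching graphs (Lemmas \ref{lem:match-hard-one} and \ref{lem:match-hard-two}). Together these four lemmas dispose of any hereditary class $\mathcal{C}$ that contains arbitrarily large $K_n$, $K_{n,n}$, or matching graphs, and all three of these constructions produce strong NP-completeness. So the entire task reduces to showing that every hereditary class $\mathcal{C}$ of unbounded vertex-cover number must contain at least one of these three canonical families as members.

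The first step is the standard observation that the endpoints of a maximum matching form a vertex cover, giving $\beta(H)\leq 2\nu(H)$ for every graph $H$, where $\nu$ is the matching number. Hence unbounded vertex-cover number in $\mathcal{C}$ forces unbounded matching number: for every integer $n$ there is some $H_n\in\mathcal{C}$ with a matching of size at least $5^{10n}$. Applying Lemma \ref{lem:ramsey} to each $H_n$ yields two alternatives: either $H_n$ contains the matching graph on $n$ edges as an \emph{induced} subgraph, or $H_n$ contains $K_{n,n}$ as a (not necessarily induced) subgraph. By the infinite pigeonhole principle, one of these alternatives occurs for infinitely many $n$.

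If the induced-matching alternative holds infinitely often, then heredity of $\mathcal{C}$ immediately places matching graphs of every size in $\mathcal{C}$, and Lemma \ref{lem:match-hard-one} (when $k=1$) or Lemma \ref{lem:match-hard-two} (when $k\geq 2$) finishes the proof. The delicate case is the second one, where the $K_{n,n}$ subgraph is not induced and heredity does not propagate it directly. To get around this I would take a second Ramsey step: letting $A_n\cup B_n$ be the bipartition of the embedded $K_{n,n}$ with $|A_n|=|B_n|=n$, the induced subgraph $H_n[A_n\cup B_n]$ belongs to $\mathcal{C}$ and contains all biclique edges; the classical Ramsey theorem applied separately inside $H_n[A_n]$ and $H_n[B_n]$ produces subsets $A_n'\subseteq A_n$ and $B_n'\subseteq B_n$ of sizes $m(n)\to\infty$ such that each of $H_n[A_n']$ and $H_n[B_n']$ is homogeneous (a clique or an independent set).

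A case analysis on these two homogeneity types then shows that the induced subgraph $H_n[A_n'\cup B_n']$ is either $K_{m,m}$ (both sides independent), $K_{2m}$ (both sides cliques), or a graph containing an induced $K_m$ (exactly one side a clique). A final pigeonhole over the infinitely many $n$ singles out one type as recurring with $m\to\infty$, and by heredity $\mathcal{C}$ then contains arbitrarily large cliques or arbitrarily large bicliques, at which point Lemma \ref{lem:clique} or Lemma \ref{lem:biclique} closes the argument. The main obstacle is precisely this second case: upgrading a non-induced $K_{n,n}$ to an induced clique or biclique inside a hereditary class, which is exactly what the double Ramsey step accomplishes; the remainder is routine bookkeeping.
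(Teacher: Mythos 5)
Your proposal is correct and follows essentially the same route as the paper: the dichotomy from Lemma~\ref{lem:ramsey}, heredity to handle the induced-matching branch, a Ramsey extraction to upgrade the non-induced $K_{n,n}$ to an induced clique or biclique, and then Lemmas~\ref{lem:clique}, \ref{lem:biclique}, \ref{lem:match-hard-one}, and \ref{lem:match-hard-two}. The only cosmetic difference is in the biclique branch, where the paper first disposes of the case that $\mathcal{C}$ contains all cliques and then picks $n=R(p,q)$ to force independent sets on both sides, whereas you apply Ramsey symmetrically to both sides and do a case analysis on the homogeneity types; both amount to the same argument.
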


\begin{proof}
Suppose that the vertex-cover number of   $\mathcal{C}$ is unbounded. It means that for any positive integer $n$, $\mathcal{C}$ contains a graph with a matching $M$ of size at least $n$.
Lemma~\ref{lem:ramsey} implies that  either  $\mathcal{C}$ contains matching graphs of arbitrary size, or for any positive $n$, $\mathcal{C}$ contains a graph $H$ that has a spanning subgraph isomorphic to $K_{n,n}$. If $\mathcal{C}$ contains matching graphs of arbitrary size, then \probWSCA is \classNP-complete for $H\in \mathcal{C}$ by Lemmata~\ref{lem:match-hard-one} and 
\ref{lem:match-hard-two} for $k=1$ and $k\geq 2$ respectively.
Suppose that for any positive $n$, $\mathcal{C}$ contains a graph $H$ that has a spanning subgraph isomorphic to $K_{n,n}$. If $K_r\in \mathcal{C}$ for $r\geq 1$, then \probWSCA is \classNP-complete for $H\in \mathcal{C}$ by Lemma~\ref{lem:clique}. Assume that there is a constant $p\geq 1$ such that $K_r\notin \mathcal{C}$ for $r\geq p$. Then for any positive $q$,  we have that
$\mathcal{C}$ contains a graph $H$ that has a spanning subgraph isomorphic to $K_{n,n}$ for $n=R(p,q)$. It immediately implies that $K_{q,q}$ is an induced subgraph of $H$. Therefore, for every positive $n$, $K_{n,n}\in \mathcal{C}$, and  \probWSCA is \classNP-complete for $H\in \mathcal{C}$ by Lemma~\ref{lem:biclique}. 
\end{proof}

\section{Augmenting unweighted graphs}\label{sec:unweighted}
In this section we investigate unweighted \probStrucAugm and \probStrucAugmtwo.
Let us remind that in the unweighted cases of the structured augmentation problems the task is to identify whether there is a superposition of graphs $G$ and $H$ of  edge connectivity $1$ or $2$, correspondingly.   
In other words,  we have the weight  $\cost(uv)=0$ for every pair of vertices of $G$ and $W=0$. 
  We obtain structural characterizations of yes-instances for both problems.  
 
\subsection{Unweighted \probStrucAugm}
It is convenient to consider the special case when $H$ is connected separately.

\begin{lemma}\label{lem:conn-H}
Let $G$ and $H$ be graphs such that $|V(H)|\leq |V(G)|$ and $H$ is connected.
Then there is an injective mapping $\varphi\colon V(H)\rightarrow V(G)$ such that $F=G\oplus_\varphi H$ is connected if and only if $c(G)\leq |V(H)|$.
\end{lemma}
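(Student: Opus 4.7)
The plan is to prove both implications by direct structural arguments, exploiting the fact that the edges of $F=G\oplus_\varphi H$ split naturally into two kinds: edges of $G$, which stay inside a single component of $G$, and edges of $\varphi(H)$, which are entirely contained in the image $\varphi(V(H))$.

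For the forward direction, I would assume that $F$ is connected and show that every component of $G$ must contain a vertex of $\varphi(V(H))$. Indeed, if some component $C$ of $G$ were disjoint from $\varphi(V(H))$, then no edge of $F$ could leave $V(C)$: edges of $G$ incident to $V(C)$ stay inside $C$, and edges of $\varphi(H)$ have both endpoints in $\varphi(V(H))\subseteq V(G)\setminus V(C)$. This would contradict the connectivity of $F$. Since $\varphi$ is injective, each of the $c(G)$ components of $G$ contributes at least one distinct vertex to $\varphi(V(H))$, so $c(G)\le |V(H)|$.

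For the backward direction, I would assume $c(G)\le |V(H)|$ and construct a suitable $\varphi$ explicitly. Let $C_1,\dots,C_{c(G)}$ be the components of $G$, and choose distinct representatives $u_i\in V(C_i)$. Because $c(G)\le |V(H)|\le |V(G)|$, I can extend $\{u_1,\dots,u_{c(G)}\}$ to an injection $\varphi\colon V(H)\to V(G)$; the order in which the remaining vertices of $H$ are mapped is irrelevant. Since $H$ is connected and $\varphi$ is injective, the subgraph $\varphi(H)$ of $F$ is connected and meets every component of $G$. Hence in $F$ every vertex of $G$ lies in a component of $G$ that is joined to the connected set $\varphi(V(H))$, so $F$ itself is connected.

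There is no real obstacle here; the argument is bookkeeping, and the hypothesis that $H$ is connected enters in exactly one place, namely to guarantee that $\varphi(H)$ is a connected spanning structure through which the components of $G$ get stitched together. The hypothesis $|V(H)|\le|V(G)|$ is used only to ensure that the injection exists.
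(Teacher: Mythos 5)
Your proof is correct and follows essentially the same route as the paper's: the forward direction observes that every component of $G$ must meet $\varphi(V(H))$ (you justify this a bit more explicitly than the paper does), and the backward direction picks one representative per component, maps distinct vertices of $H$ onto them, and extends arbitrarily to an injection. No gaps.
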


\begin{proof}
Suppose that there is an injective mapping $\varphi\colon V(H)\rightarrow V(G)$ such that $F=G\oplus_\varphi H$ is connected. Then for each component $G'$ of $G$, there is $v\in V(G')$ such that $v\in \varphi(V(H))$. Since $\varphi$ is injective, then $c(G)\leq |V(H)|$.

Assume now that $c(G)\leq |V(H)|$.  Let $G_1,\ldots,G_s$ be the components of $G$. Since $|V(H)|\leq |V(G)|$, there are distinct vertices $x_1,\ldots,x_s\in V(H)$.
We select arbitrarily a vertex $v_i\in V(G_i)$ for $i\in\{i,\ldots,s\}$.
We construct the injective mapping $\varphi\colon V(H)\rightarrow V(G)$ as follows. We set $\varphi(x_i)=v_i$ for $i\in\{1,\ldots,s\}$ and then extend $\varphi$ on other vertices of $H$ selecting their images in $V(G)\setminus\{v_1,\ldots,v_s\}$ arbitrarily. It is straightforward to verify that  $F=G\oplus_\varphi H$ is connected.
\end{proof}

Now we consider the case when $H$ is not connected.

\begin{lemma}\label{lem:disconn-H}
Let $G$ and $H$ be graphs such that $|V(H)|\leq |V(G)|$, $H$ has no isolated vertices and is disconnected.
Then there is an injective mapping $\varphi\colon V(H)\rightarrow V(G)$ such that $F=G\oplus_\varphi H$ is connected if and only if 
\begin{itemize}
\item[(i)] $i(G)\leq |V(H)|-c(H)$, and
\item[(ii)] $c(G)\leq |V(H)|-c(H)+1$.
\end{itemize}
\end{lemma}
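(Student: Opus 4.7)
My plan is to analyze both directions through the bipartite auxiliary graph $B$ on the vertex set $\{K_1,\dots,K_{c(G)}\}\cup\{C_1,\dots,C_{c(H)}\}$ (components of $G$ and $H$ respectively), where $K_iC_j$ is an edge of $B$ exactly when $\varphi(V(C_j))\cap V(K_i)\neq\emptyset$. A walk-tracking argument (each walk in $F$ alternates between paths inside some $K_i$ and $H$-edges inside a single $\varphi(V(C_j))$) shows that $F=G\oplus_\varphi H$ is connected if and only if $B$ is connected.

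For necessity, connectivity of $B$ forces $|E(B)|\ge c(G)+c(H)-1$, while $|E(B)|=\sum_j\deg_B(C_j)\le\sum_j|V(C_j)|=|V(H)|$; combining the two yields (ii). For (i) I would argue that every $H$-component $C_j$ must send at least one vertex to a non-isolated $G$-vertex: otherwise $\varphi(V(C_j))$ consists entirely of $G$-isolated vertices, has no $G$-edges leaving it, and no $H$-edges leaving it (edges of $H$ stay inside components), hence forms a full connected component of $F$; this forces $V(G)=\varphi(V(C_j))$, contradicting $|V(C_j)|<|V(H)|\le|V(G)|$ since $H$ is disconnected. Thus at most $|V(H)|-c(H)$ vertices of $H$ map onto isolated $G$-vertices. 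Every isolated $G$-vertex must itself lie in $\varphi(V(H))$ (otherwise it remains isolated in $F$, violating connectivity as $|V(G)|\ge|V(H)|\ge 4$), so that count equals $i(G)$, giving (i).

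For sufficiency I would construct $\varphi$ in three stages. Condition (i) gives $i(G)<|V(H)|\le|V(G)|$, so $G$ has at least one non-trivial component; list these as $K_1,\dots,K_m$ with $m\ge 1$ and the isolated vertices as $u_1,\dots,u_{i(G)}$. In \emph{Stage A} I greedily choose, for each $k=1,\dots,i(G)$, an unused vertex from some $C_j$ that still has at least two unused vertices, and map it to $u_k$; the inequality $\sum_j(|V(C_j)|-1)=|V(H)|-c(H)\ge i(G)$ from (i) guarantees this never exhausts a $C_j$, so each $C_j$ retains a remaining budget $n_j'\ge 1$. In \emph{Stage B} the graph $B$ already has $m+c(H)$ super-components (each $K_i$ is isolated; each $C_j$ together with its attached $u_k$'s is one piece), so I add $m+c(H)-1$ further mappings of the form $(C_j,K_i)$ forming a bipartite spanning tree on $\{K_1,\dots,K_m\}\cup\{C_1,\dots,C_{c(H)}\}$, each realized by sending an unused vertex of $C_j$ to an unused vertex of $K_i$. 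In \emph{Stage C} I map the leftover $H$-vertices injectively into the still-unused slots of the $K_i$'s, which is possible because $|V(H)|\le|V(G)|$. Since $B$ then contains a spanning tree, $F=G\oplus_\varphi H$ is connected.

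The main obstacle is Stage B: showing that a bipartite spanning tree respecting both the remaining $C_j$-budgets $n_j'$ and the $K_i$-sizes $|V(K_i)|$ actually exists. I would reduce this to realizing degree sequences $1\le d_j\le n_j'$ and $1\le d'_i\le |V(K_i)|$ summing to $m+c(H)-1$ on each side; feasibility follows from $\sum n_j'=|V(H)|-i(G)\ge m+c(H)-1$ (which is (ii)) and $\sum|V(K_i)|=|V(G)|-i(G)\ge m+c(H)-1$ (from $|V(H)|\le|V(G)|$ combined with (ii)). A bipartite tree with any such degree sequences is then built by the standard leaf-removal induction, completing the construction.
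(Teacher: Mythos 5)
Your proof is correct. The necessity direction is essentially the paper's own argument: the same component-incidence bipartite graph (called $R$ in the paper) yields (ii) via $c(G)+c(H)=|V(R)|\le|E(R)|+1\le|V(H)|+1$, and (i) comes from the same observation that an $H$-component mapped entirely onto isolated vertices of $G$ would form a proper component of $F$. Where you genuinely diverge is sufficiency. The paper constructs $\varphi$ by hand: it first isolates the boundary case $i(G)=|V(H)|-c(H)$ (where $G$ is forced to consist of $i(G)$ isolated vertices plus one non-trivial component), and in the case $i(G)<|V(H)|-c(H)$ it orders the components of $H$, spends a prefix of them on the isolated vertices, and then chains the non-trivial components of $G$ together by placing pairs of vertices from successive $H$-components into consecutive $G$-components, with a further subcase depending on whether the non-trivial $G$-components outnumber the remaining $H$-components. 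You instead promote the incidence graph $B$ to the organizing object of the whole proof (``$F$ connected iff $B$ connected''), cover the isolated vertices greedily using exactly the slack $|V(H)|-c(H)-i(G)\ge 0$ from (i), and reduce the rest to realizing a bipartite spanning tree with degree bounds given by the leftover budgets, whose feasibility is precisely inequality (ii) rearranged as $m+c(H)-1\le|V(H)|-i(G)$. Your route buys a uniform, case-free argument in which each hypothesis is consumed exactly once and the connectivity certificate (a spanning tree of $B$) is explicit, at the modest cost of invoking the standard degree-prescribed bipartite tree realization; the paper's construction is fully self-contained but its correctness is spread over several interleaved subcases and is harder to audit.
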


\begin{proof}
Suppose that there is an injective mapping $\varphi\colon V(H)\rightarrow V(G)$ such that $F=G\oplus_\varphi H$ is connected. 

We prove  (i) by contradiction. Assume that $i(G)>|V(H)|-c(H)$. Then there is a component $H'$ of $H$ such that every vertex of $\varphi(V(H'))$ is an isolated vertex of $G$. We obtain that  $F[\varphi(V(H'))]$ isomorphic to $H'$ is a component of $F$ contradicting the connectivity of $F$.

To show (ii), denote by $G_1,\ldots,G_s$ and $H_1,\ldots,H_r$ the components of $G$ and $H$ respectively. Consider the auxiliary bipartite graph $R$ whose vertices are the components of $G$ and $H$,  $\{G_1,\ldots,G_s\}$ and $\{H_1,\ldots,H_r\}$ form the bipartition of the vertex set, and $G_i$ is adjacent to $H_j$ if and only if $\varphi$ maps a vertex of $H_j$ to a vertex of $G_i$.
Since $F$ is connected, we obtain that $R$ is connected as well. Therefore, $|V(R)|\leq |E(R)|+1$. Because $\varphi$ is an injection, $|E(R)|\leq |V(H)|$. Then
$$c(G)+c(H)=s+r=|V(R)|\leq |E(R)|+1\leq |V(H)|+1$$
and (ii) follows.

Suppose now that (i) and (ii) are fulfilled. Denote by $H_1,\ldots,H_r$ the components of $H$. 

Assume that $i(G)=|V(H)|-c(H)$. By (ii), we have that $G$ has at most $i(G)+1$ components.  Hence, because $|V(H)|\leq|V(G)|$, $G$ has exactly $i(G)+1$ components: $i(G)$ isolated vertices and a component $G'$ with at least $c(H)$ vertices. We select a vertex $x_i$ in each component $H_i$ for $i\in \{1,\ldots,s\}$ and $s$ distinct vertices $v_1,\ldots,v_s$ in $G'$. We  
construct the injective mapping $\varphi\colon V(H)\rightarrow V(G)$ as follows. We set $\varphi(x_i)=v_i$ for $i\in\{1,\ldots,s\}$ and then extend $\varphi$ on other vertices of $H$ by mapping them into isolated vertices of $G$.  It is straightforward to verify that  $F=G\oplus_\varphi H$ is connected.

Suppose from now that $i(G)<|V(H)|-c(H)$. We select the minimum $h\in\{1,\ldots,r\}$ such that $(\sum_{j=1}^h|V(H_j)|)-h>i(G)$. For each $i\in\{1,\ldots,h-1\}$, we select $x_i\in V(H_i)$ if $h>1$.
We start constructing the injective mapping $\varphi\colon V(H)\rightarrow V(G)$ by mapping the vertices of $V(H_i)\setminus\{x_i\}$ to isolated vertices of $G$. Then we map $|V(H_h)|-\sum_{i=1}^{h-1}(|V(H_i)|-1)$ vertices of $H_h$ to the remaining isolated vertices of $G$. Notice that by the choice of $h$, at least 2 vertices of $H_h$ and the vertices of $H_{h_1},\ldots,H_r$ are not mapped yet. Denote by $W$ the set of these vertices.
 Recall also that each component of $H$ has at least 2 vertices. Denote by $G_1,\ldots,G_s$ the components of $G$ with at least 2 vertices each. Since   $|V(H)|\leq |V(G)|$, we have that  these components exist and that the total number of vertices in these components is at least $W+(h-1)$. By (ii), we have that 
$s+i(G)\leq |V(H)|-c(H)+1$. Therefore, $s\leq |W|-(r-h+1)+1$.
   
If $s\leq r-h+2$, then we select  $v_i\in V(G_i)$ for $i\in\{1,\ldots,s-1\}$ and $v_i'\in V(G_i)$ for $i\in\{2,\ldots,s\}$ such that $v_i\neq v_i'$. Then for each $i\in\{1,\ldots,s-1\}$, we pick two vertices in $H_{h+i-1}$ and map them to $v_i$ and $v_{i+1}'$ respectively. The remaining vertices of $W$ and the vertices $x_1,\ldots,x_{h-1}$ are mapped into distinct vertices of $G$ that were not used for constructing $\varphi$ yet. It is again straightforward to see that $F=G\oplus_\varphi H$ is connected.

Suppose that $s> r-h+2$. We select  $v_i\in V(G_i)$ for $i\in\{1,\ldots,r-h+1\}$ and $v_i'\in V(G_i)$ for $i\in\{2,\ldots,r-h+2\}$ such that $v_i\neq v_i'$. Then for each $i\in\{h,\ldots,r\}$, we pick two vertices in $H_i$ and map them to $v_{i-h+1} $ and $v_{i-h+2}'$ respectively. For every $i\in \{r-h+3,\ldots,s\}$, we pick a vertex $u_i\in G_i$ and $y_i\in W$ that is not mapped yet. Notice that since $s\leq |W|-(r-h+1)+1$, this selection is possible. Then we set $\phi(y_i)=u_i$ for $i\in \{r-h+3,\ldots,s\}$. The remaining vertices of $W$ and the vertices $x_1,\ldots,x_{h-1}$ are mapped into distinct vertices of $G$ that were not used for constructing $\varphi$ yet. Again, we have that $F=G\oplus_\varphi H$ is connected. 
\end{proof}

Lemmata~\ref{lem:conn-H} and \ref{lem:disconn-H} 
immediately imply the following theorem.

\begin{theorem}\label{thm:SCA-one}
Let $G$ and $H$ be graphs such that $H$ has no isolated vertices and $|V(H)|\leq|V(G)|$. 
Then there is an injective mapping $\varphi\colon V(H)\rightarrow V(G)$ such that $F=G\oplus_\varphi H$ is connected if and only if 
  $c(G)\leq |V(H)|-c(H)+1$ and one of the following holds:
\begin{itemize}
\item[(i)] $H$ is connected,
\item[(ii)] $H$ is disconnected graph and $i(G)\leq |V(H)|-c(H)$.
\end{itemize}
\end{theorem}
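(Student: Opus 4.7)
The plan is to derive this theorem as an immediate corollary of Lemmata~\ref{lem:conn-H} and \ref{lem:disconn-H} by a case split on whether $H$ is connected. Under the standing hypotheses $|V(H)| \le |V(G)|$ and $H$ having no isolated vertices, each lemma supplies a complete characterization in its respective regime, so the task is just to check that the unified formulation of the theorem coincides with those characterizations.

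First, I would treat the case where $H$ is connected. Then $c(H)=1$, and the theorem's inequality $c(G)\le |V(H)|-c(H)+1$ simplifies to $c(G) \le |V(H)|$, which is precisely the condition in Lemma~\ref{lem:conn-H}. Since condition (i) of the theorem already holds by assumption and no further requirement is imposed, the equivalence between existence of a connectivity-achieving $\varphi$ and the stated inequality is an immediate instance of Lemma~\ref{lem:conn-H}.

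Next, I would treat the case where $H$ is disconnected. Lemma~\ref{lem:disconn-H} asserts that a suitable $\varphi$ exists if and only if both $i(G)\le |V(H)|-c(H)$ and $c(G)\le |V(H)|-c(H)+1$ are satisfied. These are exactly the overarching inequality of the theorem combined with condition (ii), so once again the equivalence is immediate.

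I do not foresee any real obstacle here, since this is pure bookkeeping. The only subtlety worth flagging is that the two cases are exhaustive for an $H$ with no isolated vertices satisfying $|V(H)| \le |V(G)|$ — which is clear, since every graph is either connected or not — and that one must correctly identify that in the connected subcase the quantity $|V(H)|-c(H)+1$ reduces to $|V(H)|$ so as to match Lemma~\ref{lem:conn-H} verbatim.
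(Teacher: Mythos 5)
Your proposal is correct and matches the paper exactly: the paper states that Lemmata~\ref{lem:conn-H} and \ref{lem:disconn-H} immediately imply the theorem, and your case split on the connectivity of $H$, together with the observation that $|V(H)|-c(H)+1=|V(H)|$ when $c(H)=1$, is precisely that derivation spelled out.
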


The next statement is a straightforward corollary of Theorem~\ref{thm:SCA-one}.

\begin{corollary}\label{cor:SCA-one}
Unweighted \probStrucAugm is  solvable in time $\Oh(|V(G)|+|E(G)|+|E(H)|)$.
\end{corollary}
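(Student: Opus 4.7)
The plan is to directly implement the structural characterization given by Theorem~\ref{thm:SCA-one}: the conditions that need to be checked involve only the quantities $|V(G)|$, $|V(H)|$, $c(G)$, $c(H)$, $i(G)$, and whether $H$ has isolated vertices, all of which can be computed in linear time by one pass through the adjacency structures of $G$ and $H$.

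More concretely, the algorithm would proceed as follows. First, read the input and verify that $|V(H)|\leq |V(G)|$; otherwise return NO (there is no injective mapping at all). Next, compute $i(H)$ by scanning the degree sequence of $H$ and delete the isolated vertices of $H$ to obtain a graph $H^\star$ on which the criterion of Theorem~\ref{thm:SCA-one} actually applies (as justified in the preliminaries, isolated vertices of $H$ do not influence the connectivity of the superposition and can be mapped to arbitrary leftover vertices of $G$, which exist because $|V(H)|\leq |V(G)|$). Then compute $c(G)$, $c(H^\star)$, and $i(G)$ by running a single BFS/DFS on $G$ (keeping a counter for components of size one) and a single BFS/DFS on $H^\star$. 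Finally, test whether $c(G)\leq |V(H^\star)|-c(H^\star)+1$ and, in case $H^\star$ is disconnected, additionally whether $i(G)\leq |V(H^\star)|-c(H^\star)$; accept exactly when Theorem~\ref{thm:SCA-one} accepts.

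The correctness follows directly from Theorem~\ref{thm:SCA-one} applied to the pair $(G,H^\star)$, combined with the preliminary observation that a mapping of isolated vertices of $H$ to any leftover vertices of $G$ extends any valid $\varphi$ on $H^\star$ to a valid one on $H$ without changing the edge set of the superposition. For the running time, reading the input and the two graph traversals each take $\Oh(|V(G)|+|E(G)|)$ and $\Oh(|V(H)|+|E(H)|)$ time respectively, so the whole procedure runs in $\Oh(|V(G)|+|E(G)|+|E(H)|)$ time (using $|V(H)|\leq|V(G)|$). There is no real obstacle here: the only mild point to be careful about is the handling of isolated vertices of $H$ in the preprocessing step, so that Theorem~\ref{thm:SCA-one} is applied to a graph with no isolated vertices, as its hypothesis requires.
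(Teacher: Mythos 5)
Your proposal is correct and matches the paper's (implicit) argument: the paper treats this as a straightforward consequence of Theorem~\ref{thm:SCA-one}, obtained by computing $c(G)$, $c(H)$, $i(G)$ and the vertex counts via linear-time graph traversals and checking the stated inequalities. Your extra care with stripping isolated vertices of $H$ first is exactly the preprocessing the paper assumes in its preliminaries, so nothing is missing.
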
 

\subsection{Unweighted \probStrucAugmtwo}\label{sec:unweightwoconnect}
Now we consider the case  \probStrucAugmtwo. 
Our structural results are based on the following observation.

\begin{observation}\label{obs:pend}
Let $G$ and $H$ be graphs and let $\varphi\colon V(H)\rightarrow V(G)$ be an injective mapping such that $F=G\oplus_\varphi H$ is $2$-connected.
Then for every pendant biconnected component $G'$ of $G$, there is $x\in V(H)$ such that $\varphi(x)\in V(G')$.
\end{observation}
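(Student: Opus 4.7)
The plan is to argue by contradiction: suppose there is a pendant biconnected component $G'$ of $G$ such that no vertex of $\varphi(V(H))$ lies in $V(G')$. I will then exhibit a bridge of $F$, contradicting the assumption that $F$ is $2$-edge-connected (equivalently, bridgeless).

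First, I recall that by definition a pendant biconnected component $G'$ is incident to a unique bridge $e=uv$ of $G$, where we may assume $u\in V(G')$ and $v\notin V(G')$. In particular, in $G$ the set of edges leaving $V(G')$ consists exactly of $\{e\}$. Now look at the edges of $F=G\oplus_\varphi H$ that have exactly one endpoint in $V(G')$. By the definition of superposition, every such edge is either an edge of $G$, in which case it must be $e$, or it is of the form $\varphi(x)\varphi(y)$ for some $xy\in E(H)$, in which case at least one of $\varphi(x),\varphi(y)$ lies in $V(G')$. Under our contrary assumption, the second case cannot occur, so $e$ is still the only edge of $F$ between $V(G')$ and $V(G)\setminus V(G')$.

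Consequently $e$ is a bridge of $F$: removing $e$ disconnects $V(G')$ from the rest of $V(F)$. This contradicts the $2$-edge-connectivity of $F$ (equivalently, by Observation~\ref{obs:cover-bridge} applied with $k=2$, the minimal separator $\{e\}$ of $G$ must be covered by some edge of $F$, and any such covering edge would have exactly one endpoint in $V(G')$). Hence some vertex of $\varphi(V(H))$ must lie in $V(G')$, as required.

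The argument is essentially a one-line consequence of the definition of superposition combined with Observation~\ref{obs:cover-bridge}; there is no substantial obstacle, only the need to be careful that no edge of $\varphi(H)$ accidentally slips across the bridge $e$ when no image of $\varphi$ is placed inside $V(G')$, which is immediate from the fact that an edge $\varphi(x)\varphi(y)$ requires both endpoints to be images under $\varphi$.
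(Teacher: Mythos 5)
Your argument is correct: the unique bridge incident to the pendant biconnected component remains the only edge of $F$ leaving $V(G')$ when no image of $\varphi$ lands in $V(G')$, so it is a bridge of $F$, contradicting $2$-connectivity. The paper states this as an observation without proof, and your reasoning is exactly the intended justification.
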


In particular, Observation~\ref{obs:pend} implies the following.

\begin{observation}\label{obs:pend-cover}
Let $G$ and $H$ be graphs 
and let $\varphi\colon V(H)\rightarrow V(G)$ be an injective mapping with the property that $F=G\oplus_\varphi H$ is $2$-connected.
Then $p(G)\leq |V(H)|$.
\end{observation}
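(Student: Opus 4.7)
The plan is to derive this observation as a direct counting consequence of Observation~\ref{obs:pend}, exploiting the fact that the biconnected components used in this paper are vertex-disjoint (since they are defined as the connected components of $G-B$, where $B$ is the set of bridges, rather than as overlapping blocks in the classical sense). First I would list the pendant biconnected components $L_1,\ldots,L_{p(G)}$ of $G$ and note that $V(L_i)\cap V(L_j)=\emptyset$ whenever $i\neq j$.

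Next, I would apply Observation~\ref{obs:pend} to each $L_i$ individually: since $F=G\oplus_\varphi H$ is $2$-connected, there exists a vertex $x_i\in V(H)$ with $\varphi(x_i)\in V(L_i)$. Because the sets $V(L_i)$ are pairwise disjoint and $\varphi$ is injective, the vertices $x_1,\ldots,x_{p(G)}$ are pairwise distinct elements of $V(H)$. This immediately yields $p(G)\leq |V(H)|$.

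The only point that needs any care is the vertex-disjointness of the pendant biconnected components, which is immediate from the definition adopted in Section~\ref{sec:defs} but differs from the textbook notion of blocks; once this is observed, there is no real obstacle and the bound follows from a one-line pigeonhole argument.
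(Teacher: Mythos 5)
Your proof is correct and follows exactly the route the paper intends: the paper states this observation as an immediate consequence of Observation~\ref{obs:pend}, and your argument — apply that observation to each pendant biconnected component, then use the pairwise disjointness of these components (as components of $G-B$) together with the injectivity of $\varphi$ to get $p(G)$ distinct vertices of $V(H)$ — is precisely the intended one-line pigeonhole justification.
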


To simplify the proofs of our structural lemmata, we use the following straightforward observation.

\begin{observation}\label{obs:subgr}
Let $G$ and $H$ be graphs such that $|V(H)|\leq |V(G)|$. If $H$ has a subgraph $H'$ such that 
 there is an injective mapping $\varphi\colon V(H')\rightarrow V(G)$ with the property that $F'=G\oplus_\varphi H'$ is $k$-connected, then for every injective extension $\psi$ of $\varphi$ on $V(H)$, $F=G\oplus_\varphi H$ is $k$-connected.
\end{observation}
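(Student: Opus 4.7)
The plan is to derive Observation~\ref{obs:subgr} as an immediate consequence of the definition of superposition $\oplus$ together with the monotonicity of edge connectivity under edge addition (recall that in this paper ``$k$-connected'' is shorthand for edge $k$-connected, by the convention stated in Section~\ref{sec:defs}). The key point is that replacing $\varphi$ by an injective extension $\psi$ only adds edges to the superposition, and adding edges never decreases edge connectivity.

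First, I would observe that $V(F')=V(G)=V(F)$, so both superpositions live on the same vertex set and it is meaningful to compare them as subgraphs. Next, I would establish the inclusion $E(F')\subseteq E(F)$. Take any edge $uv\in E(F')$. By the definition of $\oplus$, either $uv\in E(G)$, in which case clearly $uv\in E(F)$, or $u,v\in\varphi(V(H'))$ and $\varphi^{-1}(u)\varphi^{-1}(v)\in E(H')$. In the latter case, because $\psi$ extends $\varphi$ we have $\psi(x)=\varphi(x)$ for every $x\in V(H')$, so $u,v\in\psi(V(H))$ and $\psi^{-1}(u)\psi^{-1}(v)=\varphi^{-1}(u)\varphi^{-1}(v)$; since $H'$ is a subgraph of $H$, this pair is an edge of $H$. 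Hence $uv\in E(F)$ by the definition of $F=G\oplus_{\psi}H$, and $F'$ is a spanning subgraph of $F$.

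Finally, I would invoke the elementary fact that edge connectivity is monotone under addition of edges: if $F'$ is a spanning subgraph of $F$ and $F'$ is edge $k$-connected, then so is $F$. Concretely, for any $S\subseteq E(F)$ with $|S|\leq k-1$, the restricted set $S':=S\cap E(F')$ satisfies $|S'|\leq k-1$, so $F'-S'$ is connected by the $k$-edge-connectivity of $F'$. As $F'-S'$ is a spanning subgraph of $F-S$, connectivity transfers upward and $F-S$ is connected as well, which is exactly what it means for $F$ to be edge $k$-connected.

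There is no real obstacle: the entire content of the observation is the purely formal verification that an extension $\psi$ of $\varphi$ cannot remove any edge from the superposition, i.e.\ $E(F')\subseteq E(F)$; after this inclusion is recorded, monotonicity of edge connectivity closes the argument in one line.
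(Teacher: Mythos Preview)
Your proof is correct and is exactly the routine verification the paper has in mind; the paper itself does not supply a proof at all, calling the observation ``straightforward'' and moving on. Note also that you silently (and correctly) repaired the evident typo in the statement, reading $F=G\oplus_{\psi}H$ rather than $G\oplus_{\varphi}H$, since $\varphi$ is only defined on $V(H')$.
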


This allows us to use the following strategy to increase the connectivity of a graph $G$. Let $\ell=p(G)\geq 2$. We select $\ell$ pairwise nonadjacent vertices $v_1,\ldots,v_\ell$ in distinct pendant biconnected components of $G$. Then we find an induced subgraph $H'$ of $H$ with $\ell$ vertices and construct a bijection  $\varphi\colon V(H')\rightarrow \{v_1,\ldots,v_\ell\}$ with the property that $F=G\oplus_\varphi H'$ is 2-connected. Notice that if $H_1,\ldots,H_r$ are the components of $H'$, then $F=G\oplus_\varphi H'=(\ldots((G\oplus_{\varphi_1} H_1)\oplus_{\varphi_2}H_2)\ldots \oplus_{\varphi_r}H_r$ where $\varphi_i=\varphi|_{V(H_i)}$ for $i\in\{1,\ldots,r\}$. The construction of $\varphi$ is inductive and is based on the following lemma.

\begin{lemma}\label{lem:components}
Let $G$ be a connected graph with $\ell=p(G)\geq 2$. Let also $\mathcal{P}=\{P_1,\ldots,P_\ell\}$ be the set of pendant biconnected components of $G$, $v_i\in V(P_i)$ for $i\in\{1,\ldots,\ell\}$ and $v_1,\ldots,v_\ell$ are pairwise nonadjacent. Let also $H$ be a connected graph such that $2\leq|V(H)|\leq\ell$ and $\ell-|V(H)|\neq 1$. 
Then there is an injective mapping $\varphi\colon V(H)\rightarrow\{v_1,\ldots,v_\ell\}$ such that for $F=G\oplus_\varphi H$, the set of  pendant biconnected components is $\mathcal{P}\setminus\{P_i\mid v_i\in \varphi(V(H))\}$.
\end{lemma}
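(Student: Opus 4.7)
The plan is to exploit the block-bridge structure of $G$. Let $T_G$ be the tree whose nodes are the biconnected components of $G$ and whose edges are the bridges of $G$; its leaves are exactly the pendants $P_1,\ldots,P_\ell$. Given any injection $\varphi\colon V(H)\to\{v_1,\ldots,v_\ell\}$, set $I=\{i:v_i\in\varphi(V(H))\}$ and $E=\{1,\ldots,\ell\}\setminus I$, and let $S\subseteq T_G$ be the smallest subtree containing $\{P_i:i\in I\}$. Because $H$ is connected and has no isolated vertices, the edges $\varphi(H)$ added to $G$ form a connected graph on $\{v_i:i\in I\}$, so by iterating Observations~\ref{obs:lie} and~\ref{obs:edge-add} along a spanning tree of $\varphi(H)$, the bridges of $G$ that get covered in $F:=G\oplus_\varphi H$ are exactly the edges of $S$; the biconnected components of $G$ corresponding to the nodes of $S$ merge into a single biconnected component $Q$ of $F$, while every biconnected component and every bridge of $G$ outside $S$ persists in $F$.

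Consequently, the block-bridge tree $T_F$ of $F$ is obtained from $T_G$ by contracting $S$ to the vertex $Q$, so the pendants of $F$ are $\{P_j:j\in E\}$ together with $Q$ itself exactly when $Q$ is a leaf of $T_F$, that is, when $S$ has exactly one external edge in $T_G$ (an external edge being an edge of $T_G$ with one endpoint in $S$ and the other outside). When $S=T_G$ the graph $F$ is bridgeless and $Q$ is not a pendant, and when $S$ has at least two external edges $Q$ has degree $\geq 2$ in $T_F$ and is again not a pendant. Therefore it suffices to construct $\varphi$ so that either $S=T_G$ or $S$ has at least two external edges in $T_G$.

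If $k:=|V(H)|=\ell$, I take $\varphi$ to be any bijection $V(H)\to\{v_1,\ldots,v_\ell\}$, which yields $S=T_G$. Otherwise the hypothesis $\ell-|V(H)|\neq 1$ forces $m:=\ell-k\geq 2$, hence $\ell\geq 4$, which (by a standard degree-sum argument: a tree all of whose internal nodes have degree $2$ is a path and has exactly two leaves) ensures that $T_G$ has an internal node $c$ of degree $d\geq 3$. Let $T_1,\ldots,T_d$ be the components of $T_G-c$ and let $\ell_i\geq 1$ be the number of leaves of $T_G$ in $T_i$, so $\sum_i\ell_i=\ell$. I aim to choose $I$ so that the $k$ included leaves occupy at least two distinct subtrees $T_i$ and the $m$ excluded leaves occupy at least two distinct subtrees $T_j$; the first condition forces $c\in S$, and then any two excluded leaves lying in different $T_j$'s are separated by $c\in S$, so they land in different components of $T_G\setminus S$, delivering at least two external edges of $S$.

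The remaining technical step---and the main obstacle---is to verify that such a distribution always exists. If $T_G$ is a star at $c$ (all $\ell_i=1$), any partition of the $\ell=k+m$ leaves into $k$ included and $m$ excluded works, since $k,m\geq 2$ already give at least two occupied subtrees on each side. Otherwise some $\ell_j\geq 2$; because $d\geq 3$ and each $\ell_i\geq 1$, the remaining subtrees contain $\ell-\ell_j\geq d-1\geq 2$ leaves, so the integer interval $[\max(1,\ell_j-m+1),\min(k-1,\ell_j-1)]$ is non-empty (using $k,m\geq 2$ and $\ell-\ell_j\geq 2$). Picking $I_j$ in this interval makes $T_j$ contribute both an included leaf and an excluded leaf, while the leftover $k-I_j\geq 1$ included and $m-(\ell_j-I_j)\geq 1$ excluded leaves can be distributed arbitrarily over $\bigcup_{i\neq j}T_i$, forcing at least one further subtree to receive an included leaf and at least one to receive an excluded leaf. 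Any bijection from $V(H)$ onto the resulting $k$-element subset of $\{v_1,\ldots,v_\ell\}$ is then the required mapping $\varphi$.
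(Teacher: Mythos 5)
Your proof is correct and follows essentially the same route as the paper's: both contract the biconnected components to the block tree, dispose of the case $|V(H)|=\ell$ with an arbitrary bijection, and in the case $|V(H)|\le\ell-2$ choose the image set so that the subtree spanned by the chosen leaves is left incident to at least two uncovered bridges, whence the merged block is not a pendant. The only divergence is the selection step --- the paper picks two leaves whose connecting path separates two excluded leaves, while you balance included and excluded leaves around an internal node of degree at least $3$ --- and your version spells out the counting argument that the paper asserts without proof.
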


\begin{proof}
Suppose first that $|V(H)|=\ell$. Because $H$ is connected and the vertices $v_1,\ldots,v_\ell$ are pairwise nonadjacent, we have that for any bijection $\varphi\colon V(H)\rightarrow\{v_1,\ldots,v_\ell\}$, $F=G\oplus_\varphi H$ is 2-connected by Observations~\ref{obs:cover-bridge} and \ref{obs:edge-add}. Then the set of pendant biconnected components of $F$ is empty and the claim of the lemma holds. 

Assume from now that $|V(H)|<\ell$. Since $\ell-|V(H)|\neq 1$, $|V(H)|\leq\ell-2$.

Consider the graph $T$ obtained by contracting edges of each biconnected component of $G$. To simplify notations, assume that the vertex obtained by contracting of each $P_i$ is $v_i$ for $i\in\{1,\ldots,\ell\}$. Notice that $v_1,\ldots,v_\ell$ are the leaves of $T$. Notice also that the edges of $T$ are exactly the bridges of $G$.
Also for every $e\in E(T)$, we have the following property: $e$ belongs to a $(v_i,v_j)$-path in $G$ if and only if $e$ belongs to the unique $(v_i,v_j)$-path in $T$.

Since $\ell\geq |V(H)|+2\geq 4$, there are two distinct leaves $v_s$ and $v_t$ of $T$ such that for the unique $(v_s,v_t)$-path $P$ in $T$, there are two leaves $v_i$ and $v_j$ that are in the distinct components of $T-V(P)$. We select $L\subseteq \{v_1,\ldots,v_\ell\}$ of size  $|V(H)|$ such that $v_s,v_t\in L$ and $v_i,v_j\notin L$. Let $\varphi\colon V(H)\rightarrow L$ be a bijection.
Because $H$ is connected, by Observation~\ref{obs:edge-add},  the vertices of $G$ from the biconnected components that are crossed by $(v_p,v_p)$-paths in $G$ for $v_p,v_q\in L$
 induce a biconnected component $Q$ of $F=G\oplus_\varphi H$. Moreover,  because of the choice of $v_i$ and $v_j$, at least two bridges of $G$ have incident vertices in $Q$. Therefore, $\mathcal{P}\setminus\{P_i\mid v_i\in \varphi(V(H))\}$ is the set of pendant biconnected components of $F$.
\end{proof}

Using Lemma~\ref{lem:components}, we obtain the next lemma.

\begin{lemma}\label{lem:comp-connect}
Let $G$ be a connected graph with $\ell=p(G)\geq 2$. Let also $\mathcal{P}=\{P_1,\ldots,P_\ell\}$ be the set of pendant biconnected components of $G$, $v_i\in V(P_i)$ for $i\in\{1,\ldots,\ell\}$ and $v_1,\ldots,v_\ell$ are pairwise nonadjacent. Let also $H$ be a  graph with $\ell$ vertices such that each component of $H$ contains at least 2 vertices. 
Then there is a bijection $\varphi\colon V(H)\rightarrow\{v_1,\ldots,v_\ell\}$ such that $F=G\oplus_\varphi H$ is $2$-connected.
\end{lemma}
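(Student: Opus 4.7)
The plan is to construct $\varphi$ iteratively, one connected component of $H$ at a time, each time invoking Lemma~\ref{lem:components}. Let $H_1,\ldots,H_r$ be the connected components of $H$, with sizes $\ell_j=|V(H_j)|\geq 2$, so that $\sum_{j=1}^{r}\ell_j=\ell$. Set $F_0=G$ and process the components in any order. At step $j$, having built $F_{j-1}$, I apply Lemma~\ref{lem:components} to the connected graph $H_j$ and the graph $F_{j-1}$, obtaining an injective mapping $\varphi_j\colon V(H_j)\to\{v_1,\ldots,v_\ell\}$ onto $\ell_j$ of the still-unused vertices, and define $F_j=F_{j-1}\oplus_{\varphi_j}H_j$. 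Concatenating $\varphi_1,\ldots,\varphi_r$ gives the desired bijection $\varphi$ with $F_r=G\oplus_\varphi H$.

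The bookkeeping to carry along inductively is that, entering step $j$, the set $\mathcal{P}_{j-1}$ of pendant biconnected components of $F_{j-1}$ equals $\{P_i:i\in I_{j-1}\}$ where $I_{j-1}\subseteq\{1,\ldots,\ell\}$ is the set of indices whose $v_i$ has not yet been used; moreover the chosen vertex inside $P_i$ is still $v_i$, and the vertices $\{v_i:i\in I_{j-1}\}$ remain pairwise nonadjacent in $F_{j-1}$ (previous steps only added edges incident to already-used $v_i$'s). This invariant is preserved by Lemma~\ref{lem:components}, which describes the new set of pendants exactly as $\mathcal{P}_{j-1}\setminus\{P_i:v_i\in\varphi_j(V(H_j))\}$.

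The only nontrivial point is to verify the hypotheses of Lemma~\ref{lem:components} at every step: namely $p(F_{j-1})\geq 2$, $2\leq\ell_j\leq p(F_{j-1})$, and $p(F_{j-1})-\ell_j\neq 1$. Here $p(F_{j-1})=|I_{j-1}|=\ell-\sum_{k<j}\ell_k$. Since each unprocessed component has at least $2$ vertices, the quantity $|I_{j-1}|-\ell_j=\sum_{k>j}\ell_k$ is either $0$ (when $j=r$) or at least $2$ (when $j<r$), and is never equal to $1$. This simultaneously gives $\ell_j\leq|I_{j-1}|$ and $|I_{j-1}|-\ell_j\neq 1$; similarly $p(F_{j-1})\geq \ell_j\geq 2$. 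Hence Lemma~\ref{lem:components} applies at each step, so the inductive construction succeeds.

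After the final step $F_r$ is connected (it contains $G$ as a spanning subgraph) and has $p(F_r)=0$ pendant biconnected components. Since the block tree of a connected graph with at least one bridge is a nontrivial tree and therefore has at least two leaves (i.e., at least two pendants), the absence of pendants forces $F_r$ to be bridgeless, hence $2$-edge connected, which yields the lemma. The main obstacle, as described, is purely the invariant/counting argument that keeps the hypothesis $p(F_{j-1})-\ell_j\neq 1$ of Lemma~\ref{lem:components} satisfied throughout; the minimum-size-$2$ condition on components of $H$ is exactly what is needed to rule out the forbidden value.
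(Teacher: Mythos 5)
Your proposal is correct and follows essentially the same route as the paper: the paper also decomposes $H$ into its connected components and applies Lemma~\ref{lem:components} to one component at a time (phrased as induction on the number of components rather than a forward iteration), using exactly the same counting observation that the residual number of pendants minus the size of the next component is either $0$ or at least $2$ because every component of $H$ has at least two vertices. Your explicit block-tree remark that zero pendants forces bridgelessness is a slightly more careful finish than the paper's, but the substance is identical.
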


\begin{proof}
The proof is by the induction on the number of components of $H$.
If $H$ is connected, them Lemma~\ref{lem:components} implies the claim. Assume that $H$ is disconnected and let $H'$ be a component of $H$. 

Because each component of $H$ has size at least 2 and $|V(H)|=\ell$,  $|V(H')|\geq 2$ and 
$\ell-|V(H)|\neq 1$. By Lemma~\ref{lem:components}, there is an injective mapping $\varphi'\colon V(H')\rightarrow\{v_1,\ldots,v_\ell\}$ such that for $G'=G\oplus_{\varphi'} H'$, the set of  pendant biconnected components is $\mathcal{P}'=\mathcal{P}\setminus\{P_i\mid v_i\in \varphi'(V(H'))\}$. Let $\ell'=|\mathcal{P}'|$.

Consider $H''=H-V(H')$. Clearly, $H''$ has less components than $H$. We also have that $|V(H'')|=\ell'$ and each component of $H''$ has size at least 2.
We apply the inductive hypothesis for $G'$ and $H''$. Hence there is a bijection
$\varphi''\colon V(H)\rightarrow\{v_1,\ldots,v_\ell\}\setminus\{v_i\mid v_i\in\varphi'(V(H'))\}$ such that $F=G'\oplus_{\varphi''} H''$ is $2$-connected.

For $x\in V(H)$, let  \[\varphi(x)=
\begin{cases}
\varphi'(x),&\mbox{if~} x\in V(H'),\\
\varphi''(x),& \mbox{if~} x\in V(H'').
\end{cases}
\]
Clearly, $\varphi$ maps $V(H)$ to $\{v_1,\ldots,v_\ell\}$ bijectively. 
Then $F=G'\oplus_{\varphi''} H''=(G\oplus_{\varphi'}H')\oplus_{\varphi''}H''=G\oplus_\varphi H$ and is 2-connected.
\end{proof}

Now we are ready to prove the main structural results for unweighted  \probSCAtwo.
First, we observe that the case when $G$ is 2-connected is trivial.

\begin{observation}\label{obs:two-conn}
Let $G$ and $H$ be graphs such that $|V(H)|\leq |V(G)|$ and $G$ is $2$-connected. Then for any injection  $\varphi\colon V(H)\rightarrow V(G)$, $F=G\oplus_\varphi H$ is $2$-connected. 
\end{observation}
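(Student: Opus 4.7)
The plan is to give a direct argument that the monotonicity of edge-connectivity under edge additions immediately yields the claim, and to emphasize that under the paper's conventions (stated in Section~\ref{sec:defs}: ``whenever we say that a graph $G$ is $k$-connected, we mean that $G$ is edge $k$-connected''), the statement is about edge $2$-connectivity of $F$. By definition of the superposition, $V(F)=V(G)$ and $E(F)=E(G)\cup\{\varphi(x)\varphi(y):xy\in E(H)\}\supseteq E(G)$, so $G$ is a spanning subgraph of $F$; the entire task is to transfer $2$-edge-connectivity from $G$ to its supergraph on the same vertex set.

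First I would record this spanning-subgraph observation and then argue via edge cuts. For any nonempty proper subset $A\subsetneq V(F)=V(G)$, let $\delta_F(A)$ and $\delta_G(A)$ denote the sets of edges with exactly one endpoint in $A$ in $F$ and $G$ respectively. Since $E(G)\subseteq E(F)$ we have $\delta_G(A)\subseteq\delta_F(A)$, and since $G$ is $2$-edge-connected we have $|\delta_G(A)|\geq 2$. Hence $|\delta_F(A)|\geq 2$ for every nonempty proper $A$, which is exactly the cut characterization of $2$-edge-connectivity of $F$.

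An alternative and equally short route is to invoke Observation~\ref{obs:cover-bridge} with $k=2$: since $G$ is $2$-edge-connected it is in particular $(k-1)$-edge-connected, and since $G$ has no edge separator of size $k-1=1$ (no bridges), the requirement ``every edge separator of $G$ of size $1$ is covered by an edge of $F\setminus G$'' holds vacuously. Observation~\ref{obs:cover-bridge} therefore yields that $F$ is $2$-edge-connected, independently of the choice of $\varphi$.

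No step is hard; the whole statement is essentially a sanity check recording that once $G$ is already $2$-edge-connected, the superposition cannot destroy this because superposition only adds edges (it never deletes edges of $G$), and edge-connectivity is monotone nondecreasing under edge additions on the same vertex set. I would write the observation's proof as a single sentence using either the cut argument or the appeal to Observation~\ref{obs:cover-bridge}.
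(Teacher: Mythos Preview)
Your proposal is correct. The paper itself gives no proof for this observation, introducing it with the line ``First, we observe that the case when $G$ is 2-connected is trivial''; your argument via monotonicity of edge-connectivity under edge additions (or the vacuous appeal to Observation~\ref{obs:cover-bridge}) is exactly the intended one-line justification.
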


From now we can assume that $G$ is connected but not 2-connected. In particular, $p(G)\geq 2$.
It is convenient to  consider separately the case when $H$ is a matching graph.

\begin{lemma}\label{lem:match}
Let $G$ be a connected graph and let $H$ be a matching graph with $2\leq p(G)\leq |V(H)|\leq |V(G)|$. Then there is an injection $\varphi\colon V(H)\rightarrow V(G)$ such that  $F=G\oplus_\varphi H$ is $2$-connected unless $G$ is a star $K_{1,n}$ where $n$ is odd.
\end{lemma}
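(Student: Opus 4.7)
The plan is to pick appropriate vertices of $G$ to receive images from $H$ and pair them via the edges of the matching $H$; write $\ell = p(G)$ and $k = |E(H)|$, so $\ell \leq 2k \leq |V(G)|$. I first pick one vertex $v_i$ from each pendant biconnected component $P_i$ so that $v_1, \dots, v_\ell$ are pairwise nonadjacent: inside a non-trivial pendant I avoid the vertex incident to the pendant's bridge; in a trivial pendant the unique vertex is forced. Two chosen vertices in distinct pendants can be adjacent only through a bridge, which by the above choice forces both pendants to be trivial and to share their unique bridge, hence $G = K_2 = K_{1,1}$; since $n=1$ is odd this is among the excluded stars, so henceforth the $v_i$ are pairwise nonadjacent.

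If $\ell$ is even, take any sub-matching $H_0 \subseteq H$ with $\ell/2$ edges; then $|V(H_0)| = \ell$ and every component of $H_0$ has two vertices, so Lemma~\ref{lem:comp-connect} produces a bijection $\varphi_0 \colon V(H_0) \to \{v_1, \dots, v_\ell\}$ such that $F_0 = G \oplus_{\varphi_0} H_0$ is $2$-connected; extending $\varphi_0$ arbitrarily to an injection $\varphi\colon V(H) \to V(G)$ using unused vertices of $G$ then yields, via Observation~\ref{obs:subgr}, that $F = G \oplus_\varphi H$ is $2$-connected.

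If $\ell \geq 3$ is odd, the parity of $|V(H)|$ forces $2k \geq \ell+1$, so $k \geq (\ell+1)/2$. I would exhibit an extra vertex $w \in V(G) \setminus \{v_1, \dots, v_\ell\}$ together with a partition of $\{v_1, \dots, v_\ell, w\}$ into $(\ell+1)/2$ pairs whose corresponding added edges cover every bridge of $G$; by Observation~\ref{obs:cover-bridge} this makes the associated $F_0$ $2$-connected, and extending $\varphi_0$ arbitrarily to $V(H)$ together with Observation~\ref{obs:subgr} finishes the argument. The hypothesis $G \neq K_{1,\ell}$ is used exactly here: either the bridge tree $T$ of $G$ has at least two internal vertices, or $T \cong K_{1,\ell}$ but the central biconnected component $B$ is non-trivial. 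In both situations one selects $w$ in a non-pendant biconnected component such that $w$ is nonadjacent to some $v_j$; pairing $v_j$ with $w$ covers the $j$-th pendant bridge (since the $G$-path from $v_j$ to $w$ traverses it), and the remaining $\ell-1$ pendants are paired into $(\ell-1)/2$ leaf-to-leaf pairs whose $T$-paths cover the other $\ell-1$ pendant bridges as well as any inner edges of $T$.

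The main obstacle is the odd-$\ell$ analysis: one must verify that a valid $w$ and a compatible pairing of the remaining leaves always exist in both non-star sub-cases, and that the excluded star $G = K_{1,n}$ with $n$ odd is precisely the obstruction -- in that case the only non-pendant vertex is the center, adjacent to every $v_i$, so every matching edge touching $w$ is redundant in $F$ and cannot cover the $\ell$-th pendant bridge, leaving at most $\ell - 1$ of the $\ell$ pendant bridges coverable.
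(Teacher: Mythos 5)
Your selection of the pairwise nonadjacent representatives $v_1,\dots,v_\ell$ and your even-$\ell$ case coincide with the paper's proof (sub-matching of $\ell/2$ edges, Lemma~\ref{lem:comp-connect}, then Observation~\ref{obs:subgr}), and that part is fine. The odd case is where you diverge, and it contains a genuine gap. You propose to pair $\{v_1,\dots,v_\ell,w\}$ into $(\ell+1)/2$ pairs and to conclude via Observation~\ref{obs:cover-bridge} that covering every bridge suffices; the criterion is correct, but the central claim --- that the remaining $\ell-1$ leaves can be paired so that the corresponding tree paths cover ``the other $\ell-1$ pendant bridges as well as any inner edges of $T$'' --- is exactly the nontrivial content of the lemma in this case, and you assert it rather than prove it. It is false for an arbitrary pairing: take a bridge tree with internal nodes $A,B$ joined by an edge, leaves $L_1,L_3$ at $A$ and $L_2,L_4$ at $B$; the pairing $\{L_1,L_3\},\{L_2,L_4\}$ covers all four pendant bridges but misses the inner bridge $AB$. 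So a specific pairing must be constructed and verified --- this is precisely what Lemmas~\ref{lem:components} and~\ref{lem:comp-connect} do (by always choosing two leaves whose connecting path separates two of the remaining leaves), and you invoke that machinery in the even case but abandon it in the odd case. You yourself flag this step as ``the main obstacle \dots one must verify,'' which is an admission that the proof is not complete. A secondary slip: your dichotomy ``$T$ has two internal vertices, or $T\cong K_{1,\ell}$ with non-trivial center'' is not exhaustive; it omits $T\cong K_{1,\ell}$ with a trivial center but some non-trivial pendant (e.g.\ a vertex $c$ adjacent to two leaves and, by a bridge, to a triangle), where $G$ is not a star yet neither of your subcases applies. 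The recipe ``pick $w$ in a non-pendant component nonadjacent to some $v_j$'' still works there, but the case analysis as written does not establish it.

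The paper sidesteps all of this by a reduction instead of a direct covering argument: it spends one matching edge to join $v_\ell$ to a vertex $u$ lying on a shortest $(v_1,v_2)$-path outside all pendants, which by Observation~\ref{obs:edge-add} absorbs $P_\ell$ into a non-pendant component and leaves a graph with exactly $\ell-1$ pendants; the even case (hence Lemma~\ref{lem:comp-connect}) then finishes. If you want to salvage your route, you must either construct an explicit covering pairing of the leaves of $T$ and prove it covers every inner edge (essentially re-proving the Eswaran--Tarjan tree-augmentation bound), or fall back on the paper's one-edge reduction to even parity.
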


\begin{proof}
If  $G$ is a star $K_{1,n}$ where $n$ is odd, then because $p(G)\leq |V(H)|\leq |V(G)|$, $|V(H)|=|V(G)|$. Then for every  injection $\varphi\colon V(H)\rightarrow V(G)$, there is an edge $xy\in E(H)$ such that $u=\varphi(x)$ is the central vertex of the star $G$ and $v=\varphi(y)$ is a leaf of $G$. We have that $uv$ is a bridge of $F=G\oplus_\varphi H$ and, therefore, $F$ is not $2$-connected.
Assume from now that $G$ is not a star with the odd number of leaves.

Let $\mathcal{P}=\{P_1,\ldots,P_\ell\}$ be the set of pendant biconnected components of $G$.  We select $v_i\in V(P_i)$ for $i\in\{1,\ldots,\ell\}$ in such a way that $v_1,\ldots,v_\ell$ are pairwise nonadjacent. Notice that it always can be done because $G$ is distinct from $K_2$ as this is the star $K_{1,1}$. 

Suppose that $\ell$ is even. Since $\ell\leq |V(H)|$, $H$ has an induced subgraph $H'$ that contains $\ell/2$ components with $\ell$ vertices. By Lemma~\ref{lem:comp-connect},    
there is a bijection $\varphi\colon V(H')\rightarrow\{v_1,\ldots,v_\ell\}$ such that $G\oplus_\varphi H'$ is $2$-connected. By Observation~\ref{obs:subgr}, $\varphi$ can be extended to $V(H)$ in such a way that $F=G\oplus_\varphi H$ is $2$-connected.

Assume now that $\ell$ is odd. Let $H'$ be a component of $H$ and denote by $x$ and $y$ its vertices. Consider a shortest $(v_1,v_2)$-path in $G$. Notice that this path contains a vertex $u$ that does not belong to the biconnected components $P_1$ and $P_2$. Moreover, this vertex does not belong to any pendant biconnected component of $G$. Define $\varphi'(x)=v_\ell$ and $\varphi'(y)=u$. Let $G'=G\oplus_{\varphi'}H'$. Observe that $G'$ has $\ell'=\ell-1$ pendant biconnected components $P_1',\ldots,P_{\ell'}'$. Since $\ell'$ is even, we can use the already proved claim and obtain that there is an injection $\varphi''\colon V(H'')\rightarrow V(G')$ such that  $F=G'\oplus_\varphi H''$ is $2$-connected for $H''=H-\{x,y\}$. Let \[\varphi(x)=
\begin{cases}
\varphi'(x),&\mbox{if~} x\in V(H'),\\
\varphi''(x),& \mbox{if~} x\in V(H'');
\end{cases}
\]
for $x\in V(H)$. We have that $F=G'\oplus_{\varphi''} H''=(G\oplus_{\varphi'}H')\oplus_{\varphi''}H''=G\oplus_\varphi H$ and is 2-connected.
\end{proof}

\begin{lemma}\label{lem:main}
Let $G$  and $H$ be graphs with $2\leq p(G)\leq |V(H)|\leq |V(G)|$ such that $G$ is connected,  $H$ has no isolated vertex and has a component with at least 3 vertices. Then there is an injection $\varphi\colon V(H)\rightarrow V(G)$ such that  $F=G\oplus_\varphi H$ is $2$-connected.
\end{lemma}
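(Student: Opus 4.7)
The plan is to follow the strategy outlined in the excerpt just before Lemma~\ref{lem:components}. Set $\ell=p(G)$. I would proceed in four steps: (1) select pairwise nonadjacent representatives $v_1,\ldots,v_\ell$, one per pendant biconnected component of $G$; (2) exhibit an induced subgraph $H'$ of $H$ on exactly $\ell$ vertices in which every connected component has at least two vertices; (3) apply Lemma~\ref{lem:comp-connect} to $G$, $H'$ and $v_1,\ldots,v_\ell$ to obtain a bijection $\varphi'\colon V(H')\to\{v_1,\ldots,v_\ell\}$ such that $G\oplus_{\varphi'}H'$ is $2$-connected; and (4) extend $\varphi'$ arbitrarily to an injection $\varphi\colon V(H)\to V(G)$, using that $|V(H)|\le|V(G)|$, and invoke Observation~\ref{obs:subgr} to conclude that $F=G\oplus_\varphi H$ is $2$-connected.

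For Step~1 one first observes that $G\ne K_2$, since otherwise $|V(G)|=2$ would contradict the hypothesis that $H$ has a component on at least $3$ vertices. In each non-trivial pendant $P_i$ (so $|V(P_i)|\ge 3$), pick any vertex that is not the endpoint of the unique bridge of $G$ incident to $V(P_i)$; such a vertex has no neighbour outside $V(P_i)$. In each trivial pendant take its only vertex. The only way two such $v_i$'s could be adjacent is if both lie in trivial pendants joined by a single bridge, which would force $G=K_2$, already excluded.

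The key technical step is Step~2. Let $C_1,\ldots,C_r$ be the components of $H$ with $n_i=|V(C_i)|\ge 2$ (since $H$ has no isolated vertex), ordered so that $n_1\ge 3$ (such $C_1$ exists by hypothesis). For any connected graph $C$ on $n$ vertices and any $m$ with $2\le m\le n$, a BFS order on a spanning tree of $C$ yields $m$ vertices inducing a connected, hence non-isolating, subgraph. So it suffices to realise $\ell=\sum_i m_i$ with each $m_i\in\{0\}\cup\{2,3,\ldots,n_i\}$. Let $k$ be the unique index with $\sum_{i<k}n_i<\ell\le\sum_{i\le k}n_i$ and set $q=\ell-\sum_{i<k}n_i\in\{1,\ldots,n_k\}$. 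If $q\ge 2$, I take $m_i=n_i$ for $i<k$, $m_k=q$, and $m_i=0$ otherwise. If $q=1$, then $\ell\ge 2$ forces $k\ge 2$; I would instead take $m_1=n_1-1\in\{2,\ldots,n_1\}$ (here the hypothesis $n_1\ge 3$ is essential), $m_i=n_i$ for $1<i<k$, $m_k=2$, and the rest $0$, for total $\sum_{i<k}n_i+1=\ell$.

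The main obstacle is exactly this $q=1$ subcase, which is where the hypothesis that some component of $H$ has at least three vertices becomes essential: without it, $H$ would be a matching graph and odd values of $\ell$ would be unreachable, precisely the obstruction that forced Lemma~\ref{lem:match} to be treated separately by a different trick. Everything else is routine bookkeeping, Step~3 being delivered by Lemma~\ref{lem:comp-connect} and Step~4 by Observation~\ref{obs:subgr}.
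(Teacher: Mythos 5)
Your proof is correct and follows essentially the same route as the paper's: pick pairwise nonadjacent representatives of the pendant biconnected components, carve out an $\ell$-vertex induced subgraph $H'$ of $H$ all of whose components have at least two vertices (using the component with at least $3$ vertices to absorb the off-by-one case $q=1$), apply Lemma~\ref{lem:comp-connect}, and extend via Observation~\ref{obs:subgr}. Your bookkeeping for constructing $H'$ is a somewhat cleaner packaging of the paper's case analysis on $|V(H_{s-1})|$, but the underlying idea is identical.
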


\begin{proof}
Let $\mathcal{P}=\{P_1,\ldots,P_\ell\}$ be the set of pendant biconnected components of $G$.  
Observe that $G$ is distinct from $K_2$. Otherwise, we have that $H=K_2$, i.e., this is a matching graph contradicting the condition that  $H$ has a component with at least 3 vertices.
We select $v_i\in V(P_i)$ for $i\in\{1,\ldots,\ell\}$ in such a way that $v_1,\ldots,v_\ell$ are pairwise nonadjacent. Notice that it always can be done because $G$ is distinct from $K_2$. 

Let $H_1,\ldots,H_r$ be the components of $H$ and assume that $|V(H_1)|\leq\ldots\leq |V(H_r)|$. Since $\ell\leq |V(H)|$, there is minimum $s\in\{1,\ldots,r\}$ such that $p=\sum_{i=1}^s|V(H_i)|\geq \ell$. Denote by $q=\sum_{i=1}^{s-1}|V(H_i)|$.
We construct the induced subgraph $H'$ of $H$ as follows.
If  $p=\ell$, then $H'$ is the subgraph of $H$ composed by the components $H_1,\ldots,H_s$. Suppose that $p>\ell$. If $\ell-q\geq 2$, then we find a connected induced subgraph $H_s'$ of $H_s$ and define $H'$ as the subgraph of $H$ with the components $H_1,\ldots,H_{s-1},H_s'$. Let $\ell-q=1$. Notice that this implies that $s\geq 2$. We consider two cases depending on $|V(H_{s-1})|$.

Suppose that  $|V(H_{s-1})|=2$. Recall that $H$ has a component $H_i$ with at least 3 vertices and $i\geq i-1$ by the ordering of the components. We find a connected induced subgraph $H_i'$ of $H_i$ with 3 vertices. Then we define $H'$ as the subgraph of $H$ with the components $H_1,\ldots,H_{s-1},H_i'$.   
 
Assume now that $t=|V(H_{s-1})|\geq 3$. We find a connected induced subgraph $H_{s-1}'$ of $H_s$ with $t-1$ vertices and a connected iduced subgraph $H_s'$ of $H_s$ with 2 vertices. Then $H'$ is the subgraph of $H$ with the components $H_1,\ldots,H_{s-2},H_{s-1}',H_s'$. 

In all the cases, $H'$ has exactly $\ell$ vertices and each component of $H'$ has at least 2 vertices. By Lemma~\ref{lem:comp-connect},  there is a bijection $\varphi\colon V(H)\rightarrow\{v_1,\ldots,v_\ell\}$ such that $F=G\oplus_\varphi H'$ is $2$-connected. By Observation~\ref{obs:subgr}, $\varphi$ can be extended to $V(H)$ in such a way that $F=G\oplus_\varphi H$ is $2$-connected.
\end{proof}

Recall that the bridges and biconnected components of a graph $G$ can be found in linear time by the algorithm of Tarjan~\cite{Tarjan74}. Combining this fact with Observations
~\ref{obs:pend-cover} and \ref{obs:two-conn} and Lemmas~\ref{lem:match} and  \ref{lem:main} we obtain the following theorem.

\begin{theorem}\label{thm:SCA-two}
Let $G$ and $H$ be graphs such that $G$ is connected, $H$ has no isolated vertices and $|V(H)|\leq|V(G)|$. 
Then there is an injective mapping $\varphi\colon V(H)\rightarrow V(G)$ such that $F=G\oplus_\varphi H$ is 2-connected if and only if  one of the following holds:
\begin{itemize}
\item[(i)] $G$ is $2$-connected,
\item[(ii)] $G$ is not $2$-connected and $p(G)\leq |V(H)|$,
\end{itemize}
unless $G$ is a star  $K_{1,n}$ where $n$ is odd and $H$ is a matching graph. 
\end{theorem}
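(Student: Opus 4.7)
The plan is to derive the theorem as a synthesis of the preceding lemmas and observations, splitting on whether $G$ is 2-connected and, if not, on the structure of $H$. Since $H$ has no isolated vertices by assumption, either $H$ is a matching graph (every component has exactly two vertices) or $H$ has at least one component of size at least $3$. These two cases correspond exactly to the regimes covered by Lemma~\ref{lem:match} and Lemma~\ref{lem:main}, so the proof will essentially reduce to verifying the hypotheses of the appropriate lemma.

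For the necessity direction, suppose an injection $\varphi$ exists with $F=G\oplus_\varphi H$ being 2-connected. If $G$ is 2-connected, condition (i) holds and we are done. Otherwise $G$ is connected but has at least one bridge, so $p(G)\geq 2$, and Observation~\ref{obs:pend-cover} gives $p(G)\leq |V(H)|$, yielding (ii). The exceptional case must also be ruled out: if $G=K_{1,n}$ with $n$ odd and $H$ is a matching graph, then the condition $p(G)\leq |V(H)|\leq |V(G)|$ forces $|V(H)|=|V(G)|=n+1$, and the first paragraph of the proof of Lemma~\ref{lem:match} already shows that any injection leaves some bridge of $F$ uncovered, a contradiction.

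For the sufficiency direction, assume (i) or (ii) holds and we are not in the exceptional case. If $G$ is 2-connected (case (i)), Observation~\ref{obs:two-conn} guarantees that any injective $\varphi\colon V(H)\to V(G)$ yields a 2-connected superposition, so we are done. Otherwise $G$ is connected with $p(G)\geq 2$ and $p(G)\leq |V(H)|\leq |V(G)|$. If $H$ is a matching graph, then since by assumption $(G,H)$ is not the excluded pair, Lemma~\ref{lem:match} produces the required injection. If $H$ has a component of size at least $3$, then Lemma~\ref{lem:main} applies directly. Together these cases exhaust all possibilities since $H$ has no isolated vertex.

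The main thing to be careful about is the boundary between the two regimes for $H$ and the exact form of the exceptional case. In particular, one must verify that the exceptional pair is correctly identified with the single obstruction coming from Lemma~\ref{lem:match}; the case analysis inside Lemma~\ref{lem:main} already silently assumes that not every component of $H$ has size $2$, and indeed that lemma never needs an exception because having a component of size $\geq 3$ supplies enough flexibility to absorb the parity issue that creates the star/odd-matching obstruction. Once this is observed, the assembly into Theorem~\ref{thm:SCA-two} is routine, and the running-time claim implicit in the discussion (linear-time recognition of bridges and biconnected components via Tarjan's algorithm~\cite{Tarjan74}) yields the constructive polynomial-time algorithm.
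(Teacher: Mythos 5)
Your proposal is correct and follows essentially the same route as the paper, which itself obtains Theorem~\ref{thm:SCA-two} by exactly this combination of Observations~\ref{obs:pend-cover} and~\ref{obs:two-conn} with Lemmas~\ref{lem:match} and~\ref{lem:main}, splitting on whether $G$ is $2$-connected and whether $H$ is a matching graph. Your handling of the exceptional pair (forcing $|V(H)|=|V(G)|=n+1$ via Observation~\ref{obs:pend-cover} and parity, then invoking the first paragraph of the proof of Lemma~\ref{lem:match}) matches the paper's intent.
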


Theorem~\ref{thm:SCA-two} immediately implies the next corollary.

\begin{corollary}\label{cor:SCA-two}
Unweighted \probSCAtwo is solvable in time $\Oh(|V(G)|+|E(G)|+|E(H)|)$.
\end{corollary}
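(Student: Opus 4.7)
The plan is to directly translate the structural characterization of Theorem~\ref{thm:SCA-two} into a sequence of linear-time tests. Given an instance $(G,H)$ of unweighted \probStrucAugmtwo, the algorithm first performs the trivial sanity checks: reading the input it verifies that $|V(H)|\leq |V(G)|$ (rejecting otherwise, as noted in the preliminaries) and discards any isolated vertices of $H$, which by the remark preceding Section~\ref{sec:weighted} does not affect connectivity. Both steps clearly run in time $\Oh(|V(G)|+|E(G)|+|V(H)|+|E(H)|)$.

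Next, the algorithm uses Tarjan's DFS-based procedure~\cite{Tarjan74} on $G$ to compute, in time $\Oh(|V(G)|+|E(G)|)$, the set of bridges, the biconnected components, and in particular the number $p(G)$ of pendant biconnected components. If $G$ turns out to be $2$-connected (equivalently, has no bridges), output \yes by case (i) of Theorem~\ref{thm:SCA-two}. Otherwise, compare $p(G)$ with $|V(H)|$: if $p(G)>|V(H)|$, output \no (this is forced by Observation~\ref{obs:pend-cover}); if $p(G)\leq |V(H)|$, case (ii) of Theorem~\ref{thm:SCA-two} applies and the answer is \yes, unless the exceptional configuration occurs.

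The exceptional case, in which the answer flips from \yes\ to \no, is easy to detect in linear time. Checking whether $G$ is a star $K_{1,n}$ requires only scanning the degree sequence: $G\cong K_{1,n}$ iff $|E(G)|=|V(G)|-1$ and some vertex has degree $|V(G)|-1$, in which case $n=|V(G)|-1$ and we test its parity. Checking whether $H$ is a matching graph amounts to verifying that every vertex of $H$ has degree exactly $1$, which can be done during the single initial scan of $H$. If both tests succeed and $n$ is odd, output \no; otherwise the verdict delivered by cases (i)/(ii) stands.

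Correctness is immediate from Theorem~\ref{thm:SCA-two}, and the total running time is dominated by Tarjan's algorithm on $G$ together with the single pass over $H$, giving the claimed bound $\Oh(|V(G)|+|E(G)|+|E(H)|)$. The only conceptual obstacle is ensuring that the exceptional case is recognized without extra work; but since both ``$G$ is a star'' and ``$H$ is a matching graph'' are degree-based conditions, they piggy-back on the information already produced by the initial linear scans, so no additional asymptotic cost is incurred.
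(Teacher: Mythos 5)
Your proposal is correct and matches the paper's (much terser) argument: the paper simply notes that bridges and biconnected components can be computed in linear time via Tarjan's algorithm and that Theorem~\ref{thm:SCA-two} then immediately implies the corollary. Your write-up is a fleshed-out version of exactly this reasoning, including the linear-time degree-based detection of the exceptional star/matching case.
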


\section{Conclusion}\label{sec:conclusion}
We initiated the investigation of the structured connectivity augmentation  problems where the aim is to increase the edge connectivity of the input graphs by adding edges when the added edges compose a given graph. In particular, we proved that \probStrucAugm and \probStrucAugmtwo are solvable in polynomial time when $H $ is from a graph class $\mathcal{C}$ with 
bounded vertex-cover number. It is natural to ask   about increasing connectivity of a $(k-1)$-connected graph to a $k$-connected graph for every positive integer $k$. For the ``traditional'' edge connectivity augmentation problem (see~\cite{Frank11,Nagamochi08}), the augmentation algorithms are based on the classic work of Dinits, Karzanov, and Lomonosov~\cite{DinicKL76} about the structure of minimum edge separators. However, for the structural augmentation, the structure of the graph $H$ is an obstacle for implementing this approach directly.  Due to this, we could not push further our approach to establish the complexity of  \probWSCA for $k>2$ when $H$ is of bounded vertex cover.  This remains a natural open question.   Recall  that our hardness results showing that it is \classNP-hard to increase  the connectivity of a $(k-1)$-connected graph to a $k$-connected graph when $H$ belongs to a class with unbounded vertex cover number are proved for every $k$. 

As the first step, it could be interesting to consider the variant of the problem for multigraphs. In this case, we allow parallel edges and assume that for a mapping $\phi\colon V(H)\rightarrow V(G)$, the multiplicity of $\phi(x)\phi(y)$ in $G\oplus_\phi H$ is the sum of the multiplicities of $\phi(x)\phi(y)$ in $G$ and $xy$ in $H$. Notice that all our algorithmic and hardness results can be restated for this variant of the problem. Actually, some of the proofs for this variant of the problem become even simpler.  
 
The question of  obtaining a $k$-connected graph for $k\geq 3$ is also open for the unweighted problem. Here we ask whether it is possible to derive   structural necessary and sufficient conditions for a $(k-1)$-connected graph $G$ and a graph $H$ such that there exists  an injective  mapping $\phi\colon V(H)\rightarrow V(G)$ such that $G\oplus_\phi H$ is $k$-connected.  

Another direction of the research is to consider vertex connectivity. As it is indicated by the existing results about vertex connectivity augmentation (see, e.g., \cite{JacksonJ05a,Jordan95}), this variant of the problem could be more complicated.


\end{document}